\definecolor{darkmidnightblue}{rgb}{0.0, 0.2, 0.4}
\definecolor{persianplum}{rgb}{0.44, 0.11, 0.11}
\newtheorem{thm}{Theorem}[section]
\newtheorem{definition}[thm]{Definition}
\newtheorem{fact}[thm]{Fact}
\newtheorem{lemma}[thm]{Lemma}
\newtheorem{corollary}[thm]{Corollary}
\newtheorem{exercise}[thm]{Exercise}
\newtheorem{claim}[thm]{Claim}
\colorlet{jaune}{yellow!80!green}
\colorlet{vert}{green!45!black}
\colorlet{bleu}{blue!70!black}
\colorlet{rouge}{red!80!black}
\tikzstyle{minirond}=[draw,circle,minimum height=2mm,inner sep=0pt]
\tikzstyle{ptrond}=[draw,circle,minimum height=2mm]
\tikzstyle{medrond}=[draw,circle,minimum height=5mm]
\tikzstyle{rond}=[draw,circle,minimum height=7mm]
\tikzstyle{carre}=[draw,minimum width=6mm,minimum height=6mm]
\tikzstyle{blanc}=[draw=black!80!white,fill=white]
\tikzstyle{rouge}=[draw=red,fill=red!20!white]
\tikzstyle{vert}=[draw=green!80!black,fill=green!80!black!20!white]
\tikzstyle{jaune}=[draw=yellow!60!red,fill=yellow!60!red!30!white]
\tikzstyle{bleu}=[draw=blue,fill=blue!40!white]
\tikzstyle{gris}=[draw=black!80!white,fill=black!15!white]
\newcommand{\orcid}[1]{\href{https://orcid.org/#1}{\includegraphics[width=9pt]{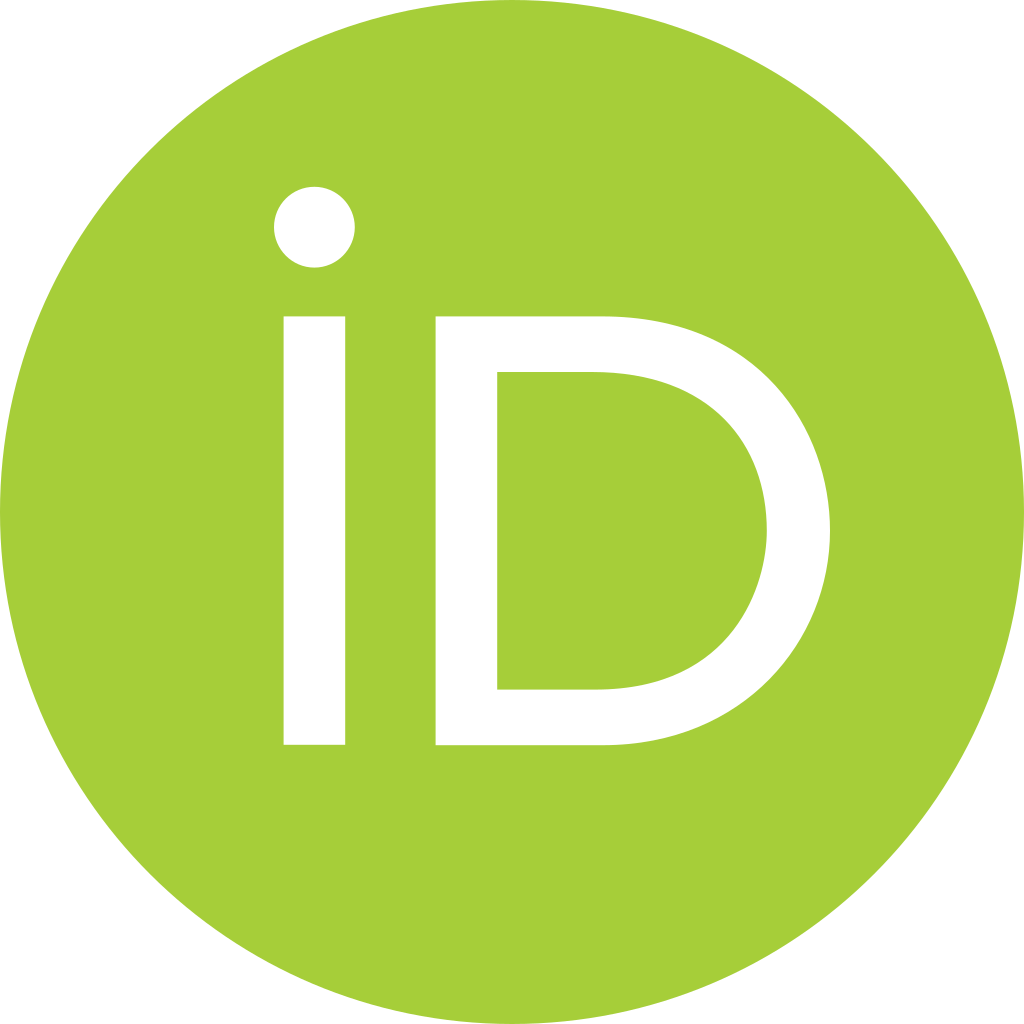}}}
\newcommand{\newnotion}[1]{\AP\intro{#1}}
\newcommand{\ModalitySpacer}{\,}
\newcommand*{\Fname}{\mathbf{F}}
\newcommand*{\Xname}{\mathbf{X}}
\newcommand*{\Halfname}{\mathbf{Half}}
\newcommand*{\Gname}{\mathbf{G}}
\newcommand*{\MajPname}{\mathbf{PM}}
\newcommand*{\MostFreqname}{\mathbf{MFL}}
\newcommand*{\F}[1]{\Fname\ModalitySpacer{}{#1}}
\renewcommand*{\P}[1]{\mathbf{P}\ModalitySpacer{}{#1}}
\newcommand*{\X}[1]{\mathbf{X}\ModalitySpacer{}{#1}}
\newcommand*{\U}[1]{\mathbf{U}\ModalitySpacer{}{#1}}
\newcommand*{\G}[1]{\Gname\ModalitySpacer{}{#1}}
\newcommand*{\MajP}[1]{{\MajPname}\ModalitySpacer{}{#1}}
\newcommand{\Half}[1]{\Halfname\ModalitySpacer{}{#1}}
\newcommand*{\MostFreq}[1][]{\ensuremath{{\MostFreqname}_{#1}}\ModalitySpacer{}}
\newcommand{\FO}{\mathrm{FO}}
\newcommand{\FOt}{\FO^2}
\newcommand{\FOtMaj}{\FOt_\Maj}
\newcommand{\LTL}{\mathrm{LTL}}
\newcommand{\LTLp}[1]{\mathrm{LTL}_{#1{}}}
\newcommand{\LTLF}{\LTLp{\Fname{}}}
\newcommand{\LTLFp}[1]{\LTLp{\Fname{}, #1{}}}
\newcommand{\LTLFMajP}{\LTLFp{\MajPname{}}}
\newcommand{\LTLFMostFreq}{\LTLFp{\MostFreqname}}
\newcommand{\LTLFHalf}{\LTLFp{\Halfname{}}}
\newcommand{\LTLprocent}{\LTL^{\%}}
\newcommand{\NExpTime}{\textsc{NExpTime}}
\newcommand{\NP}{\textsc{NP}}
\newcommand{\PSpace}{\textsc{PSpace}}
\newcommand{\deff}{\; := \;}
\newcommand{\deffnoindent}{:=}
\newcommand{\N}{{\mathbb{N}}}
\newcommand{\setof}[1]{\{#1\}}
\newcommand{\APset}{\mathsf{AP}}
\newcommand{\str}[1]{{\mathfrak{#1}}}
\newcommand{\aword}{\str{w}}
\newcommand{\proj}[2]{\kl[proj]{\mathsf{proj}_{#1}{(#2)}}}
\newcommand{\mysharp}[3]{\kl[countingterm]{\#^{#1}_{#2}(#3)}}
\newcommand{\wht}{\mathit{wht}}
\newcommand{\shdw}{\mathit{shdw}}
\newcommand{\sigmashdw}{\tilde{\sigma}}
\newcommand{\alphashdw}{\tilde{\alpha}}
\newcommand{\betashdw}{\tilde{\beta}}
\newcommand{\phiinit}{\varphi_{\textit{init}}^{\textit{ex}\ref{ex:1}}}
\newcommand{\phiodd}{\kl[phiodd]{\varphi_{\textit{odd}}^{\textit{ex}\ref{ex:1}}}}
\newcommand{\phioddMFL}{\varphi_{\textit{odd}}^{\textit{MFL}}}
\newcommand{\psishadowy}{\kl[psishadowy]{\psi_{\textit{shadowy}}}}
\newcommand{\psishadowyMFL}{\kl[psishadowy]{\psi^{\textit{MFL}}_{\textit{shadowy}}}}
\newcommand{\philst}{\varphi_{\textit{last}}^{\textit{ex}\ref{ex:2}}}
\newcommand{\phistl}{\varphi_{\textit{stl}}^{\textit{ex}\ref{ex:2}}}
\newcommand{\phitransfer}[2]{\kl[phitransfer]{\varphi_{#1 \leadsto #2}^{\textit{trans}}}}
\newcommand{\psitrulySigmashadowy}{\kl[psitrulySigmashadowy]{\psi_{\textit{shadowy}}^{\textit{truly}-\Sigma}}}
\newcommand{\phiisequal}[2]{\kl[phiisequal]{\psi_{\# #1 {=} \# #2}}}
\newcommand{\minsky}{\kl[minsky]{\mathcal{A}}}
\newcommand{\mword}{\str{w}^{\minsky}}
\newcommand{\fancyq}{\mathsf{q}}
\newcommand{\psiminsky}{\psi_{\minsky}}
\newcommand{\psiminskyq}{\psiminsky^{\fancyq}}
\newcommand{\op}{\mathit{op}}
\newcommand{\fromq}[1]{\textit{from}_{#1}}
\newcommand{\toq}[1]{\textit{to}_{#1}}
\newcommand{\cfst}[1]{\mathit{fstVal}_{#1}}
\newcommand{\csnd}[1]{\mathit{sndVal}_{#1}}
\newcommand{\ifst}[1]{{\mathit{fstOP}}_{#1}}
\newcommand{\isnd}[1]{{\mathit{sndOP}}_{#1}}
\newcommand{\counterf}{\bar{\mathit{f}}}
\newcommand{\counters}{\bar{\mathit{s}}}
\newcommand{\run}{{\mathit{run}}}
\newcommand{\SigmaMinsky}{\Sigma_{\minsky}}
\newcommand{\psitrulySigmaMinsky}{\kl[psitrulySigmaMinsky]{\psi_{\textit{shadowy}}^{\textit{truly}-\SigmaMinsky}}}
\newcommand{\dehalf}[1]{\kl[dehalfication]{\mathsf{dehalf}(#1)}}
\newcommand{\kripkep}[1]{\mathcal{K}_{#1}}
\let\succ\undefined
\newcommand{\succ}{\mathit{succ}}
\newcommand{\Maj}{\mathsf{M}}
\newcommand{\HalfQ}{\underline{\mathsf{Half}}}
\newcommand{\first}{\underline{\mathit{first}}}
\newcommand{\second}{\underline{\mathit{second}}}
\newcommand{\sectolast}{\underline{\mathit{sectolast}}}
\newcommand{\last}{\underline{\mathit{last}}}
\newcommand{\distr}{\underline{\mathit{udistr}}}
\newcommand{\phiforbidwht}{\varphi^{\textit{forbid}}_{\wht \cdot \wht}}
\newcommand{\phiforbidshdw}{\varphi^{\textit{forbid}}_{\shdw \cdot \shdw}}
\newcommand{\phibase}{\varphi_{\textit{base}}^{\textit{lem}\ref{lemma:fo2-shadowy}}}
\newcommand{\psishadowyfo}{\psi_{\textit{shadowy}}^{\textit{FO}}}
\newcommand{\fottr}[2]{\mathfrak{tr}_{#1}(#2)}
\phitransfer {\sigma }{\sigmashdw }$]{notion, color=black}
\phiisequal{\alpha}{\beta}$]{notion, color=black}
\proj{\Sigma}{\aword}$ | $\proj {\{\wht ,\shdw \}}{\aword }$]{notion, color=black}
\title{``Most of'' leads to undecidability:\\ Failure of adding frequencies to LTL}
\author{Bartosz Bednarczyk$^{1,2}$ \orcid{0000-0002-8267-7554}{} \; and \; Jakub Michaliszyn$^{2}$ \orcid{0000-0002-5053-0347}}
\date{$^1$ Computational Logic Group, Technische Universit{\"a}t Dresden, Germany\\
$^2$ Institute of Computer Science, University of Wroc{\l}aw, Poland}
\begin{document}
\maketitle


\begin{abstract}
Linear Temporal Logic (LTL) interpreted on finite traces is a robust specification framework popular in formal verification. 
However, despite the high interest in the logic in recent years, the topic of their quantitative extensions is not yet fully explored. 
The main goal of this work is to study the effect of adding weak forms of percentage constraints (\eg that \emph{most of} the positions in the past satisfy a given condition, or that $\sigma$ is the \emph{most-frequent} letter occurring in the past) to fragments of LTL. 
Such extensions could potentially be used for the verification of influence networks or statistical reasoning.
Unfortunately, as we prove in the paper, it turns out that percentage extensions of even tiny fragments of LTL have undecidable satisfiability and model-checking problems. 
Our undecidability proofs not only sharpen most of the undecidability results on logics with arithmetics interpreted on words known from the literature, but also are fairly simple.

We also show that the undecidability can be avoided by restricting the allowed usage of the negation, and briefly discuss how the undecidability results transfer to first-order logic on words.
\end{abstract}


\section{Introduction}\label{sec:intro}
Linear Temporal Logic~\cite{Pnueli77} (LTL) interpreted on finite traces is a robust logical framework used in formal verification~\cite{BaierM06,GiacomoV13,GiacomoV15}. 
However, LTL is not perfect: it can express whether some event happens or not, but it cannot provide any insight on how frequently such 
an event occurs or for how long such an event took place. 
In many practical applications, such \emph{quantitative} information is important: think of optimising a server based on how frequently it receives 
messages or optimising energy consumption knowing for how long a system is usually used in rush hours. 
Nevertheless, there is a solution: one can achieve such goals by adding quantitative features to LTL.

It is known that adding quantitative operators to $\LTL$ often leads to undecidability. 
The proofs, however, typically involve operators such as ``next'' or ``until'', and are often quite complicated (see the discussion on the related work below). 
In this work, we study the logic $\LTLF$, a fragment of $\LTL$ where the only allowed temporal operator is ``sometimes in the future'' $\F{}$. 
We extend its language with two types of operators, sharing a similar ``percentage'' flavour: with the \emph{Past-Majority} $\MajP{\varphi}$ operator
(stating that most of the past positions satisfy a formula $\varphi$), and with the \emph{Most-Frequent-Letter} $\MostFreq{\sigma}$ predicates (meaning that the letter $\sigma$ is among the most frequent letters appearing in the past).
These operators can be used to express a number of interesting properties, such as \emph{if a process failed to enter the critical section, then the other process was in the critical section the majority of time}.
Of course, for practical applications, we could also consider richer languages, such as parametrised versions of these operators, \eg stating that \emph{at least a fraction $p$ of positions in the past satisfies a formula}. 
However, we show, as our main result, that even these very simple percentage operators raise undecidability when combined with $\F{}$.

To make the undecidability proof for both operators similar, we define an intermediate operator, $\Half{}$, which is satisfied when exactly half of the past positions satisfy a given formula.
The $\Half{}$ operator can be expressed easily with $\MajP{}$, but not with $\MostFreq{}$ --- we show, however, that we can simulate it to an extent enough to show the undecidability.
Our proof method relies on enforcing a model to be in the language $(\{\wht\}\{\shdw\})^+$, for some letters $\wht$ and $\shdw$, which a priori seems to be impossible without the ``next'' operator. 
Then, thanks to the specific shape of the models, we show that one can ``transfer'' the truth of certain formulae from positions into their successors, hence the ``next'' operator can be partially expressed.
With a combination of these two ideas, we show that it is possible to write equicardinality statements in the logic. 
Finally, we perform a reduction from the reachability problem of \kl{Two-counter Machines}~\cite{Minsky}. 
In the reduction, the equicardinality statements will be responsible for handling zero-tests. 
The idea of transferring predicates from each position into its successor will be used for switching the machine into its next configuration. 

The presented undecidability proof of $\LTL$ with percentage operators can be adjusted to extensions of fragments of first-order logic on finite words. 
We show that $\FOtMaj[<]$, \ie{} the two-variable fragment of first-order logic admitting the majority quantifier $\Maj{}$ and linear order predicate~$<$ has an undecidable satisfiability problem.
Here the meaning of a formula~$\Maj{x} . \varphi(x, y)$ is that at least a half of possible interpretations of $x$ satisfies $\varphi(x,y)$. 
Our result sharpens an existing undecidability proof for (full) $\FO$ with Majority from~\cite{Lange04} (since in our case the number of variables is limited) but also $\FOt[<, \succ]$ with arithmetics from~\cite{LodayaS17} (since our counting mechanism is weaker and the successor relation $\succ$ is disallowed).

On the positive side, we show that the undecidability heavily depends on the presence of the negation in front of the percentage operators. 
To do so, we introduce a logic, extending the full $\LTL$, in which the usage of percentage operators is possible, but suitably restricted. 
For this logic, we show that the satisfiability problem is decidable.

All the above-mentioned results can be easily extended to the model checking problem, where the question is whether a given Kripke structure satisfies a given formula.

\subsection{Related work}\label{subsec:related-work}
The first paper studying the addition of quantitative features to logic was~\cite{KlaedtkeR03}, where the authors proved undecidability of 
Weak MSO with Cardinalities. 
They also developed a model of so-called Parikh Automaton, a finite automaton imposing a semi-linear constraint on the set of its final configurations. 
Such an automaton was successfully used to decide logics with counting as well as logics on data words~\cite{Niewerth16,FigueiraL15}.
Its expressiveness was studied in~\cite{CadilhacFM12}. 

Another idea in the realm of quantitative features is availability languages~\cite{HoenickeMO10}, which extend regular expressions by numerical occurrence constraints on the letters. 
However, their high expressivity leads to undecidable emptiness problems.
Weak forms of arithmetics have also attracted interest from researchers working on temporal logics. 
Several extensions of LTL were studied, including extensions with counting~\cite{LaroussinieMP10}, periodicity constraints~\cite{Demri06}, accumulative values~\cite{BokerCHK14}, discounting~\cite{AlmagorBK14}, averaging~\cite{BouyerMM14} and frequency constraints~\cite{BolligDL12}.
A lot of work was done to understand LTL with timed constraints, \eg{} a metric LTL was considered in~\cite{OuaknineW07}. 
However, its complexity is high and its extensions are undecidable~\cite{AlurH93}. 

Arithmetical constraints can also be added to the First-Order logic~$(\FO)$ on words via so-called counting quantifiers. 
It is known that weak MSO on words is decidable with threshold counting and modulo-counting (thanks to the famous B{\"u}chi theorem~\cite{Buchi60}), while even $\FO$ on words with percentage quantifiers becomes undecidable~\cite{Lange04}. 
Extensions of fragments of $\FO$ on words are often decidable, \eg the two-variable fragment $\FOt$ with counting~\cite{Charatonik016a} or $\FOt$ with modulo-counting~\cite{LodayaS17}.  
The investigation of decidable extensions of $\FOt$ is limited by the undecidability of $\FOt$ on words with Presburger constraints~\cite{LodayaS17}. 

Among the above-mentioned logics, the formalisms of this paper are most similar to Frequency LTL~\cite{BolligDL12}. 
The satisfiability problem for Frequency LTL was claimed to be undecidable, but the undecidability proof as presented in~\cite{BolligDL12} is bugged (see~\cite[Sec. 8]{BouyerMM14} for discussion).
It was mentioned in~\cite{BouyerMM14} that the undecidability proof from~\cite{BolligDL12} can be patched, but no correction was published so far. 
Our paper not only provides a valid proof but also sharpens the result, as we use a way less expressive language (\eg{} we are allowed to use neither the  ``until'' operator nor the  ``next'' operator). 
We also believe that our proof is simpler.
The second-closest formalism to ours is average-LTL~\cite{BouyerMM14}. 
The main difference is that the averages of average-LTL are computed based on the future, while in our paper, the averages are based on the past. 
The second difference, as in the previous case, is that their undecidability proof uses more expressive operators, such as the  ``until'' operator.


\section{Preliminaries} \label{sec:preliminaries}
We recall classical definitions concerning logics on words and temporal logics (\cf~\cite{Demri2016}).

\subsection{Words and logics}
Let $\APset$ be a countably-infinite set of \emph{atomic propositions}, called here also \emph{letters}. 
A finite \emph{word} $\aword \in (2^{\APset})^{*}$ is a non-empty finite sequence of \emph{positions} labelled with sets of letters from $\APset$. 
A~set of words is called a \emph{language}. 
Given a word $\aword$, we denote its $i$-th position with $\aword_i$ (where the first position is $\aword_0$) and its prefix up to the $i$-th position with $\aword_{\leq i}$.
We employ the letters $i,j,p,q$ to denote positions.
With $|\aword|$ we denote the length of $\aword$.

The syntax of $\LTLF$, a fragment of $\LTL{}$ with only the \emph{finally} operator $\F{}$, is defined as usual with the~grammar:
\[ \varphi, \varphi' ::= a \; (\text{with} \; a \in \APset) \; \mid \; \neg \varphi \; \mid \; \varphi \wedge \varphi' \; \mid \; \F{\varphi}. \]

The satisfaction relation $\models$ is defined for words as follows:
\begin{center}
\begin{tabular}[t]{lll}
$\aword,i \models a$ & \text{if} & $a \in \aword_i$ \\
$\aword,i \models \neg \varphi$ & \text{if} & \text{not} \; $\aword,i \models \varphi$ \\
$\aword,i \models \varphi_1 \wedge \varphi_2$ & \text{if} & $\aword,i \models \varphi_1$ \; \text{and} \; $\aword,i \models \varphi_2$  \\
$\aword,i \models \F{\varphi}$ & \text{if} & $\exists{j} \; \text{such that} \; \; |\aword| > j \geq i \; \text{and} \; \aword,j \models \varphi$.
\end{tabular}
\end{center}

We write $\aword \models \varphi$ if $\aword, 0 \models \varphi$. 
The usual Boolean connectives: $\top, \bot, \vee, \rightarrow, \leftrightarrow$ can be defined, hence we will use them as abbreviations. 
Additionally, we use the \emph{globally} operator $\G{\varphi} := \neg \F{\neg \varphi}$ to speak about events happening globally in the future.

\subsection{Percentage extension}
In our investigation, \emph{percentage operators} $\MajPname{}$, $\MostFreqname{}$ and $\Halfname$ are added to $\LTLF$. 

The operator $\MajP{} \varphi$ (read as: \emph{majority in the past}) is satisfied if at least half of the positions in the past satisfy $\varphi$:
\begin{center}
\begin{tabular}[t]{lll}
$\aword,i \models \MajP{\varphi}$ & \text{if} & $|\setof{j < i \colon {\aword, j} \models {\varphi}}| \geq \frac{i}{2}$
\end{tabular}
\end{center}

For example, the formula $\G{(r \leftrightarrow \neg g)} \land \G{\MajP{r}} \land \G{\F{\left(g \land \MajP{g} \right)}}$ is true over words where each \emph{request} $r$ is eventually fulfilled by a \emph{grant} $g$, and where each grant corresponds to at least one request. 
This can be also seen as the language of balanced parentheses, showing that with the operator $\MajPname$ one can define properties that are not regular.

The operator $\MostFreq{} \sigma$ (read as: \emph{most-frequent letter in the past}), for $\sigma \in \APset$, is satisfied if~$\sigma$ is among the letters with the highest number of appearances in the past, \ie{}
\begin{center}
\begin{tabular}[t]{lll}
$\aword,i \models \MostFreq{}{\sigma}$ & \text{if} & $\forall \tau \in \APset.$
$|\setof{j < i \colon {\aword, j} \models {\sigma}}| \geq |\setof{j < i \colon {\aword, j} \models {\tau}}|$
\end{tabular}
\end{center}

For example, the formula $\G{\neg (r \land g)} \land \G{~\MostFreq{r}} \land \G{\F{\left(g \land \MostFreq{g}\right)}}$ again defines words where each request is eventually fulfilled, but this time the formula allows for states where nothing happens (\ie when both $r$ and $g$ are false).

The last operator, $\Halfname$ is used to simplify the forthcoming undecidability proofs. 
This operator can be satisfied only at even positions, and its intended meaning is \emph{exactly half of the past positions satisfy a given~formula}. 
\begin{center}
\begin{tabular}[t]{lll}
$\aword,i \models \Half{\varphi}$ & \text{if} & $|\setof{j < i \colon {\aword, j} \models {\varphi}}| = \frac{i}{2}$
\end{tabular}
\end{center}
It is not difficult to see that the operator $\Half{\varphi}$ can be defined in terms of the past-majority operator as~$\MajP(\varphi) \wedge \MajP(\neg \varphi)$ and that $\Half{\varphi}$ can be satisfied only at even positions.

In the next sections, we distinguish different logics by enumerating the allowed operators in the subscripts, \eg $\LTLFMajP$ or $\LTLFMostFreq$.

\subsection{Computational problems}
\emph{Kripke structures} are commonly used in verification to formalise abstract models. 
A Kripke structure is composed of a finite set $S$ of \emph{states}, a set of \emph{initial} states $I \subseteq S$, a total \emph{transition} relation $R \subseteq S \times S$, and a finite \emph{labelling function} $\ell : S \rightarrow 2^{\APset}$. 
A \emph{trace} of a Kripke structure is a finite word $\ell(s_0), \ell(s_1), \ldots, \ell(s_k)$  for any $s_0, s_1, \dots, s_k$ satisfying $s_0 \in I$ and $(s_{i},s_{i+1}) \in R$ for all $i < k$. 

The \emph{model-checking problem} amounts to checking whether \emph{some} trace of a given Kripke structure satisfies a given formula $\varphi$.
In the \emph{satisfiability problem}, or simply in \emph{SAT}, we check whether an input formula $\varphi$ has a \emph{model}, \ie a finite word $\aword$ witnessing $\aword \models \varphi$.


\section{Playing with Half Operator} \label{subsec:playingwithmaj}
Before we jump into the encoding of Minsky machines, we present some exercises to help the reader understand the expressive power of the logic $\LTLFHalf$. 
The tools established in the exercises play a vital role in the undecidability proofs provided in the following section.

We start from the definition of \kl{shadowy} words.

\begin{definition} \label{def:shadowy}
Let $\wht$ and $\shdw$ be fixed distinct atomic propositions from $\APset$. 
A word~$\aword$ is \newnotion{shadowy} if its length is even, all even positions of $\aword$  are labelled with $\wht$, all odd positions of $\aword$ are labelled with $\shdw$, and no position is labelled with both letters.
\end{definition}
\begin{center}
\scalebox{1}{
    \begin{tikzpicture}[minimum size=11mm]
      \begin{scope}[ptrond]
      \draw (0,0) node[blanc] (A1) {$\wht$};
      \draw (2,0) node[gris] (A2) {$\shdw$};
      \draw (4,0) node[blanc] (A3) {$\wht$};
      \draw (6,0) node[gris] (A4) {$\shdw$};
      \draw (8,0) node[blanc] (A5) {$\wht$};
      \draw (10,0) node[gris] (A6) {$\shdw$};
      \end{scope}
      \foreach \i/\j in {1/2, 3/4, 4/5, 5/6}
        {\draw[-latex'] (A\i) -- (A\j);}
      \foreach \i/\j in {2/3, 5/6}
        {\draw[dashed] (A\i) -- (A\j);}
    \end{tikzpicture}
}
\end{center}
We will call the positions satisfying~$\wht$ simply \emph{white} and their successors satisfying~$\shdw$ simply their \emph{shadows}.

The following exercise is simple in $\LTL{}$, but becomes much more challenging without the~$\X{}$ operator.

\begin{exercise} \label{ex:1}
There is an $\LTLFHalf$ formula $\newnotion{\psishadowy}$ defining \kl{shadowy} words.
\end{exercise}
\begin{proof}[Solution]
We start with the ``base'' formula~$\phiinit \deffnoindent \wht \land \G (\wht \leftrightarrow \neg\shdw) \wedge \G(\wht \rightarrow \F{\shdw})$, which states that the position $0$ is labelled with $\wht$, each position is labelled with exactly one letter among $\wht, \shdw$ and that every white eventually sees a shadow in the future. 
What remains to be done is to ensure that only odd positions are shadows and that only even positions are white.

In order to do that, we employ the formula \newnotion{$\phiodd$} $\deffnoindent \G((\Half{\wht}) \leftrightarrow \wht)$. 
Since $\Halfname{}$ is never satisfied at odd positions, the formula $\phiodd$ stipulates that odd positions are labelled with $\shdw$. 
An inductive argument shows that all the even positions are labelled with $\wht$: for the position $0$, it follows from $\phiinit$. 
For an even position $p>0$, assuming (inductively) that all even positions are labelled with $\wht$, the formula $\phiodd$ ensures that $p$ is labelled with $\wht$.

Putting it all together, the formula $\psishadowy \deffnoindent \phiinit \land \phiodd$ is as required.
\end{proof}

In the next exercise, we show that it is possible to transfer the presence of certain letters from white positions into their shadows. 
It justifies the usage of ``shadows'' in the paper.

We introduce the so-called \newnotion{counting terms}.
For a formula $\varphi$, word $\str{w}$ and a position~$p$, by~$\mysharp{<}{\varphi}{\aword, p}$ we denote the total number of positions among $0, \ldots, p{-}1$  satisfying~$\varphi$, \ie the size of $\{ p' < p \mid \aword, p' \models \varphi \}$.
We omit $\aword$ in counting terms if it is known from the context. 

\begin{exercise} \label{ex:2}
Let $\sigma$ and $\sigmashdw$ be distinct letters from $\APset \setminus \{ \wht, \shdw \}$. 
There is an $\LTLFHalf$ formula~\newnotion{$\phitransfer{\sigma}{\sigmashdw}$}, such that $\aword \models \phitransfer{\sigma}{\sigmashdw}$ iff: 
\begin{enumerate}
    \item\label{item:1:ex:2} $\aword$ is \kl{shadowy},
    \item\label{item:2:ex:2} only white (resp., shadow) positions of~$\aword$ can be labelled $\sigma$ (resp., $\sigmashdw$) and
    \item\label{item:3:ex:2} for any even position $p$ we have: $\aword,p \models \sigma \Leftrightarrow \aword,p{+}1 \models \sigmashdw$.
\end{enumerate}
\end{exercise}
\begin{center}
    \begin{tikzpicture}[minimum size=11mm]
      \begin{scope}[ptrond]
      \draw (0,0) node[blanc] (A1) {$\wht$};
      \draw (2,0) node[gris] (A2){$\shdw$};
      \draw (4,0) node[blanc] (A3) {$\wht$};
      \draw (6,0) node[gris] (A4) {$\shdw$};
      \draw (8,0) node[blanc] (A5) {$\wht$};
      \draw (10,0) node[gris] (A6){$\shdw$};
      \draw (0,-0.33) node[] (B1) {} node {$\sigma$};
      \draw (2,-0.315) node[] (B2) {} node {$\sigmashdw$};
      \draw (4,-0.33) node[] (B3) {} node {$\neg\sigma$};
      \draw (6,-0.315) node[] (B4) {} node {$\neg\sigmashdw$};
      \draw (8,-0.33) node[] (B3) {} node {$\neg\sigma$};
      \draw (10,-0.315) node[] (B4) {} node {$\neg\sigmashdw$};

      \end{scope}
      \foreach \i/\j in {1/2, 3/4, 4/5, 5/6}
        {\draw[-latex'] (A\i) -- (A\j);}
      \foreach \i/\j in {2/3, 5/6}
        {\draw[dashed] (A\i) -- (A\j);}
    \end{tikzpicture}
\end{center}
\begin{proof}[Solution]
Note that the first two conditions can be expressed with the conjunction of $\psishadowy$, $\G{(\sigma \rightarrow \wht)}$ and $\G{(\sigmashdw \rightarrow \shdw)}$.
The last condition is more involving. 
Assuming that the words under consideration satisfy conditions~\ref{item:1:ex:2}--\ref{item:2:ex:2}, it is easy to see that the third condition is equivalent to expressing that all white positions $p$ satisfy the equation~$(\heartsuit)$:
\[ (\heartsuit): \; \; \mysharp{<}{\wht \wedge \sigma}{\aword, p} = \mysharp{<}{\shdw \wedge \sigmashdw}{\aword, p} \]
supplemented with the condition $(\diamondsuit)$, ensuring that the last white position satisfies the condition~\ref{item:3:ex:2}, \ie{}
\[ (\diamondsuit): \; \; \text{for the last white position} \; p \; \text{we have:} \; \aword,p \models \sigma \Leftrightarrow \aword,p{+}1 \models \sigmashdw. \]

For a curious reader we present the proof of this claim below.
\begin{claim}
Let $\aword$ be a word satisfying the conditions~\ref{item:1:ex:2}--\ref{item:2:ex:2}.
Then $\aword$ satisfies the condition~\ref{item:3:ex:2} iff $\aword$ satisfies $(\diamondsuit)$ and for all white positions $p$ the equation $(\heartsuit)$ holds.  
\end{claim}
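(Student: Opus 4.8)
The plan is to prove both directions by establishing a clean bridge between the "local" condition~\ref{item:3:ex:2} and the "global" counting condition~$(\heartsuit)$. The key observation is that, under conditions~\ref{item:1:ex:2}--\ref{item:2:ex:2}, the word looks like $\wht \shdw \wht \shdw \cdots \wht \shdw$, so the white positions are exactly $0, 2, 4, \ldots$ and each white position $p = 2k$ is immediately followed by its shadow $p{+}1 = 2k{+}1$. For such a word, and for any white position $p = 2k$, the prefix $\aword_{\le p - 1}$ consists of exactly $k$ white positions ($0, 2, \ldots, 2k{-}2$) and $k$ shadow positions ($1, 3, \ldots, 2k{-}1$), paired up as $(2j, 2j{+}1)$ for $j = 0, \ldots, k{-}1$. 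Hence $\mysharp{<}{\wht \wedge \sigma}{\aword, p}$ counts the $j < k$ with $\aword, 2j \models \sigma$, and $\mysharp{<}{\shdw \wedge \sigmashdw}{\aword, p}$ counts the $j < k$ with $\aword, 2j{+}1 \models \sigmashdw$.

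First I would prove the forward direction: assume condition~\ref{item:3:ex:2} holds. Then $(\diamondsuit)$ is an immediate special case (take $p$ to be the last white position). For $(\heartsuit)$ at a white position $p = 2k$: since condition~\ref{item:3:ex:2} gives $\aword, 2j \models \sigma \Leftrightarrow \aword, 2j{+}1 \models \sigmashdw$ for every $j < k$, the two sets counted above are in bijection, so the two counting terms are equal.

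For the converse, assume $(\diamondsuit)$ and that $(\heartsuit)$ holds at every white position; I would show condition~\ref{item:3:ex:2} by induction on the white positions. The induction naturally proceeds by comparing $(\heartsuit)$ at consecutive white positions $p = 2k$ and $p{+}2 = 2k{+}2$: subtracting the two instances of $(\heartsuit)$ yields
\[ [\aword, 2k \models \sigma] = [\aword, 2k{+}1 \models \sigmashdw], \]
where $[\cdot]$ denotes the indicator ($0$ or $1$), which is exactly condition~\ref{item:3:ex:2} at $p = 2k$. This handles every white position except possibly the last one, which is covered by hypothesis $(\diamondsuit)$ directly; alternatively, if $2k$ is the last white position then $(\heartsuit)$ holds at $2k$ but there is no "$2k{+}2$" to subtract, so one genuinely needs $(\diamondsuit)$ here. (One should note the degenerate bookkeeping: $(\heartsuit)$ at $p = 0$ is the trivial $0 = 0$, and the base of the induction is vacuous.)

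I do not expect any real obstacle here — the statement is essentially a telescoping/bijection argument once the shape of the word is unpacked. The one point requiring mild care is making explicit why condition~\ref{item:3:ex:2} as stated (quantifying over \emph{all} even positions $p$, including those $p$ with $p{+}1 < |\aword|$) reduces exactly to "$(\heartsuit)$ at all white positions plus $(\diamondsuit)$": every white position $p$ has a shadow successor $p{+}1 < |\aword|$ since $\aword$ is \kl{shadowy}, so the range of $p$ in condition~\ref{item:3:ex:2} is precisely the set of white positions, and the last white position is the one not reachable by the "subtract consecutive $(\heartsuit)$'s" step, hence the separate clause $(\diamondsuit)$.
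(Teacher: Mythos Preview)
Your proposal is correct and takes essentially the same approach as the paper: both directions hinge on the telescoping identity that $(\heartsuit)$ at $2k{+}2$ minus $(\heartsuit)$ at $2k$ equals the indicator equality $[\aword,2k\models\sigma]=[\aword,2k{+}1\models\sigmashdw]$, with $(\diamondsuit)$ supplying the last white position. Your packaging is slightly more streamlined than the paper's (which uses induction for the forward direction and a minimal-counterexample contradiction with a first/last/middle case split for the converse), and note that what you call ``induction'' in your converse is really just a direct per-position computation, not an inductive argument.
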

\begin{proof}
Assume that a word $\aword$ satisfies the conditions~\ref{item:1:ex:2}--\ref{item:3:ex:2}.
Then the condition $(\diamondsuit)$ follows immediately from the condition~\ref{item:3:ex:2}.
To see that for all white $p$ the equation $(\heartsuit)$ holds, we employ induction over white positions in $\aword$.
In the base case we have $p=0$ (due to the shadowness). 
Since there are no positions before $p$, we conclude that both the LHS and the RHS of $(\heartsuit)$ are equal to $0$, thus $(\heartsuit)$ holds.
Now, take any white position $p$ and assume that for all white $p' < p$ satisfy~$(\heartsuit)$.
Note that due to shadowness of $\aword$ and the fact that $p$ is white, the LHS of~$(\heartsuit)$ is equal~to:
\[
    \mysharp{<}{\wht \wedge \sigma}{\aword, p} = \mysharp{<}{\wht \wedge \sigma}{\aword, p{-}2} + \textit{is-labelled-with}^{\sigma}_{p{-}2},
\]
where $\textit{is-labelled-with}^{\sigma}_{p{-}2}$ is equal to $1$ if $\aword, p{-}2 \models \sigma$ and $0$ otherwise.
Analogously, the RHS of $(\heartsuit)$ is equal~to:
\[
    \mysharp{<}{\shdw \wedge \sigmashdw}{\aword, p} = \mysharp{<}{\shdw \wedge \sigmashdw}{\aword, p{-}2} + \textit{is-labelled-with}^{\sigmashdw}_{p-1}.
\]
From the inductive assumption, we infer $\mysharp{<}{\wht \wedge \sigma}{\aword, p{-}2} = \mysharp{<}{\shdw \wedge \sigmashdw}{\aword, p{-}2} $.
Moreover, by applying the condition~\ref{item:3:ex:2} to the position $p{-}2$, we get $\textit{is-labelled-with}^{\sigma}_{p{-}2} = \textit{is-labelled-with}^{\sigmashdw}_{p-1}$.
Hence, $(\heartsuit)$ holds for all whites.

For the opposite direction, assume that a word $\aword$ satisfies the conditions~\ref{item:1:ex:2}--\ref{item:2:ex:2} as well as $(\heartsuit)$ and $(\diamondsuit)$.
Ad absurdum, assume that $\aword$ does not satisfy the condition~\ref{item:3:ex:2} and let $p$ be the smallest white position violating the condition~\ref{item:3:ex:2}.
If $|\aword| = 2$ or $p$ is the last position of $\aword$, then we have contradiction with $(\diamondsuit)$. 
Thus, $|\aword| > 2$ and~$p$ is not the last white position in $\aword$.
Moreover, $p\neq0$.
Indeed, if $p=0$ then we have contradiction with $(\heartsuit)$ applied to $p{=}2$ since one side of $(\heartsuit)$ is equal to $1$, while the other is equal to $0$. 
Hence, $p$ is neither the first white position nor the last one.
From $(\heartsuit)$ applied to $p{+}2$ we get the equality~$\mysharp{<}{\wht \wedge \sigma}{\aword, p{+}2} = \mysharp{<}{\shdw \wedge \sigmashdw}{\aword, p{+}2}$.
Additionally, from $(\heartsuit)$ applied to $p$, we obtain $\mysharp{<}{\wht \wedge \sigma}{\aword, p} = \mysharp{<}{\shdw \wedge \sigmashdw}{\aword, p}$.
Reasoning similarly to the first part of the proof, we know that:
\[
    \mysharp{<}{\wht \wedge \sigma}{\aword, p{+}2} = \mysharp{<}{\wht \wedge \sigma}{\aword, p} + \textit{is-labelled-with}^{\sigma}_{p},
\]
and
\[
    \mysharp{<}{\shdw \wedge \sigmashdw}{\aword, p{+}2} = \mysharp{<}{\shdw \wedge \sigmashdw}{\aword, p} + \textit{is-labelled-with}^{\sigmashdw}_{p+1}
\]
hold, which clearly implies the equality $\textit{is-labelled-with}^{\sigma}_{p} = \textit{is-labelled-with}^{\sigmashdw}_{p{+}1}$.
But such equality does not hold due to the fact that $p$ violates condition~\ref{item:3:ex:2}. 
A contradiction.
Thus $\aword$ satisfies condition~\ref{item:3:ex:2}.
\end{proof}

Going back to Exercise~\ref{ex:2}, we show how to define $(\heartsuit)$ and $(\diamondsuit)$ in $\LTLFHalf$, taking advantage of \kl{shadowness} of the intended models.
Take an arbitrary white position $p$ of $\aword$. The equation $(\heartsuit)$ for $p$ is clearly equivalent to:
\[
(\heartsuit'): \; \; \mysharp{<}{\wht \wedge \sigma}{\aword, p} +
\left(\frac{p}{2} - \mysharp{<}{\shdw \wedge \sigmashdw}{\aword, p} \right) = \frac{p}{2}
\]
Since $p$ is even, we infer that $\frac{p}{2} \in \N$. 
From the \kl{shadowness} of $\aword$, we know that there are exactly $\frac{p}{2}$ shadows in the past of $p$. 
Moreover, each shadow satisfies either $\sigmashdw$ or $\neg\sigmashdw$.  
Hence, the expression $\frac{p}{2} - \mysharp{<}{\shdw \wedge \sigmashdw}{\aword, p}$ from $(\heartsuit')$, can be replaced with $\mysharp{<}{\shdw \wedge \neg\sigmashdw}{\aword, p}$. 
Finally, since $\wht$ and $\shdw$ label disjoint positions, the property that every white position $p$ satisfies $(\heartsuit)$ can be written as an $\LTLFHalf$ formula 
  $
  \varphi_{(\heartsuit)} \deffnoindent 
  \G{\left( 
  \wht \rightarrow \Half{( [\wht \land \sigma] 
  \vee [\shdw \land \neg \sigmashdw])}
  \right)}
  $.
Its correctness follows from the correctness of each arithmetic transformation and the semantics of $\LTLFHalf{}$.  

For the property $(\diamondsuit)$, we first need to define formulae detecting the last and the second to last positions of the model. 
Detecting the last position is easy: since the last position of $\aword$ is shadow, it is sufficient to express that it sees only shadows in its future, \ie $\philst \deffnoindent \G(\shdw)$.
Similarly, a position is second to last if it is white and it sees only white or last positions in the future, which results in a formula $\phistl \deffnoindent \wht \land \G(\wht \vee \philst)$.
Note that the correctness of $ \philst$ and $\phistl$ follows immediately from shadowness.
Hence, we can define the formula~$\varphi_{(\diamondsuit)}$  as~$\F(\phistl \land \sigma) \leftrightarrow \F(\philst \land \sigmashdw)$.
The conjunction of $\varphi_{(\heartsuit)}$ and $\varphi_{(\diamondsuit)}$ formulae gives us to~$\phitransfer{\sigma}{\sigmashdw}$.
\end{proof}

We consider a generalisation of \kl{shadowy} models, where each shadow mimics all letters from a finite set $\Sigma \subseteq \APset$ rather than just a single letter $\sigma$. 
Such a generalisation is described below. 
In what follows, we always assume that for each $\sigma \in \Sigma$ there is a unique $\sigmashdw$, which is different from $\sigma$, and $\sigmashdw \not\in \Sigma$. 
Moreover, we always assume that $\sigma_1 \neq \sigma_2$ implies $\sigmashdw_1 \neq \sigmashdw_2$.
\begin{definition}
Let $\Sigma \subseteq \APset \setminus \{ \wht, \shdw \}$ be a finite set.
A \kl{shadowy} word $\aword$ is called \newnotion{truly~$\Sigma$-shadowy}, if for every letter $\sigma \in \Sigma$ only the white (resp. shadow) positions of $\aword$ can be labelled with $\sigma$ (resp. $\sigmashdw$) and every white position $p$ of $\aword$ satisfies $\aword,p \models \sigma \Leftrightarrow \aword,p{+}1 \models \sigmashdw$.
\end{definition}
\begin{center}
\scalebox{1}{
    \begin{tikzpicture}[minimum size=11mm]
      \begin{scope}[ptrond]
      \draw (0,0) node[blanc] (A1){$\wht$};
      \draw (2,0) node[gris] (A2){$\shdw$};
      \draw (4,0) node[blanc] (A3) {$\wht$};
      \draw (6,0) node[gris] (A4) {$\shdw$};
      \draw (8,0) node[blanc] (A5) {$\wht$};
      \draw (10,0) node[gris] (A6) {$\shdw$};

      \draw (0,-0.33) node[] (B1) {} node {\scriptsize $\alpha, \beta$};
      \draw (2,-0.315) node[] (B2) {} node {\scriptsize$\alphashdw, \betashdw$};
      \draw (4,-0.33) node[] (B3) {} node {\scriptsize$\neg\alpha, \beta$};
      \draw (6,-0.315) node[] (B4) {} node {\scriptsize$\neg\alphashdw, \betashdw$};
      \draw (8,-0.33) node[] (B3) {} node {\scriptsize$\alpha, {\neg}\beta$};
      \draw (10,-0.315) node[] (B4) {} node {\scriptsize$\alphashdw, {\neg}\betashdw$};

      \end{scope}
      \foreach \i/\j in {1/2, 3/4, 4/5, 5/6}
        {\draw[-latex'] (A\i) -- (A\j);}

      \foreach \i/\j in {2/3}
        {\draw[dashed] (A\i) -- (A\j);} 
    \end{tikzpicture}
}  
\end{center}
Knowing the solution for the previous exercise, it is easy to come up with a formula \newnotion{$\psitrulySigmashadowy$} defining \kl{truly~$\Sigma$-shadowy} models: just take the conjunction of $\psishadowy$ and $\phitransfer{\sigma}{\sigmashdw}$ over all letters $\sigma \in \Sigma$.
The correctness follows immediately from from Exercise~\ref{ex:2}.
\begin{corollary}\label{col:trulyshadowy}
The formula $\psitrulySigmashadowy$ defines the language of \kl{truly~$\Sigma$-shadowy} words.
\end{corollary}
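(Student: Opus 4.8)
The plan is to note that $\psitrulySigmashadowy$ is, by its very construction, the conjunction $\psishadowy \wedge \bigwedge_{\sigma \in \Sigma} \phitransfer{\sigma}{\sigmashdw}$, and then to check the two language inclusions, treating Exercise~\ref{ex:1} and Exercise~\ref{ex:2} as black boxes.

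First I would dispatch the inclusion ``every truly~$\Sigma$-shadowy word is a model''. Let $\aword$ be truly~$\Sigma$-shadowy. Then $\aword$ is shadowy, so $\aword \models \psishadowy$ by Exercise~\ref{ex:1}. Fixing any $\sigma \in \Sigma$, the three conditions of Exercise~\ref{ex:2} hold for $\aword$ with this $\sigma$ and its associated $\sigmashdw$: condition~\ref{item:1:ex:2} since $\aword$ is shadowy, and conditions~\ref{item:2:ex:2}--\ref{item:3:ex:2} because these are precisely what ``truly~$\Sigma$-shadowy'' demands for each letter of $\Sigma$. Hence $\aword \models \phitransfer{\sigma}{\sigmashdw}$ for all $\sigma \in \Sigma$, and therefore $\aword \models \psitrulySigmashadowy$.

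For the converse, I would suppose $\aword \models \psitrulySigmashadowy$. From $\aword \models \psishadowy$ and Exercise~\ref{ex:1} we get that $\aword$ is shadowy, and for each $\sigma \in \Sigma$ the ``iff'' of Exercise~\ref{ex:2} applied to $\aword \models \phitransfer{\sigma}{\sigmashdw}$ yields that only white positions carry $\sigma$, only shadows carry $\sigmashdw$, and every even position $p$ satisfies $\aword,p \models \sigma \Leftrightarrow \aword,p{+}1 \models \sigmashdw$. Collecting these statements over all $\sigma \in \Sigma$ is exactly the definition of truly~$\Sigma$-shadowy, which finishes the proof.

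I do not expect a genuine obstacle here: the corollary is bookkeeping on top of the two solved exercises. The one point worth a moment's care is that the finitely many conjuncts $\phitransfer{\sigma}{\sigmashdw}$ do not inadvertently constrain each other's auxiliary letters --- this is guaranteed by the standing assumptions that $\sigma \mapsto \sigmashdw$ is injective, that $\sigmashdw \notin \Sigma$, and that distinct $\sigma$'s give distinct $\sigmashdw$'s --- and that the finiteness of $\Sigma$ keeps $\psitrulySigmashadowy$ a bona fide $\LTLFHalf$ formula.
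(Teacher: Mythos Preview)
Your proposal is correct and matches the paper's approach exactly: the paper defines $\psitrulySigmashadowy$ as the conjunction of $\psishadowy$ with all the $\phitransfer{\sigma}{\sigmashdw}$ and states that correctness ``follows immediately from Exercise~\ref{ex:2}'', without spelling out the two inclusions. Your write-up simply unpacks this one-line justification, and the extra remark about injectivity of $\sigma \mapsto \sigmashdw$ and finiteness of $\Sigma$ is a nice sanity check the paper leaves implicit.
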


The next exercise shows how to compare cardinalities in $\LTLFHalf$ over 
\kl{truly~$\Sigma$-shadowy} models. We are not going to introduce any novel 
techniques here, but the exercise is of great importance: it is used in the next
section to encode zero tests of Minsky machines.

\begin{exercise} \label{ex:3}
Let $\Sigma$ be a finite subset of $\APset \setminus \{ \wht, \shdw \}$  and let $\alpha {\neq} \beta \in \Sigma$. 
There exists an~$\LTLFHalf$ formula \newnotion{$\phiisequal{\alpha}{\beta}$} such that for any \kl{truly~$\Sigma$-shadowy} word $\aword$ and any of its white positions $p$: the equivalence $\aword, p \models \phiisequal{\alpha}{\beta} \Leftrightarrow \mysharp{<}{\wht \wedge \alpha}{\aword, p} = \mysharp{<}{\wht \wedge \beta}{\aword, p}$ holds.
\end{exercise}
\begin{center}
\scalebox{1}{
    \begin{tikzpicture}[minimum size=11mm]
      \begin{scope}[ptrond]
      \draw (0,0) node[blanc] (A1) {$\wht$};
      \draw (2,0) node[gris] (A2) {$\shdw$};
      \draw (4,0) node[blanc] (A3) {$\wht$};
      \draw (6,0) node[gris] (A4){$\shdw$};
      \draw (8,0.0) node[blanc] (A5) {} node {};
      \draw (10,0) node[] (A6) {};

      \draw (8, 1.1) node[] (B1) {$\phiisequal{\alpha}{\beta}$};

      \draw (0,-0.33) node[] (B1) {} node {\scriptsize$\alpha, \neg\beta$};
      \draw (2,-0.315) node[] (B2) {} node {\scriptsize$\alphashdw, \neg\betashdw$};
      \draw (4,-0.33) node[] (B3) {} node {\scriptsize$\neg\alpha, \beta$};
      \draw (6,-0.315) node[] (B4) {} node {\scriptsize$\neg\alphashdw, \betashdw$};

      \draw (3,1.36) node[] (C) {$\# \alpha = \# \beta $};

      \end{scope}
      \foreach \i/\j in {1/2, 3/4, 4/5}
        {\draw[-latex'] (A\i) -- (A\j);}

      \foreach \i/\j in {2/3, 5/6}
        {\draw[dashed] (A\i) -- (A\j);} 

    \draw[decoration={calligraphic brace,amplitude=10pt}, 
    decorate, line width=0.3pt] (-0.4,0.81) node {} -- (6.4, 0.81);

    \end{tikzpicture}
}  
\end{center}
%
\begin{proof}
We proceed similarly to Exercise~\ref{ex:2}, but actually the forthcoming proof is easier.
Let us fix a white position $p$ from $\aword$. 
We would like to express that $\mysharp{<}{\wht \wedge \alpha}{\aword, p} = \mysharp{<}{\wht \wedge \beta}{\aword, p}$ holds, which is equivalent to 
expressing $\mysharp{<}{\wht \wedge \alpha}{\aword, p} - \mysharp{<}{\wht \wedge \beta}{\aword, p}= 0$.
Since $p$ is white, then $\frac{p}{2} \in \N$, so we can add $\frac{p}{2}$ to both sides.
Moreover, $\frac{p}{2}$ is equal to the total number of shadows in the past of $p$, hence our initial equation is equivalent to:
\[\mysharp{<}{\wht \wedge \alpha}{\aword, p} - \mysharp{<}{\wht \wedge \beta}{\aword, p} = 0\]
\[\mysharp{<}{\wht \wedge \alpha}{\aword, p} - \mysharp{<}{\wht \wedge \beta}{\aword, p} + \frac{p}{2} = \frac{p}{2}\]
\[\mysharp{<}{\wht \wedge \alpha}{\aword, p} - \mysharp{<}{\wht \wedge \beta}{\aword, p} + \mysharp{<}{\shdw}{\aword, p} = \frac{p}{2}\]
Since $\aword$ satisfies $\phitransfer{\sigma}{\sigmashdw}$, we know that the equality $\mysharp{<}{\wht \wedge \beta}{\aword, p} = \mysharp{<}{\shdw \wedge \betashdw}{\aword, p}$ holds.
Moreover, the value of $ \mysharp{<}{\shdw}{\aword, p}$ is equal to the  sum of $ \mysharp{<}{\shdw \wedge \betashdw}{\aword, p}$ and $ \mysharp{<}{\shdw \wedge \neg \betashdw}{\aword, p}$.
Hence, the above equations can be transformed~into:
\[\mysharp{<}{\wht \wedge \alpha}{\aword, p} - \mysharp{<}{\wht \wedge \beta}{\aword, p} + \mysharp{<}{\shdw}{\aword, p} = \frac{p}{2}\]
\[\mysharp{<}{\wht \wedge \alpha}{\aword, p} - \mysharp{<}{\wht \wedge \beta}{\aword, p} + \mysharp{<}{\shdw \wedge \betashdw}{\aword, p} + \mysharp{<}{\shdw \wedge \neg\betashdw}{\aword, p} = \frac{p}{2}\]
\[\mysharp{<}{\wht \wedge \alpha}{\aword, p} + \mysharp{<}{\shdw \wedge \neg\betashdw}{\aword, p} = \frac{p}{2},\]
which can be rewritten into an $\LTLFMajP{}$ formula $\phiisequal{\alpha}{\beta} \deffnoindent \Half ( [\wht \land \alpha] \vee [\shdw \land \neg \betashdw] )$, due to the disjointness of shadows and whites.
The correctness of the presented formula follows immediately from the correctness of each arithmetical transformation and the semantics of $\LTLFMajP{}$.
\end{proof}
The presented exercises show that the expressive power of $\LTLFHalf$ is so high that, under a mild assumption of \kl{truly-shadowness}, it allows 
us to perform cardinality comparison. 
From here, we are only a step away from showing undecidability of the logic, which is tackled next.


\section{Undecidability of LTL extensions} \label{sec:majority}
This section is dedicated to the main technical contribution of the paper, namely that $\LTLFHalf$,  $\LTLFMajP$ and $\LTLFMostFreq$ have undecidable satisfiability and model checking problems. 
We start from $\LTLFHalf$. 
Then, the undecidability of $\LTLFMajP$ will follow immediately from the fact that $\Halfname{}$ is definable by $\MajPname{}$. 
Finally, we will show how the undecidability proof can be adjusted to $\LTLFMostFreq$.

We start by recalling the basics on Minsky Machines.

\paragraph*{Minsky machines}
A \newnotion{deterministic Minsky machine} is, roughly speaking, a finite transition system equipped with two unbounded-size natural counters, where each counter can be incremented, decremented (only in the case it is positive), and tested for being zero. 
Formally, a \kl{Minsky machine} $\minsky$ is composed of a finite set of \emph{states}~$Q$ with a distinguished \emph{initial} state $q_0$ and a~transition function $\delta: ( Q \times  \{ 0, + \}^2 ) \rightarrow ( \{ -1, 0, 1 \}^2 {\times} (Q \setminus \{q_0\})$ satisfying three additional requirements: whenever $\delta(q,f,s)=(\counterf,\counters,q')$ holds, $\counterf=-1$ implies $f={+}$, $\counters=-1$ implies~$s={+}$ (\ie it means that only the positive counters can be decremented) and $q \neq q'$ (the machine cannot enter the same state two times in a row). 
Intuitively, the first coordinate of $\delta$ describes the current state of the machine, the second and the third coordinates tell us whether the current value of the $i$-th counter is zero or positive, the next two coordinates denote the update on the counters and the last coordinate denotes the target state.

We define a \emph{run} of a \kl{Minsky machine} $\minsky$ as a sequence of consecutive transitions of~$\minsky$. 
Formally, a run of $\minsky$ is a finite word $\aword \in (Q {\times}  \{ 0, {+} \}^2 \times  \{-1,0, 1\}^2 \times Q \setminus \{q_0\})^{+}$ such that, when denoting~$\aword_i$ as $(q^i,f^i,s^i, \counterf^i, \counters^i, q^i_N)$, all the following conditions are satisfied:
\begin{enumerate}
\item\label{item:start} $q^0 = q_0$ and $f^0 = s^0 = 0$, 
\item\label{item:state-consistency} for each $i$ we have $\delta(q^i, f^i, s^i)=(\counterf^i, \counters^i, q_N^i)$,
\item\label{item:state-propagation} for each $i<|\aword|$ we have $q^i_N=q^{i+1}$,
\item\label{item:counter-consistency} \label{item:last} for each $i$, $f^i$ equals $0$ iff $\counterf^0+ \dots + \counterf^{i-1}=0$, and $+$ otherwise; similarly  $s^i$ is $0$ if $\counters^0+ \dots+ \counters^{i-1}=0$ and $+$ otherwise. 
\end{enumerate}
It is not hard to see that this definition is equivalent to the classical one~\cite{Minsky}. 
We say that a \kl{Minsky machine} \emph{reaches} a state $q \in Q$ if there is a run with a letter containing $q$ on its last coordinate. 
It is well known that the problem of checking whether a given \kl{Minsky machine} reaches a~given state is undecidable~\cite{Minsky}.
\subsection{``Half of'' meets the halting problem} \label{subsec:halting}

We start from presenting the overview of the claimed reduction.
Until the end of Section~\ref{sec:majority}, let us fix a \kl{Minsky machine} $\minsky = (Q, q_0, \delta)$ and its state $\fancyq \in Q$.
Our ultimate goal is to define an $\LTLFHalf$ formula $\psiminskyq$ such that~$\psiminskyq$ has a model iff~$\minsky$ reaches $\fancyq$. 
To do so, we define a formula $\psiminsky$ such that there is a one-to-one correspondence between the models of $\psiminsky$ and runs of $\minsky$. 
Expressing the reachability of $q$, and thus $\psiminskyq$, based on $\psiminsky$ is easy.

Intuitively, the formula $\psiminsky$ describes a \kl{shadowy} word $\aword$ encoding on its white positions the consecutive letters of a run of $\minsky$.
In order to express it, we introduce a set $\SigmaMinsky$, composed of the following distinguished atomic propositions:
\begin{itemize}
\item $\fromq{q}$ and $\toq{q}$ for all states $q \in Q$, 
\item $\cfst{c}$ and $\csnd{c}$ for counter values $c \in \{0, +\}$, and
\item $\ifst{\op}$ and $\isnd{\op}$ for all operations $\op \in \{-1, 0, 1\}$.
\end{itemize}

We formalise the one-to-one correspondence as the function $\run$, which takes an appropriately defined shadowy model and returns a corresponding run of $\minsky$. 
More precisely, the function $\run(\aword)$ returns a run whose $i$th configuration is $(q,f,s,\counterf, \counters,q_N)$ if and only if the $i$th white configuration of $\aword$ is labelled with $\fromq{q}, \cfst{f}, \csnd{s}, \ifst{\counterf}, \isnd{\counters}$ and $\toq{q_N}$.

The formula $\psiminsky$ ensures that its models are \kl{truly~$\SigmaMinsky$-shadowy} words representing a run satisfying properties P\ref{item:start}--P\ref{item:last}. To construct it, we start from $\psitrulySigmaMinsky$ and extending it with four conjuncts.
The first two of them represent properties P\ref{item:start}--P\ref{item:state-consistency} of runs. 
They can be written in $\LTLF$ in an obvious way. 

To ensure the satisfaction of the property P\ref{item:state-propagation}, we observe that in some sense the letters~$\fromq{q}$ and $\toq{q}$ are paired in a model, \ie always after reaching a state in $\minsky$ you need to get out of it (the initial state is an exception here, but we assumed that there are no transitions to the initial state). 
Thus, to identify for which $q$ we should set the $\fromq{q}$ letter on the position~$p$, it is sufficient to see for which state we do not have a corresponding pair, \ie for which state~$q$ the number of white $\fromq{q}$ to the left of $p$ is not equal to the number of white~$\toq{q}$ to the left of $p$. 
We achieve this in the spirit of Exercise~\ref{ex:3}. 

Finally, the satisfaction of the property P\ref{item:counter-consistency} can be achieved by checking for each position~$p$ 
whether the number of white $\ifst{{+}1}$ to the left of $p$ is the same as the number of white~$\ifst{{-}1}$ to the left of $p$, and similarly for the second counter.
This reduces to checking an equicardinality of certain sets, which can be done by employing shadows and Exercise~\ref{ex:3}.

\paragraph*{The reduction}
Now we are ready to present the claimed reduction. 

\noindent We first restrict the class of models under consideration to \kl{truly~$\SigmaMinsky$-shadowy} words (for the feasibility of equicardinality encoding) with a formula \newnotion{$\psitrulySigmaMinsky$}.
Then, we express that the models satisfy properties~P\ref{item:start} and~P\ref{item:state-consistency}. 
The first property can be expressed with $\psi_{P\ref{item:start}} \deff \fromq{q_0} \land \cfst{0} \land \csnd{0}$.

\noindent The property P\ref{item:state-consistency} will be a conjunction of two formulae. 
The first one, namely $\psi^1_{P\ref{item:state-consistency}}$, is an immediate implementation of P\ref{item:state-consistency}. 
The second one, \ie $\psi^2_{P\ref{item:state-consistency}}$, is not necessary, but simplifies the proof; we require that no position is labelled by more than six letters from $\SigmaMinsky$.
\[
    \psi^1_{P\ref{item:state-consistency}} \deff \G (\wht \rightarrow \hspace{-2.7em}\bigvee_{\delta(q, f, s) =  (\counterf, \counters, q_N)} \hspace{-2em}\fromq{q}\land \cfst{f} \land \csnd{s} \land \ifst{\counterf} \land \isnd{\counters} \land\toq{q_N}),
\]
\[
\psi^2_{P\ref{item:state-consistency}} \deff
\G{}\hspace{-2em}
\bigwedge_{\substack{p_1,\dots, p_7 \in \SigmaMinsky\\
p_1, \dots, p_7\text{ are pairwise different}}}
\hspace{-2em}
 \neg (p_1 \land p_2 \land \dots \land p_7).
\]
\noindent We put $\psi_{P\ref{item:state-consistency}} \deff \psi^1_{P\ref{item:state-consistency}} \land \psi^2_{P\ref{item:state-consistency}}$ and $\psi_{\textit{enc-basics}} \deff \psitrulySigmaMinsky \land \psi_{P\ref{item:start}} \land \psi_{P\ref{item:state-consistency}}$.

We now formalise the correspondence between intended models and runs. Let $\run$ be the function which takes a word $\aword$ satisfying $\psi_{\textit{enc-basics}}$ and returns the word $\mword$ such that~$|\mword| = |\aword|/2$ and for each position $i$ we have: 
\[
(\leftrightsquigarrow): 
\mword_i=(q,f,s,\counterf,\counters,q_N) \; \text{iff} \;
\aword_{2i} \supseteq \set{
    \wht,
    \fromq{q}, 
    \cfst{f}, 
    \csnd{s}, 
    \ifst{\counterf}, 
    \isnd{\counters},
    \toq{q_N}
    }.
\]

Note that the definition of $\psi_{\textit{enc-basics}}$ makes the function run correctly defined and unambiguous, and that the results of run satisfy properties P\ref{item:start} and~P\ref{item:state-consistency}.
We summarise this as the following fact.
\begin{fact} \label{fact:run}
The function $\run$ is uniquely defined and returns words satisfying~P\ref{item:start} and~P\ref{item:state-consistency}.
\end{fact}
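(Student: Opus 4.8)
The plan is to check, in order, that (i) $\run(\aword)$ is a well-defined word of length $|\aword|/2$, (ii) the tuple $\mword_i$ prescribed by $(\leftrightsquigarrow)$ is uniquely determined for every $i$, and (iii) the run $\mword = \run(\aword)$ satisfies property~P\ref{item:start} and property~P\ref{item:state-consistency}. Everything is routine bookkeeping on top of the two conjuncts $\psi^1_{P\ref{item:state-consistency}}$ and $\psi^2_{P\ref{item:state-consistency}}$; the only delicate point is a short counting argument, flagged at the end.

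First, $\aword \models \psi_{\textit{enc-basics}}$ entails $\aword \models \psitrulySigmaMinsky$, so $\aword$ is \kl{truly~$\SigmaMinsky$-shadowy}. In particular $|\aword|$ is even, hence $|\mword| = |\aword|/2 \in \N$, and for every $i < |\mword|$ the position $2i$ is white. Applying $\psi^1_{P\ref{item:state-consistency}}$ at the white position $2i$ yields a transition $\delta(q, f, s) = (\counterf, \counters, q_N)$ of $\minsky$ all of whose six ``coordinate letters'' $\fromq{q}, \cfst{f}, \csnd{s}, \ifst{\counterf}, \isnd{\counters}, \toq{q_N}$ belong to $\aword_{2i}$; thus a tuple witnessing $(\leftrightsquigarrow)$ at position $i$ exists, and $\run$ is total. (Shadow positions are irrelevant here, since $\run$ only inspects the even positions $\aword_{2i}$.)

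For uniqueness, observe that each of those six letters lies in $\SigmaMinsky$ --- and, crucially, $\wht \notin \SigmaMinsky$ --- and that the six are pairwise distinct, since $\fromq{\cdot}$, $\toq{\cdot}$, $\cfst{\cdot}$, $\csnd{\cdot}$, $\ifst{\cdot}$, $\isnd{\cdot}$ are all distinguished atomic propositions. Since $\aword \models \psi^2_{P\ref{item:state-consistency}}$, the position $2i$ carries at most six letters from $\SigmaMinsky$; together with the previous paragraph it therefore carries exactly one letter of each of the six types $\fromq{\cdot}, \cfst{\cdot}, \csnd{\cdot}, \ifst{\cdot}, \isnd{\cdot}, \toq{\cdot}$. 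Hence the tuple $(q, f, s, \counterf, \counters, q_N)$ with $\aword_{2i} \supseteq \{\wht, \fromq{q}, \cfst{f}, \csnd{s}, \ifst{\counterf}, \isnd{\counters}, \toq{q_N}\}$ is unique: $q$ is read off from the unique $\fromq{\cdot}$ present at $2i$, $f$ from the unique $\cfst{\cdot}$, and so on. Consequently $|\mword|$ and every $\mword_i$ are determined, i.e. $\run$ is uniquely defined.

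It remains to verify the two properties. The tuple read off at $2i$ coincides with the transition tuple furnished by $\psi^1_{P\ref{item:state-consistency}}$ (they agree on all six coordinate letters, each occurring exactly once), hence $\delta(q^i, f^i, s^i) = (\counterf^i, \counters^i, q_N^i)$, which is property~P\ref{item:state-consistency}. For property~P\ref{item:start}, $\psi_{P\ref{item:start}}$ forces $\fromq{q_0}, \cfst{0}, \csnd{0} \in \aword_0$, and by the uniqueness just established these are precisely the coordinate letters read off at $i = 0$, so $q^0 = q_0$ and $f^0 = s^0 = 0$. The one spot requiring care --- the ``main obstacle'', minor though it is --- is the counting step: one must check that a transition tuple contributes exactly six pairwise-distinct members of $\SigmaMinsky$ and that $\wht$ is not one of them, so that $\psi^2_{P\ref{item:state-consistency}}$ genuinely forbids a second coordinate letter of any type and makes the read-off unambiguous; everything else is immediate from the semantics of $\LTLFHalf$.
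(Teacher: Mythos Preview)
Your proof is correct and matches the paper's intent; in fact the paper does not give a proof of this fact at all, merely remarking that ``the definition of $\psi_{\textit{enc-basics}}$ makes the function $\run$ correctly defined and unambiguous'' before stating it. Your write-up supplies exactly the bookkeeping the paper elides --- existence from $\psi^1_{P\ref{item:state-consistency}}$, uniqueness from the pigeonhole argument against $\psi^2_{P\ref{item:state-consistency}}$, and then reading off P\ref{item:start} and P\ref{item:state-consistency} --- and there is essentially no other route to take here.
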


What remains to be done is to ensure properties P\ref{item:state-propagation} and P\ref{item:counter-consistency}.
We start from the former~one. The formula $\psi_{P\ref{item:state-propagation}}$ relies on the tools established in Exercise~\ref{ex:3} and is defined as follows:
\[
\psi_{P\ref{item:state-propagation}} \deff \G{\left( \wht \rightarrow \bigwedge_{q \in Q \setminus \{ q_0 \}} (\fromq{q} \lor  \phiisequal{\fromq{q}}{\toq{q}})\right)}.
\]

\begin{lemma}\label{lemma:c3}
If $\aword$ satisfies $\psi_{\textit{enc-basics}} \wedge \psi_{P\ref{item:state-propagation}}$, then $\run(\aword)$ satisfies P\ref{item:start}--P\ref{item:state-propagation}.
\end{lemma}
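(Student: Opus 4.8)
The plan is to lean on Fact~\ref{fact:run}: it already tells us that $\run(\aword)$ is well defined and satisfies P\ref{item:start} and P\ref{item:state-consistency}, so the whole task reduces to verifying P\ref{item:state-propagation} — that for every index $i$ with $i{+}2 \le |\run(\aword)|$ the ``to''-state of the $i$th configuration of $\run(\aword)$ coincides with the ``from''-state of the $(i{+}1)$th one. Throughout, observe that every model of $\psi_{\textit{enc-basics}} \wedge \psi_{P\ref{item:state-propagation}}$ is \kl{truly~$\SigmaMinsky$-shadowy}, because $\psitrulySigmaMinsky$ is one of its conjuncts (see Corollary~\ref{col:trulyshadowy}); hence Exercise~\ref{ex:3} may be applied at every white position of $\aword$. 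I write $F_q(p)$ for $\mysharp{<}{\wht \wedge \fromq{q}}{\aword,p}$ and $T_q(p)$ for $\mysharp{<}{\wht \wedge \toq{q}}{\aword,p}$, so that Exercise~\ref{ex:3} reads $\aword,p \models \phiisequal{\fromq{q}}{\toq{q}} \Leftrightarrow F_q(p) = T_q(p)$ at white $p$ (for $q \neq q_0$).

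The first step is a little bookkeeping at a single white position $p$. By $\psi^1_{P\ref{item:state-consistency}}$ the position $p$ carries the six letters $\fromq{q^p}$, $\cfst{f^p}$, $\csnd{s^p}$, $\ifst{\counterf^p}$, $\isnd{\counters^p}$, $\toq{q_N^p}$ of the transition it witnesses, where, by the defining condition $(\leftrightsquigarrow)$, the tuple $(q^p, f^p, s^p, \counterf^p, \counters^p, q_N^p)$ is exactly the $(p/2)$th configuration of $\run(\aword)$; and by $\psi^2_{P\ref{item:state-consistency}}$ it carries no further letter of $\SigmaMinsky$. Consequently $\fromq{b}$ is false at $p$ for every $b \neq q^p$ (and likewise $\toq{b}$ is false at $p$ for $b \neq q_N^p$). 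Moreover the requirements imposed on $\delta$ give $q_N^p \neq q^p$, and since target states are never $q_0$ we also have $q_N^p \neq q_0$.

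The core is a two-point counting contradiction. Fix a white position $p$ with $p{+}2 < |\aword|$ (so $p{+}2$ is again white by \kl{shadowness}), put $b \deffnoindent q_N^p$, and suppose for contradiction that $q^{p+2} \neq b$. Since $b \neq q_0$ and $\fromq{b}$ is false at $p$, instantiating the conjunct of $\psi_{P\ref{item:state-propagation}}$ for $q = b$ at the white position $p$ and invoking Exercise~\ref{ex:3} gives $F_b(p) = T_b(p)$; since $b \neq q_0$ and, by assumption, $b \neq q^{p+2}$ so that $\fromq{b}$ is false at $p{+}2$ as well, the same instantiation at $p{+}2$ gives $F_b(p{+}2) = T_b(p{+}2)$. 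On the other hand, the only white position in the interval $[p, p{+}2)$ is $p$ itself, whose ``from''-state is $q^p \neq b$ and whose ``to''-state is precisely $b$; hence $F_b(p{+}2) = F_b(p)$ while $T_b(p{+}2) = T_b(p) + 1$. Chaining the four identities yields $T_b(p) = F_b(p) = F_b(p{+}2) = T_b(p{+}2) = T_b(p) + 1$, a contradiction. Therefore $q^{p+2} = q_N^p$ for every such $p$, which, read through $(\leftrightsquigarrow)$, is exactly P\ref{item:state-propagation} for $\run(\aword)$; together with Fact~\ref{fact:run} this establishes P\ref{item:start}--P\ref{item:state-propagation}. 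The step that needs the most care is the bookkeeping one: one has to make sure that $\psi^1_{P\ref{item:state-consistency}}$ and $\psi^2_{P\ref{item:state-consistency}}$ jointly pin down a unique ``from''-letter (and a unique ``to''-letter) at each white position, since otherwise neither of the two applications of Exercise~\ref{ex:3} above could be triggered.
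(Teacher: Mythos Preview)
Your proof is correct and follows essentially the same route as the paper: invoke Fact~\ref{fact:run} for P\ref{item:start}--P\ref{item:state-consistency}, then argue P\ref{item:state-propagation} by a two-point counting contradiction at consecutive white positions $p$ and $p{+}2$, using $\psi_{P\ref{item:state-propagation}}$ together with Exercise~\ref{ex:3} to force $F_b(p)=T_b(p)$ and $F_b(p{+}2)=T_b(p{+}2)$ while the single white position in between makes the $T_b$ count jump by one. Your bookkeeping paragraph (uniqueness of the from/to letters via $\psi^1_{P\ref{item:state-consistency}}$ and $\psi^2_{P\ref{item:state-consistency}}$, plus $q_N^p\neq q^p$ and $q_N^p\neq q_0$) is more explicit than the paper's version, but the argument is the same.
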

\begin{proof} 
Note that the satisfaction of the properties P\ref{item:start} and P\ref{item:state-consistency} by $\run(\aword)$ follows from Fact~\ref{fact:run}.
Hence, to finish the proof it amount to show that $\run(\aword)$ satisfies the property P\ref{item:state-propagation}.

Ad absurdum, assume that $\run(\aword)$ does not satisfy P\ref{item:state-propagation}.
It implies the existence of a white position $p$ in $\aword$ such that $\aword, p \models \toq{q}$ but $\aword, p{+}2 \models \fromq{q'}$ for some~$q \neq q'$. 
By our definition of \kl{Minsky machines}, we conclude that $\aword, p \models \fromq{q''}$ for some $q'' \neq q$.
Thus, $\aword, p \not\models \fromq{q}$. 

From the satisfaction of $\psi_{P\ref{item:state-propagation}}$ by $\aword$ we know that $\aword, p \models \phiisequal{\fromq{q}}{\toq{q}}$. 
Let $k$ be the total number of positions labelled with $\fromq{q}$ before $p$.
Since $\aword, p \models \phiisequal{\fromq{q}}{\toq{q}}$ holds, by Exercise~\ref{ex:3} we infer that the number of positions satisfying $\toq{q}$ before $p$ is also equal to $k$.
Since $\aword, p{+}2 \not\models \fromq{q}$ and from the satisfaction of $\psi_{P\ref{item:state-propagation}}$ by $\aword$ we once more conclude $\aword, p{+}2 \models \phiisequal{\fromq{q}}{\toq{q}}$. 
But such a situation clearly cannot happen due to the fact that the number of $\toq{q}$ in the past is equal to $k+1$, while the number of $\fromq{q}$ in the past is $k$.
\end{proof}

\noindent Finally, to express the property P\ref{item:counter-consistency}, we once again employ the tools from Exercise~\ref{ex:3}, \ie: 
\[
\psi_{P\ref{item:counter-consistency}} \deff
\G (
  \cfst{0} \to \phiisequal{\ifst{{+}1}}{\ifst{{-}1}}  )
\; \land
 \G ( 
  \csnd{0} \to \phiisequal{\isnd{{+}1}}{\isnd{{-}1}} )
  \] 
  \[
\land
\G (\wht \to (\cfst{0} \leftrightarrow \neg\cfst{+})) \land \G(\wht \to (\csnd{0} \leftrightarrow \neg\csnd{+}))
\]
The use of $\leftrightarrow$ in $\psi_{P\ref{item:counter-consistency}}$ guarantees that $\cfst{0}$ labels exactly the white positions having the counter empty (and similarly for the second counter). 
The counters are never decreased from $0$, thus the white positions not satisfying~$\cfst{0}$ are exactly those having the first counter positive.

Finally, let us define $\psiminsky$ as $\psi_{\textit{enc-basics}} \land \psi_{P\ref{item:state-propagation}} \land \psi_{P\ref{item:counter-consistency}}$.
The proof of the forthcoming fact relies on the correctness of Exercise~\ref{ex:3} and is quite similar to the proof of Lemma~\ref{lemma:c3}.

\begin{lemma}\label{lemma:c4}
If $\aword$ satisfies $\psiminsky$, then $\run(\aword)$ is a run of $\minsky$.
\end{lemma}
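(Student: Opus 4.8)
The plan is to reuse everything already proved and add only the missing property. Recall that ``$\run(\aword)$ is a run of $\minsky$'' means ``$\run(\aword)$ satisfies P\ref{item:start}--P\ref{item:counter-consistency}''. Since $\aword \models \psiminsky$ implies $\aword \models \psi_{\textit{enc-basics}} \wedge \psi_{P\ref{item:state-propagation}}$, Lemma~\ref{lemma:c3} already hands us P\ref{item:start}--P\ref{item:state-propagation}, so it remains to verify the counter‑consistency property P\ref{item:counter-consistency}; this splits into two symmetric statements, one per counter, so I would treat only the first. First I note that $\aword \models \psitrulySigmaMinsky$, so $\aword$ is truly $\SigmaMinsky$-shadowy and Exercise~\ref{ex:3} applies at all of its white positions. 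Writing the $i$th white configuration as $\fromq{q^i}, \cfst{f^i}, \csnd{s^i}, \ifst{\counterf^i}, \isnd{\counters^i}, \toq{q^i_N}$ and abbreviating $v_i \deffnoindent \counterf^0 + \dots + \counterf^{i-1}$, I would observe that $\psi^2_{P\ref{item:state-consistency}}$ forces every white position to carry \emph{exactly} those six letters of $\SigmaMinsky$, so in particular exactly one of $\ifst{{-}1}, \ifst{0}, \ifst{{+}1}$ holds there, and by Fact~\ref{fact:run} it records the corresponding operation of $\run(\aword)$; hence $v_i = \mysharp{<}{\wht \wedge \ifst{{+}1}}{\aword, 2i} - \mysharp{<}{\wht \wedge \ifst{{-}1}}{\aword, 2i}$.

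The heart of the argument is an induction over the white positions of $\aword$ proving that (a) $v_i \geq 0$ and (b) $\aword, 2i \models \cfst{0}$ iff $v_i = 0$ --- equivalently, via the conjunct $\G(\wht \to (\cfst{0} \leftrightarrow \neg\cfst{+}))$ of $\psi_{P\ref{item:counter-consistency}}$, that $\aword, 2i \models \cfst{+}$ iff $v_i > 0$. The base case $i=0$ is immediate: $\psi_{P\ref{item:start}}$ forces $\aword,0 \models \cfst{0}$ and $v_0 = 0$. For the inductive step, the implication ``$\aword, 2(i{+}1) \models \cfst{0} \Rightarrow v_{i+1} = 0$'' comes from the conjunct $\G(\cfst{0} \to \phiisequal{\ifst{{+}1}}{\ifst{{-}1}})$ together with Exercise~\ref{ex:3}, which turn $\cfst{0}$ at $2(i{+}1)$ into $\mysharp{<}{\wht \wedge \ifst{{+}1}}{\aword, 2(i{+}1)} = \mysharp{<}{\wht \wedge \ifst{{-}1}}{\aword, 2(i{+}1)}$, i.e.\ $v_{i+1}=0$. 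For the remaining direction and for preserving (a) I would invoke the defining assumption on \kl{Minsky machines} that a counter is decremented only when positive, \ie{} $\delta(q,f,s) = (\counterf, \counters, q_N)$ with $\counterf = {-}1$ forces $f = {+}$: combined with P\ref{item:state-consistency} and the induction hypothesis, $\counterf^i = {-}1$ can occur only where $\aword, 2i \models \cfst{+}$, hence (by the induction hypothesis) $v_i \geq 1$, so $v_{i+1} = v_i - 1 \geq 0$; and the same ``no decrement from zero'' fact, carried along the induction, rules out labelling $\cfst{+}$ a position whose counter is genuinely empty. Once (a)--(b) hold at every white position, P\ref{item:counter-consistency} for the first counter is exactly ``$f^i = 0$ iff $v_i = 0$'', and the second counter is handled verbatim with $\csnd{}$, $\isnd{}$ and the matching conjuncts of $\psi_{P\ref{item:counter-consistency}}$. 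Combined with P\ref{item:start}--P\ref{item:state-propagation} from Lemma~\ref{lemma:c3}, this gives that $\run(\aword)$ is a run of $\minsky$.

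I expect the only delicate point to be invariant (b): Exercise~\ref{ex:3} only constrains white positions that actually satisfy $\cfst{0}$, so the implication ``$v_i = 0 \Rightarrow \aword, 2i \models \cfst{0}$'' cannot be read off a single conjunct but must be propagated through the induction, using the $\leftrightarrow$-conjunct of $\psi_{P\ref{item:counter-consistency}}$ together with the fact that counters are never decremented from zero. Everything else is routine bookkeeping and parallels the proof of Lemma~\ref{lemma:c3}, exactly as the remark preceding the statement suggests.
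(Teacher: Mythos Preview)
Your overall structure is right, and you are more careful than the paper at precisely the point that matters: the paper's proof simply asserts that ``$\phiisequal{\ifst{{+}1}}{\ifst{{-}1}} \leftrightarrow \cfst{0}$ holds due to the satisfaction of $\psi_{P\ref{item:counter-consistency}}$'', whereas $\psi_{P\ref{item:counter-consistency}}$ as displayed only contains the one-way implication $\cfst{0} \to \phiisequal{\ifst{{+}1}}{\ifst{{-}1}}$. You correctly flag the reverse implication as the delicate step.

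Unfortunately, your inductive patch does not close the gap, and in fact the gap cannot be closed with $\psi_{P\ref{item:counter-consistency}}$ as written. The sentence ``the same `no decrement from zero' fact, carried along the induction, rules out labelling $\cfst{+}$ a position whose counter is genuinely empty'' is the crux, and it fails: take a white position $2i$ with $\cfst{0}$, $v_i = 0$ and $\ifst{0}$, so that $v_{i+1} = 0$. Nothing in $\psiminsky$ prevents position $2(i{+}1)$ from carrying $\cfst{+}$ instead of $\cfst{0}$: the first conjunct of $\psi_{P\ref{item:counter-consistency}}$ is vacuous there, the $\leftrightarrow$-conjunct only says that exactly one of $\cfst{0},\cfst{+}$ holds, and since $\delta$ is total the big disjunction in $\psi^1_{P\ref{item:state-consistency}}$ always contains a disjunct with $\cfst{+}$ for the state $q^{i+1}$. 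The ``no decrement from zero'' constraint on $\delta$ is irrelevant because no decrement is being performed. So your invariant (b) cannot be maintained, and $\run(\aword)$ need not satisfy P\ref{item:counter-consistency}. Both your induction and the paper's direct argument stumble at the same step; the evident repair is to replace $\to$ by $\leftrightarrow$ in the first two conjuncts of $\psi_{P\ref{item:counter-consistency}}$, after which the paper's one-line appeal to Exercise~\ref{ex:3} goes through without any induction.
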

\begin{proof}
Let $\aword \models \psiminsky$.
From Lemma~\ref{lemma:c3} we know that $\run(\aword)$ satisfies properties P\ref{item:start}--P\ref{item:state-propagation}.
By the definition of the run of $\minsky$, to show that $\run(\aword)$ is a run of $\minsky$, it suffices to show that $\run(\aword)$ satisfies P\ref{item:counter-consistency}.
In order to do it, we need to show that for all positions $i$ of $\run(\aword)$ we have that $f^i$ equals $0$ iff~$\counterf^0+ \dots + \counterf^{i-1}=0$, and $+$ otherwise.
Note that the ``otherwise'' part follows from the last two conjuncts of $\psi_{P\ref{item:counter-consistency}}$ and that the proof for the second counter is analogous. Hence, we omit it.

Take any white $i$. 
Our claim can be equivalently phrased as $\aword, 2i \models \cfst{0}$ holds iff $S = \sum_{j=0, (\aword, 2j) \models \ifst{\op}}^{i-1} \op$ is equal to $0$.
Note that the terms with $\op = 0$ do not contribute to the total value of $S$, so they can be omitted.
Moreover, by pushing all terms with $\op = {-}1$ to the RHS, we can represent the equation 
$S = 0$ as \[\sum_{j=0, (\aword, 2j) \models \ifst{{+}1}}^{i-1} 1 = - \sum_{j=0, (\aword, 2j) \models \ifst{{-}1}}^{i-1} -1.\]
The above equality obviously holds iff the total number of white positions before $i$ labelled with $\ifst{{+}1}$ and the total number of white positions before $i$ labelled with $ \ifst{{-}1}$ coincide.
Hence, by Exercise~\ref{ex:3}, exactly iff~$\phiisequal{\ifst{{+}1}}{\ifst{{-}1}}$ is satisfied.
But $\aword, i \models \phiisequal{\ifst{{+}1}}{\ifst{{-}1}} \leftrightarrow \cfst{0}$ holds due to the satisfaction of~$\psi_{P\ref{item:counter-consistency}}$. 
Thus, we can conclude that $\run(\aword)$ is indeed a run of $\minsky$.
\end{proof}

Lastly, to show that the encoding is correct, we need to show that each run has a corresponding model.
It is again easy: it can be shown by constructing an appropriate $\aword$; the white positions are defined according to $(\leftrightsquigarrow)$, and the shadows can be constructed~accordingly.

\begin{fact}\label{fact:c5}
If $\mword$ is a run of $\minsky$, then there is a word $\aword \models \psiminsky$ satisfying $\run(\aword)=\mword$.
\end{fact}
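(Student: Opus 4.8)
The plan is to build $\aword$ explicitly from $\mword$ and then verify it satisfies every conjunct of $\psiminsky$. First I would fix $|\aword| = 2|\mword|$ and, for each position $i < |\mword|$ with $\mword_i = (q,f,s,\counterf,\counters,q_N)$, set the white position $\aword_{2i}$ to be labelled with exactly the set $\{\wht, \fromq{q}, \cfst{f}, \csnd{s}, \ifst{\counterf}, \isnd{\counters}, \toq{q_N}\}$, i.e.\ label $\aword_{2i}$ with $\wht$ and with those six $\SigmaMinsky$-letters dictated by $(\leftrightsquigarrow)$, and with nothing else. The shadow positions are then forced: $\aword_{2i+1}$ is labelled with $\shdw$ together with $\sigmashdw$ for exactly those $\sigma \in \SigmaMinsky$ with $\aword_{2i} \models \sigma$; again nothing else. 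This makes $\aword$ \kl{truly~$\SigmaMinsky$-shadowy} by construction, so by Corollary~\ref{col:trulyshadowy} it satisfies $\psitrulySigmaMinsky$.

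Next I would check the remaining conjuncts one by one. For $\psi_{P\ref{item:start}}$: since $\mword$ is a run, P\ref{item:start} gives $q^0 = q_0$ and $f^0 = s^0 = 0$, so by construction $\aword_0$ carries $\fromq{q_0}, \cfst{0}, \csnd{0}$, hence $\aword \models \psi_{P\ref{item:start}}$. For $\psi^1_{P\ref{item:state-consistency}}$: each white $\aword_{2i}$ reflects $\mword_i$, and P\ref{item:state-consistency} for $\mword$ says $\delta(q^i,f^i,s^i) = (\counterf^i,\counters^i,q_N^i)$, which is exactly the disjunct witnessed at $\aword_{2i}$; shadow positions vacuously satisfy the implication. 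For $\psi^2_{P\ref{item:state-consistency}}$: white positions carry at most the six $\SigmaMinsky$-letters from $(\leftrightsquigarrow)$ and shadows carry at most six $\sigmashdw$-letters (one per white letter), so no position has seven distinct $\SigmaMinsky$-letters. For $\psi_{P\ref{item:state-propagation}}$: fix a white $p = 2i$; if $\aword_p \models \fromq{q}$ the disjunct holds trivially, otherwise I must show $\aword, p \models \phiisequal{\fromq{q}}{\toq{q}}$, i.e.\ (by Exercise~\ref{ex:3}) that the number of white $\fromq{q}$ before $p$ equals the number of white $\toq{q}$ before $p$. This follows from P\ref{item:state-propagation} of $\mword$ ($q_N^j = q^{j+1}$ for all $j$) together with the assumption $q^i \neq q$: each occurrence of $\toq{q}$ at some $j < i$ is matched by an occurrence of $\fromq{q}$ at $j+1 \le i-1 < i$ (using $q_0 \neq q$ since $q_0$ has no incoming transitions), and conversely since $q^i \neq q$ no $\fromq{q}$ among $\aword_0,\dots,\aword_{p-2}$ is unmatched; a short counting argument over the run closes this. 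For $\psi_{P\ref{item:counter-consistency}}$: the last two conjuncts ($\G(\wht \to (\cfst{0} \leftrightarrow \neg\cfst{+}))$ and its analogue) hold because each white position carries exactly one of $\cfst{0}, \cfst{+}$ by construction from $f^i \in \{0,+\}$; for the first conjunct, fix white $p = 2i$ and note $\aword_p \models \cfst{0}$ iff $f^i = 0$ iff (by P\ref{item:counter-consistency} of $\mword$) $\counterf^0 + \dots + \counterf^{i-1} = 0$ iff the number of white $\ifst{{+}1}$ before $p$ equals the number of white $\ifst{{-}1}$ before $p$, which by Exercise~\ref{ex:3} is exactly $\aword, p \models \phiisequal{\ifst{{+}1}}{\ifst{{-}1}}$; the second counter is symmetric.

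Finally I would observe that $\run(\aword) = \mword$ holds by construction, since $(\leftrightsquigarrow)$ was used verbatim to define the white positions and $\run$ reads $\mword_i$ off $\aword_{2i}$ in exactly the same way. I do not expect any genuine obstacle here: the statement is the ``easy direction'' of the reduction, and the only mildly delicate point is the matching argument for $\psi_{P\ref{item:state-propagation}}$, where one must be careful that the initial state $q_0$ never reappears (guaranteed by the definition of \kl{Minsky machines}, whose transition function maps into $Q \setminus \{q_0\}$) so that the pairing between $\fromq{q}$- and $\toq{q}$-occurrences is clean. Everything else is a direct unwinding of the construction against the definitions of $\psi_{\textit{enc-basics}}$, $\psi_{P\ref{item:state-propagation}}$ and $\psi_{P\ref{item:counter-consistency}}$.
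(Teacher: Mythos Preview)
Your proposal is correct and follows essentially the same approach as the paper: construct $\aword$ explicitly by letting white positions encode $\mword_i$ via $(\leftrightsquigarrow)$ and shadows mirror them with the tilde-letters, then verify each conjunct of $\psiminsky$. The paper's own proof is much terser (it defines the same $\aword$ and then simply asserts ``it is routine to check that $\aword$ satisfies $\psi_{P\ref{item:state-propagation}}$ and $\psi_{P\ref{item:counter-consistency}}$''), whereas you actually carry out that routine check, including the matching argument for $\psi_{P\ref{item:state-propagation}}$; one minor imprecision is that shadows carry \emph{zero} $\SigmaMinsky$-letters (the $\sigmashdw$ are not in $\SigmaMinsky$), so $\psi^2_{P\ref{item:state-consistency}}$ holds at shadows vacuously rather than by the six-letter count you mention.
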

\begin{proof}
Take $\aword$ be a word of length $2 \cdot |\mword|$ defined as follows (for all $0 \leq i < |\mword|$):
\[
(\textit{whites}): 
\mword_i=(q,f,s,\counterf,\counters,q_N) \; \text{iff} \;
\aword_{2i} = \set{
    \wht,
    \fromq{q}, 
    \cfst{f}, 
    \csnd{s}, 
    \ifst{\counterf}, 
    \isnd{\counters},
    \toq{q_N}
    }.
\]
\[
(\textit{shadows}):
\mword_i=(q,f,s,\counterf,\counters,q_N) \; \text{iff} \;
\aword_{2i{+}1} = \set{
    \shdw,
    \widetilde{\fromq{q}}, 
    \widetilde{\cfst{f}}, 
    \widetilde{\csnd{s}}, 
    \widetilde{\ifst{\counterf}}, 
    \widetilde{\isnd{\counters}},
    \widetilde{\toq{q_N}}
    }.
\]
From the construction we see that $\aword$ satisfies $\psitrulySigmaMinsky{}$. 
Moreover, due to the $(\leftrightsquigarrow)$ correspondence and Fact~\ref{fact:run} we conclude $\aword$ satisfies $\psi_{\textit{enc-basics}}$.
Then, it is routine to check that $\aword$ satisfies $\psi_{P\ref{item:state-propagation}}$ and $\psi_{P\ref{item:counter-consistency}}$. 
Hence, $\aword \models \psiminsky$.
\end{proof}

Let $\psiminskyq \deff \psiminsky \land \F{(\toq{\fancyq})}$.
Observe that the formula~$\psiminskyq$ is satisfiable if and only if $\minsky$ reaches~$\fancyq$. 
The ``if'' part follows from Lemma~\ref{lemma:c4} and the satisfaction of the conjunct $\F{(\toq{\fancyq})}$ from~$\psiminsky$. 
The ``only if'' part follows from Fact~\ref{fact:c5}.
Hence, from undecidability of the reachability problem \kl{Minsky machines} we infer our main theorem:
\begin{thm} \label{thm:main}
The satisfiability problem for $\LTLFHalf$ is undecidable.
\end{thm}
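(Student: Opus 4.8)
The plan is to wrap up the reduction assembled in the preceding paragraphs: it exhibits a computable map from instances of the reachability problem for \kl{Minsky machines} --- which is undecidable~\cite{Minsky} --- to instances of $\LTLFHalf$-satisfiability. Concretely, given a \kl{Minsky machine} $\minsky = (Q, q_0, \delta)$ and a target state $\fancyq \in Q$, the formula $\psiminskyq = \psiminsky \land \F{(\toq{\fancyq})}$ can be built effectively: $\psitrulySigmaMinsky$ is delivered by Corollary~\ref{col:trulyshadowy} applied to the finite set $\SigmaMinsky$, the conjuncts $\psi_{P\ref{item:start}}$ and $\psi_{P\ref{item:state-consistency}}$ are read off $\delta$ by inspection, and the finitely many occurrences of $\phiisequal{\cdot}{\cdot}$ inside $\psi_{P\ref{item:state-propagation}}$ and $\psi_{P\ref{item:counter-consistency}}$ are produced by Exercise~\ref{ex:3}. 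Since none of these pieces --- nor the added conjunct $\F{(\toq{\fancyq})}$ --- uses a temporal or percentage operator beyond $\Fname$ and $\Halfname$, the result is an $\LTLFHalf$ formula.

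The heart of the argument is then the equivalence: $\psiminskyq$ is satisfiable if and only if $\minsky$ reaches $\fancyq$. For the forward direction, take a word $\aword \models \psiminskyq$; then $\aword \models \psiminsky$, so Lemma~\ref{lemma:c4} guarantees that $\run(\aword)$ is a genuine run of $\minsky$, while the conjunct $\F{(\toq{\fancyq})}$ forces some position of $\aword$ to satisfy $\toq{\fancyq}$. As $\aword$ is \kl{truly~$\SigmaMinsky$-shadowy}, letters of $\SigmaMinsky$ occur only on white positions, so that position is some $\aword_{2i}$, and by the correspondence $(\leftrightsquigarrow)$ the $i$-th letter of $\run(\aword)$ carries $\fancyq$ on its last coordinate; hence $\minsky$ reaches $\fancyq$. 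For the backward direction, suppose $\minsky$ reaches $\fancyq$ via a run $\mword$ one of whose letters carries $\fancyq$ on its last coordinate. By Fact~\ref{fact:c5} there is a word $\aword \models \psiminsky$ with $\run(\aword) = \mword$, and the white position of $\aword$ encoding that letter satisfies $\toq{\fancyq}$, so $\aword \models \F{(\toq{\fancyq})}$ and therefore $\aword \models \psiminskyq$.

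Combining the two directions, any decision procedure for $\LTLFHalf$-satisfiability would decide reachability of \kl{Minsky machines}, which is impossible; this establishes the theorem. I do not anticipate a genuine obstacle at this stage, since all the technical weight has already been carried by Exercises~\ref{ex:1}, \ref{ex:2} and~\ref{ex:3} together with Lemma~\ref{lemma:c4} and Fact~\ref{fact:c5}. The two points I would be careful to phrase correctly are that reachability of $\fancyq$ is witnessed by the \emph{target} coordinate of some transition --- which is exactly why the added conjunct mentions $\toq{\fancyq}$ rather than $\fromq{\fancyq}$ --- and that the ``no transition into $q_0$'' requirement built into the definition of \kl{Minsky machines} is what keeps $\psi_{P\ref{item:state-propagation}}$ compatible with the initial labelling $\fromq{q_0}$ forced by $\psi_{P\ref{item:start}}$.
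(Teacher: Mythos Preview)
Your proposal is correct and follows essentially the same approach as the paper: both wrap up the reduction by arguing that $\psiminskyq$ is satisfiable iff $\minsky$ reaches $\fancyq$, invoking Lemma~\ref{lemma:c4} for one direction and Fact~\ref{fact:c5} for the other, and then appealing to the undecidability of Minsky reachability. Your write-up is in fact more careful than the paper's terse paragraph --- you spell out explicitly why the witnessing position for $\F{(\toq{\fancyq})}$ must be white and how $(\leftrightsquigarrow)$ then yields $\fancyq$ on the last coordinate of the run --- but the underlying argument is identical.
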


\subsection{Undecidability of model-checking} \label{subsec:model-checking}

For a given alphabet $\Sigma$, we can define a Kripke structure $\kripkep{\Sigma}$ whose set of traces is the language $(2^\Sigma)^+$:  the set of states $S$ of $\kripkep{\Sigma}$ is composed of all subsets of $\Sigma$, all states are initial (\ie $I = S$), the transition relation is the maximal relation ($R = S {\times} S$) and $\ell(X) {=} X$ for any subset~$X \subseteq \Sigma$.
It follows that a formula $\varphi$ over an alphabet $\Sigma$ is satisfiable if and only if there is a trace of $\kripkep{\Sigma}$ satisfying $\varphi$. 
Hence, from the undecidability of the satisfiability problem for~$\LTLFHalf$ we get:
\begin{thm}
Model-checking of $\LTLFHalf$ formulae over Kripke structures is undecidable.
\end{thm}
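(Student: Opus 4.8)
The plan is to reduce the satisfiability problem for $\LTLFHalf$, which was just shown undecidable in \cref{thm:main}, to the model-checking problem for $\LTLFHalf$ over Kripke structures. The key observation is that model-checking asks whether \emph{some} trace of a given Kripke structure satisfies the input formula, and that satisfiability is simply model-checking against a Kripke structure whose traces are \emph{all} finite words over the relevant alphabet.

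First I would note that any $\LTLFHalf$ formula $\varphi$ mentions only finitely many atomic propositions; let $\Sigma$ be the finite set of letters occurring in $\varphi$ (or any finite superset thereof). Since the truth of $\varphi$ at a word depends only on the restriction of the word's labels to $\Sigma$, it suffices to decide whether $\varphi$ has a model over the alphabet $2^\Sigma$. Next I would construct the Kripke structure $\kripkep{\Sigma}$ as described in the excerpt: its states are all subsets of $\Sigma$, every state is initial, the transition relation is total and maximal ($R = S \times S$), and $\ell(X) = X$. A straightforward check shows that the set of traces of $\kripkep{\Sigma}$ is exactly $(2^\Sigma)^+$: any nonempty finite sequence $X_0, X_1, \ldots, X_k$ of subsets of $\Sigma$ arises as $\ell(s_0), \ldots, \ell(s_k)$ by taking $s_i = X_i$, using that all states are initial and all transitions present; conversely every trace is by definition such a sequence.

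Then I would observe that $\kripkep{\Sigma}$ has a trace satisfying $\varphi$ if and only if some word in $(2^\Sigma)^+$ satisfies $\varphi$, which (by the first paragraph) holds if and only if $\varphi$ is satisfiable at all. Hence a decision procedure for model-checking $\LTLFHalf$ against Kripke structures would yield a decision procedure for $\LTLFHalf$ satisfiability: given $\varphi$, compute $\Sigma$, build $\kripkep{\Sigma}$, and query the model-checker. Since satisfiability is undecidable by \cref{thm:main}, model-checking must be undecidable as well.

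The reduction itself is entirely routine — the only mild subtlety worth stating explicitly is that $\kripkep{\Sigma}$ is computable from $\varphi$ in the obvious way (its size is exponential in $|\Sigma|$, but that is irrelevant for an undecidability argument), and that passing to the finite alphabet $\Sigma$ is sound because $\LTLFHalf$ semantics are insensitive to letters not appearing in the formula. There is no real obstacle here; the weight of the result rests entirely on \cref{thm:main}, and this theorem is a short corollary packaging that fact into the model-checking setting.
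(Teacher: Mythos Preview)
Your proposal is correct and follows essentially the same approach as the paper: construct the universal Kripke structure $\kripkep{\Sigma}$ whose traces are exactly $(2^\Sigma)^+$, observe that satisfiability of $\varphi$ coincides with the existence of a trace of $\kripkep{\Sigma}$ satisfying $\varphi$, and invoke \cref{thm:main}. Your write-up is slightly more detailed (explicitly noting the restriction to a finite alphabet and the computability of $\kripkep{\Sigma}$), but the argument is the same.
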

The decidability can be regained if additional constraints on the shape of Kripke structures are imposed: model-checking of $\LTLFHalf$ formulae over \emph{flat} structures is decidable~\cite{DeckerHLST17}.

As discussed earlier, the $\Halfname{}$ operator can be expressed in terms of the $\MajPname{}$ operator. 
Hence, we conclude:
\begin{corollary}
Model-checking and satisfiability problems for $\LTLFMajP$ are undecidable.
\end{corollary}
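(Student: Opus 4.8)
The plan is to show that every $\LTLFHalf$ formula can be effectively rewritten into an equivalent $\LTLFMajP$ formula, so that both undecidability results for $\LTLFHalf$ transfer immediately. Concretely, I would define a translation $\tau$ from $\LTLFHalf$ to $\LTLFMajP$ that commutes with all the Boolean connectives and with $\F{}$, and that sends each subformula $\Half{\varphi}$ to $\MajP(\tau(\varphi)) \wedge \MajP(\neg \tau(\varphi))$. This is precisely the equivalence already noted right after the definition of $\Halfname$: at a position $i$, the formula $\MajP(\varphi)$ says $|\setof{j < i : \aword,j \models \varphi}| \geq \frac{i}{2}$ and $\MajP(\neg\varphi)$ says $|\setof{j < i : \aword,j \models \neg\varphi}| \geq \frac{i}{2}$; since these two sets partition $\setof{0,\dots,i{-}1}$, the two inequalities hold simultaneously if and only if both sides equal $\frac{i}{2}$, i.e.\ if and only if $\aword,i \models \Half{\varphi}$ (which in turn forces $i$ to be even). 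A straightforward induction on the structure of $\varphi$ then gives $\aword,i \models \varphi \Leftrightarrow \aword,i \models \tau(\varphi)$ for every word $\aword$ and position $i$, hence in particular $\varphi$ and $\tau(\varphi)$ have exactly the same models.

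Since $\tau$ is clearly computable and produces only a constant-factor increase in formula size, satisfiability of $\LTLFHalf$ reduces to satisfiability of $\LTLFMajP$, so Theorem~\ref{thm:main} yields undecidability of the latter. For model-checking, the very same translation does the job: a Kripke structure has a trace satisfying $\varphi$ if and only if it has a trace satisfying $\tau(\varphi)$, so the undecidability of $\LTLFHalf$ model-checking established above carries over to $\LTLFMajP$. Alternatively, one can bypass the model-checking theorem for $\LTLFHalf$ altogether and reduce $\LTLFMajP$-satisfiability directly to $\LTLFMajP$-model-checking over the universal structure $\kripkep{\Sigma}$, exactly as in Section~\ref{subsec:model-checking}.

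I do not expect any genuine difficulty here: the only points requiring (minimal) care are performing the rewriting bottom-up so that nested occurrences of $\Halfname$ are all eliminated, and observing that the parity side-condition built into the semantics of $\Halfname$ is automatically inherited, since $\MajP(\psi) \wedge \MajP(\neg\psi)$ can hold at a position $i$ only when $i$ is even. Everything else is routine bookkeeping, which is why this is stated as a corollary rather than a theorem.
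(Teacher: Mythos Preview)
Your proposal is correct and matches the paper's approach exactly: the corollary is obtained simply from the observation (already recorded after the definition of $\Halfname$) that $\Half{\varphi}$ is equivalent to $\MajP(\varphi)\wedge\MajP(\neg\varphi)$, so the undecidability of satisfiability and model-checking for $\LTLFHalf$ transfers to $\LTLFMajP$. If anything, you spell out more details (the inductive translation, the parity remark, the two routes to model-checking) than the paper, which just cites the earlier discussion.
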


\subsection{Most-Frequent Letter and Undecidability} \label{subsec:mfc}

We next turn our attention to the $\MostFreqname$ operator, which turns out to be a little bit problematic. 
Typically, formulae depend only on the atomic propositions that they explicitly mentioned. 
Here, it is not the case. 
Consider a formula $\varphi = \MostFreq a$ and words $\aword_1 = \setof{a}\setof{}\setof{a}$ and $\aword_2 = \setof{a,b}\setof{b}\setof{a,b}$. 
Clearly, $\aword_1, 2 \models \varphi$ whereas $\aword_2, 2 \not \models \varphi$. 
This can be fixed in many ways -- for example, by parametrising $\MostFreqname$ with a domain, so that it expresses that ``$a$ is the most frequent letter among $b_1, \ldots, b_n$''. 
We show, however, that even this very basic version of $\MostFreqname$  is undecidable.
The proof is an adaptation of our previous proofs with a little twist inside.

First, we adjust the definition of \kl{shadowy} words. 
A word $w$ is \newnotion{strongly shadowy} if $w$ is shadowy and for each even position of $\aword$ we have that $\wht$ and $\shdw$ are the most frequent letters among the other labelling $\aword$ while for odd positions $\wht$ is the most frequent.
Note that the words constructed in the previous sections were \kl{strongly shadowy} because each letter $\sigma$ appeared only at whites or at shadows.

\begin{exercise}\label{ex:psishadowy-ok-for-mfl}
There exists an $\LTLFMostFreq{}$ formula $\psishadowyMFL$ defining \kl{strongly shadowy} words.
\end{exercise}
\begin{proof}
    It suffices to revisit Exercise~\ref{ex:1} and to modify the formula $\phiodd{}$ stipulating that odd positions are exactly those labelled with $\shdw$ (since it is the only formulae employing $\Half{}$).
    We claim that $\phiodd{}$ can be expressed with~
    \[\phioddMFL \deffnoindent \G[\MostFreq(\wht) \land (\wht \leftrightarrow \MostFreq(\shdw))]\] 
    Indeed, take any word $\aword \models \phiinit \land \phioddMFL$.
    Of course we have $\aword, 0 \models \wht$ (due to $\phiinit$). 
    Moreover, $\aword, 1 \models \shdw$ holds: otherwise we would get contradiction with $\shdw$ not being the most frequent letter in the past of $1$.
    Now assume $p>1$ and assume that the word $\aword_0, \dots, \aword_{p-1}$ is \kl{strongly shadowy}. Consider two cases.
If $p$ is odd, then both $\wht$ and $\shdw$ are the most frequent letters in the past of $p{-}1$ and $p{-}1$ is labelled by $\wht$. 
        Then, $\shdw$ is not the most frequent letter in the past of $p$ and thus $p$ is labelled by $\shdw$ and $\wht$ is the most frequent letter in the past of $p$.
If $p$ is even, $p{-}2$ is labelled by $\wht$ and the most frequent letters in the past of $p{-}2$ are $\wht$ and $\shdw$, and $p{-}1$ is labelled by $\shdw$. 
        Thus both $\wht$ and $\shdw$ are the most frequent letters in the past of $p$ and therefore $\wht$ is labelled by $\wht$.        %
   Thus, $\aword_0, \dots, \aword_{p}$ is \kl{strongly shadowy}. 
   By induction, $\aword$ is \kl{strongly shadowy}.
   
   It can be readily checked that every \kl{strongly shadowy} word satisfies $\psishadowyMFL$.
\end{proof}

We argue that over the \kl{strongly shadowy} models, the formulae $\Half{\sigma}$ and $\MostFreq{\sigma}$ are equivalent.

\begin{lemma} \label{lemma:equivalence-halfish}
For all \kl{strongly shadowy} words $\aword \models \psishadowyMFL$, all even positions $2i$ and all letters $\sigma$ we have the equivalence $\aword, 2i \models \Half{\sigma}$ iff $\aword, 2i \models \MostFreq{\sigma}$.
\end{lemma}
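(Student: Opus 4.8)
We want to show that over a strongly shadowy word $\aword$ (in particular one satisfying $\psishadowyMFL$), for every even position $2i$ and every letter $\sigma$: exactly half of the first $2i$ positions satisfy $\sigma$ if and only if $\sigma$ is among the most frequent letters in that prefix. The key structural facts I would extract from strong shadowness and $\psishadowyMFL$ are: (i) the prefix $\aword_{\le 2i-1}$ has exactly $i$ white positions and $i$ shadow positions; (ii) $\wht$ occurs exactly $i$ times and $\shdw$ occurs exactly $i$ times in this prefix; and (iii) by the definition of strong shadowness applied at position $2i$, both $\wht$ and $\shdw$ are most-frequent letters in the past of $2i$, i.e. no letter occurs more than $i$ times among $\aword_0,\dots,\aword_{2i-1}$.

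First I would treat the easy cases. If $\sigma = \wht$ or $\sigma = \shdw$, both sides hold: $\sigma$ occurs exactly $i$ times in a prefix of length $2i$, so $\Half{\sigma}$ holds, and by (iii) $\sigma$ is most-frequent, so $\MostFreq{\sigma}$ holds. So assume $\sigma \notin \{\wht,\shdw\}$. Let $k = \mysharp{<}{\sigma}{\aword, 2i}$ be the number of occurrences of $\sigma$ in the prefix. Since every position carries $\wht$ or $\shdw$, and by (iii) the maximum multiplicity among all letters in the prefix is exactly $i$ (witnessed by $\wht$ and $\shdw$), we have $k \le i$ automatically.

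Now the two directions. For the forward direction, suppose $\aword, 2i \models \Half{\sigma}$, i.e. $k = i$. Then $\sigma$ attains the maximal multiplicity $i$, so $\sigma$ is among the most frequent letters and $\aword, 2i \models \MostFreq{\sigma}$. For the converse, suppose $\aword, 2i \models \MostFreq{\sigma}$, i.e. $k \ge \mysharp{<}{\tau}{\aword, 2i}$ for every letter $\tau$; in particular $k \ge i$ taking $\tau = \wht$. Combined with $k \le i$ this gives $k = i$, hence $\aword, 2i \models \Half{\sigma}$. This closes both directions.

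**Main obstacle.** The only subtlety is making precise that in the prefix of length $2i$ the maximal letter-multiplicity is exactly $i$ and is attained by $\wht$ (and $\shdw$); this is where strong shadowness is genuinely used, and it hinges on the ``$\aword, 2i \models \MostFreq(\wht)$'' clause contained in $\phioddMFL$ together with the counting of white versus shadow positions established in Exercise~\ref{ex:psishadowy-ok-for-mfl}. Everything else is an elementary pigeonhole/sandwiching argument. I would therefore open the proof by recording these counting facts as a short preliminary paragraph, then dispatch the $\sigma \in \{\wht,\shdw\}$ case, and finally run the $k \le i$ / $k \ge i$ squeeze for the remaining $\sigma$.
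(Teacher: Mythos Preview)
Your proposal is correct and follows essentially the same approach as the paper's proof: both use that in a strongly shadowy word the letter $\wht$ occurs exactly $i$ times in the prefix of length $2i$ and is most frequent there, then run the two-way squeeze $k \geq i$ (from $\MostFreq{\sigma}$ compared against $\wht$) and $k = i$ (from $\Half{\sigma}$). Your separate treatment of the case $\sigma \in \{\wht,\shdw\}$ is harmless but unnecessary, since the squeeze argument already covers it uniformly; the paper omits this case split.
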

\begin{proof}
If $\aword, 2i \models \MostFreq{\sigma}$, then 
$\aword, 2i \models \MostFreq{\wht}$ due to the \kl{strongly shadowness} of $\aword$.
Hence $\mysharp{<}{\sigma}{\aword, 2i} = \mysharp{<}{\wht}{\aword, 2i} = \frac{2i}{2}$, implying $\aword, 2i \models \Half{\sigma}$.

Now, assume that $\aword, 2i \models \Half{\sigma}$ holds, so $\sigma$ appears $i$ times in the past.
Since $\aword$ is \kl{strongly shadowy} we know that $\wht$ is the most frequent letter.
Moreover, $\wht$ appears $\frac{2i}{2} = i$ times in the past. 
Hence, $\aword, 2i \models \MostFreq{\sigma}$.
\end{proof}

We say that a letter $\sigma$ is \newnotion{importunate} in a word $\aword$ if $\sigma$ labels more than half of the positions in some even prefix of $\aword$. Notice that \kl{strongly shadowy} words cannot have \kl{importunate} letters.

With the above lemma, it is tempting to finish the proof as follows: replace each $\Half(\varphi)$ in the formulae from Section~\ref{subsec:halting} with $\MostFreq(p_\varphi)$ for some fresh atomic proposition~$p_{\varphi}$ and require that $\G(\varphi \leftrightarrow p_{\varphi})$ holds. 
A formula obtained from $\varphi$ in this way will be called a \newnotion{dehalfication} of $\varphi$ and will be denoted with $\dehalf{\varphi}$. The next lemma shows that $\dehalf{\cdot}$ preserves satisfaction of certain $\LTLFHalf{}$ formulae. 
\begin{lemma}\label{lemma:equivalence-halfish2}
Let $\varphi$ be an $\LTLFHalf$ formula without nested $\Half{}$ operators and without $\F{}$ modality, $\Lambda$ be the set of all formulae $\lambda$ such that $\Half{\lambda}$ appears in $\varphi$ and let $\aword$ be a word such that $\aword \models \psishadowyMFL \land \bigwedge_{\lambda \in \Lambda} \G(p_{\lambda} \leftrightarrow \lambda)$. 
Then for all even positions $2p$ of~$\aword$ we have that $\aword, 2p \models \dehalf{\varphi}$ implies $\aword, 2p \models \varphi$. Moreover, $\aword \models \G(\wht \rightarrow \dehalf{\varphi})$ implies $\aword \models \G(\wht \rightarrow \varphi)$.
\end{lemma}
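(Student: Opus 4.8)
The plan is to prove the statement by structural induction on the shape of $\varphi$, using the assumption that $\varphi$ has no nested $\Half{}$ and no $\F{}$. Since $\F{}$ is forbidden, $\varphi$ is built from atomic propositions and subformulae of the form $\Half{\lambda}$ (with $\lambda$ themselves $\Half$-free and $\F$-free, by the no-nesting hypothesis) using only $\neg$ and $\wedge$. The dehalfication $\dehalf{\cdot}$ replaces each top-level $\Half{\lambda}$ by $\MostFreq(p_\lambda)$ and otherwise commutes with $\neg$ and $\wedge$; since we are given $\aword \models \bigwedge_{\lambda \in \Lambda} \G(p_\lambda \leftrightarrow \lambda)$, the proposition $p_\lambda$ holds at a position exactly when $\lambda$ does, at every position of $\aword$.

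The core of the argument is the atomic case $\varphi = \Half{\lambda}$. Here $\dehalf{\varphi} = \MostFreq(p_\lambda)$. Fix an even position $2p$ and suppose $\aword, 2p \models \MostFreq(p_\lambda)$. By Lemma~\ref{lemma:equivalence-halfish} (applied with $\sigma := p_\lambda$, which is legitimate since that lemma quantifies over all letters), $\aword, 2p \models \Half{p_\lambda}$, i.e.\ exactly $p$ positions before $2p$ satisfy $p_\lambda$. Since $\aword, j \models p_\lambda \Leftrightarrow \aword, j \models \lambda$ for every $j$ (from the $\G(p_\lambda \leftrightarrow \lambda)$ conjunct), the same count holds for $\lambda$, so $\aword, 2p \models \Half{\lambda} = \varphi$. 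Note the implication only goes one way here, which is exactly why the statement is phrased as an implication rather than an equivalence --- and this one direction is all that the later reduction needs. For the Boolean cases: if $\varphi = \varphi_1 \wedge \varphi_2$ then $\dehalf{\varphi} = \dehalf{\varphi_1} \wedge \dehalf{\varphi_2}$ and the implication for $\varphi$ follows from the two inductive implications. The case $\varphi = \neg \psi$ is the delicate one, because a one-directional implication does not in general survive negation; I expect this to be the main obstacle.

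To handle negation I would strengthen the induction hypothesis so that it is symmetric enough to pass through $\neg$. Concretely, observe that in the atomic case we actually proved an \emph{equivalence} $\aword, 2p \models \MostFreq(p_\lambda) \Leftrightarrow \aword, 2p \models \Half{\lambda}$ at even positions: Lemma~\ref{lemma:equivalence-halfish} is stated as an ``iff'', so both directions hold. Hence I would prove by induction the sharper claim that for every subformula $\varphi$ (of the given restricted shape) and every even position $2p$ of $\aword$, $\aword, 2p \models \dehalf{\varphi} \Leftrightarrow \aword, 2p \models \varphi$. Atoms $a \in \APset$ are unchanged by $\dehalf{\cdot}$, so the equivalence is trivial there; the $\Half{\lambda}$ case is the equivalence just noted; conjunction is immediate; and negation now goes through because at a \emph{fixed} even position $2p$, $\aword, 2p \models \neg\psi$ iff not $\aword, 2p \models \psi$ iff (by induction) not $\aword, 2p \models \dehalf{\psi}$ iff $\aword, 2p \models \neg\dehalf{\psi} = \dehalf{\neg\psi}$. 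The key point making this work is that all positions in play are even, since $\Half{}$-subformulae are only ever evaluated where they can be true and the outer formula in the intended use ($\G(\wht \to \cdot)$) is guarded by $\wht$, which on a strongly shadowy word forces even positions. The original one-directional statement in the lemma is then an immediate weakening.

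Finally, for the ``moreover'' part: suppose $\aword \models \G(\wht \to \dehalf{\varphi})$. Take any position $j$ with $\aword, j \models \wht$; by strong shadowness of $\aword$ (which holds since $\aword \models \psishadowyMFL$, via Exercise~\ref{ex:psishadowy-ok-for-mfl}), $j$ is even, so $j = 2p$ for some $p$. From $\aword \models \G(\wht \to \dehalf{\varphi})$ we get $\aword, 2p \models \dehalf{\varphi}$, and by the (strengthened) main claim $\aword, 2p \models \varphi$. As $j$ was an arbitrary $\wht$-position, $\aword \models \G(\wht \to \varphi)$, as required. The only genuinely non-routine ingredient is the recognition that Lemma~\ref{lemma:equivalence-halfish} gives an equivalence, not just an implication, so that the induction can be run in its symmetric form and thus survive negation; everything else is bookkeeping over the restricted grammar.
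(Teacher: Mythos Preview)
Your proof is correct and follows the same approach as the paper: structural induction over the restricted grammar, with the only non-trivial case $\varphi = \Half{\lambda}$ handled by Lemma~\ref{lemma:equivalence-halfish}. The paper's proof is a two-sentence sketch that leaves the negation case implicit; your observation that one must actually prove the \emph{equivalence} $\aword,2p \models \dehalf{\varphi} \Leftrightarrow \aword,2p \models \varphi$ at even positions (relying on the fact that Lemma~\ref{lemma:equivalence-halfish} is an ``iff'') so that the induction survives $\neg$ is the correct way to make that sketch rigorous, and your derivation of the ``moreover'' part from strong shadowness is exactly right.
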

\begin{proof}
The proof goes via structural induction over $\LTLFHalf{}$ formulae without nested $\Half{}$ operators and without $\F{}$ operators.
The only interesting case is when $\varphi = \Half{\lambda}$, which follows from Lemma~\ref{lemma:equivalence-halfish}.
\end{proof}

Note, however, that the above lemma works only one way: it fails when the formula $\varphi$ is satisfied in more than half of the positions of some prefix, as that would make $p_\varphi$ \kl{importunate} leading to unsatisfiablity of $\psishadowyMFL{}$. 

\newcommand{\psibasicencmfl}{\psi_{\textit{enc-basics}}^{\textit{MFL}}}
\newcommand{\psibasicenc}{\psi_{\textit{enc-basics}}}

\subsection{Most-Frequent Letter: the reduction} \label{subsec:mfc-reduction}

The next step is to construct a formula defining \kl{truly~$\SigmaMinsky$-shadowy} words, which are the crucial part of~$\psibasicencmfl$. 
To do it, we first need to rewrite a formula $\phitransfer{\sigma}{\sigmashdw}$, transferring the truth of a letter $\sigma$ from whites into their shadows. 
The main ingredient of $\phitransfer{\sigma}{\sigmashdw}$ is the formula $\varphi_{(\heartsuit)} \deffnoindent \G{\left( \wht \rightarrow \Half{( [\wht \land \sigma] \vee [\shdw \land \neg \sigmashdw])} \right)}$, which we replace with $\dehalf{\varphi_{(\heartsuit)}}$.
We call the obtained formula $(\phitransfer{\sigma}{\sigmashdw})^{\textit{MFL}}$ and show its correctness below.

First, by Lemma~\ref{lemma:equivalence-halfish2} we know that every model of $(\phitransfer{\sigma}{\sigmashdw})^{\textit{MFL}}$ is also a model of $\phitransfer{\sigma}{\sigmashdw}$.
Then, the models of~$\phitransfer{\sigma}{\sigmashdw}$ can be made strongly shadowy, so \kl{dehalfication} of $\phitransfer{\sigma}{\sigmashdw}$ is satisfiability-preserving.
\begin{lemma} \label{lemma:corectness-dehalf}
Let $p_{\varphi}$ be a fresh letter for $\varphi := [\wht \land \sigma] \vee [\shdw \land \neg \sigmashdw]$.
Take $\aword$, a \kl{strongly shadowy} word satisfying $\aword \models \phitransfer{\sigma}{\sigmashdw}$ without any occurrences of $p_{\varphi}$.
Then $\aword'$, the word obtained by labelling with $p_{\varphi}$ all the positions of $\aword$ satisfying $\varphi$, is \kl{strongly shadowy}.
\end{lemma}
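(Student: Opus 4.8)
The plan is to show directly that $\aword'$ satisfies the three defining conditions of a \kl{strongly shadowy} word, namely: (i) $\aword'$ is \kl{shadowy}; (ii) at every even position $\wht$ and $\shdw$ are among the most frequent letters in the past; and (iii) at every odd position $\wht$ is the most frequent letter in the past. The key point is that adding the single fresh letter $p_\varphi$ changes the labelling only of positions that already carry $\wht$ or $\shdw$ (this is exactly the content of the definition $\varphi = [\wht \land \sigma] \vee [\shdw \land \neg\sigmashdw]$), so no position changes its $\wht$/$\shdw$ status and condition (i) is inherited verbatim from the fact that $\aword$ is \kl{shadowy}.

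**Main step: $p_\varphi$ never becomes \kl{importunate}.** The only thing that could go wrong is that $p_\varphi$ — which did not occur in $\aword$ — now occurs so often that it overtakes $\wht$ or $\shdw$ in some prefix, violating (ii) or (iii). So the heart of the argument is a counting estimate showing that for every even $2i$ we have $\mysharp{<}{p_\varphi}{\aword',2i} \leq i = \mysharp{<}{\wht}{\aword',2i} = \mysharp{<}{\shdw}{\aword',2i}$. I would split $\mysharp{<}{p_\varphi}{\aword',2i}$ according to whether the $p_\varphi$-position is white or shadow: it equals $\mysharp{<}{\wht \land \sigma}{\aword,2i} + \mysharp{<}{\shdw \land \neg\sigmashdw}{\aword,2i}$. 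Now I use that $\aword \models \phitransfer{\sigma}{\sigmashdw}$, which by Exercise~\ref{ex:2} (condition~\ref{item:3:ex:2}) gives $\aword,p \models \sigma \Leftrightarrow \aword,p{+}1 \models \sigmashdw$ for every even $p$; consequently $\mysharp{<}{\wht \land \sigma}{\aword,2i} = \mysharp{<}{\shdw \land \sigmashdw}{\aword,2i}$. Substituting, $\mysharp{<}{p_\varphi}{\aword',2i} = \mysharp{<}{\shdw \land \sigmashdw}{\aword,2i} + \mysharp{<}{\shdw \land \neg\sigmashdw}{\aword,2i} = \mysharp{<}{\shdw}{\aword,2i} = i$, the last equality by \kl{shadowness}. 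So $p_\varphi$ appears in exactly $i$ of the first $2i$ positions — it never strictly exceeds the count of $\wht$ or of $\shdw$.

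**Wrapping up conditions (ii) and (iii).** Since $\aword$ is already \kl{strongly shadowy}, $\wht$ and $\shdw$ are the most frequent letters in every even prefix of $\aword$, and every letter of $\aword$ other than $\wht,\shdw$ has count at most $i$ in the prefix of length $2i$ (indeed at most $i$, since those letters live only on whites or only on shadows, as in the remark preceding Exercise~\ref{ex:psishadowy-ok-for-mfl}); adding $p_\varphi$, whose prefix-count we just bounded by $i$, does not disturb this, so (ii) holds for $\aword'$. For (iii), at an odd position $2i{+}1$ we have in $\aword$ that $\wht$ strictly dominates (count $i{+}1$ versus at most $i$ for everything else); the count of $p_\varphi$ in the prefix of length $2i{+}1$ is at most its count in the prefix of length $2i{+}2$, which is $i{+}1$, but more carefully it equals $\mysharp{<}{\wht\land\sigma}{\aword,2i{+}1}+\mysharp{<}{\shdw\land\neg\sigmashdw}{\aword,2i{+}1}$, and since among positions $0,\dots,2i$ there are $i{+}1$ whites and $i$ shadows, this is at most $i{+}1$, with the white contribution at most $i{+}1$; combining with the shadow transfer identity one gets it is at most $i < i{+}1$, so $\wht$ still strictly dominates. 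Hence (iii) holds and $\aword'$ is \kl{strongly shadowy}.

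**Expected main obstacle.** The only delicate point is the bookkeeping at odd positions for condition (iii): one must be slightly careful that $p_\varphi$'s count stays \emph{strictly} below that of $\wht$, not merely $\leq$, and this needs the transfer identity from Exercise~\ref{ex:2} applied at the right prefix length rather than a crude "at most $i{+}1$" bound. Everything else is a direct inheritance from $\aword$ being \kl{strongly shadowy} together with the observation that $p_\varphi$ only ever labels positions already carrying $\wht$ or $\shdw$.
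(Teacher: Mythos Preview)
Your approach and the paper's share the same core computation: at every even position $2i$ one has $\mysharp{<}{p_\varphi}{\aword',2i} = i$, which you derive from the transfer identity of Exercise~\ref{ex:2} and the paper derives equivalently by invoking the conjunct $\varphi_{(\heartsuit)}$ of $\phitransfer{\sigma}{\sigmashdw}$. The paper packages this as a short proof by contradiction through the notion of an \kl{importunate} letter: since for a shadowy word the only way strong shadowiness can fail is that some letter becomes \kl{importunate}, one need only inspect even prefixes, and your condition~(iii) never has to be checked separately.

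Your separate treatment of odd positions contains a slip. You assert that at $2i{+}1$ the count of $p_\varphi$ is at most $i$, strictly below $\wht$'s count of $i{+}1$. This is false: when $\aword,2i \models \sigma$, position $2i$ is a white whose shadow $2i{+}1$ lies outside the prefix, the transfer identity pairs only whites $0,2,\ldots,2i{-}2$ with shadows $1,3,\ldots,2i{-}1$, and one obtains $\mysharp{<}{p_\varphi}{\aword',2i{+}1} = i{+}1$, tying $p_\varphi$ with $\wht$. The slip is harmless, however: the definition of \kl{strongly shadowy} only asks that $\wht$ be \emph{among} the most frequent letters at odd positions, so a tie is permitted; and, as the paper's reduction via \kl{importunate} letters shows, any genuine violation at an odd position would already force a violation at the preceding even position. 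Your instinct that the odd case is the delicate point is right, but the resolution is not the strict bound you attempt --- it is that no strict bound is needed.
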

\begin{proof}
Ad absurdum, assume that $\aword'$ is not \kl{strongly shadowy}.
Since $\aword$ is \kl{strongly shadowy}, it implies that $p_{\varphi}$ is importunate, \ie there is some even prefix $\aword'_{\leq 2i}$ of $\aword'$ in which the number of occurrences of $p_{\varphi}$ is greater than~$i$.
More precisely, we have that $\mysharp{<}{p_{\varphi}}{\aword, 2i} = \mysharp{<}{\wht \land \sigma}{\aword, 2i} + \mysharp{<}{\shdw \land \neg \sigmashdw}{\aword, 2i} > i$.
But since $\aword' \models \phitransfer{\sigma}{\sigmashdw}$, we know that $\aword' \models \varphi_{(\heartsuit)}$, which implies $\aword', 2i \models \Half{[\wht \land \sigma] \vee [\shdw \land \neg \sigmashdw]}$.
So $\mysharp{<}{\wht \land \sigma}{\aword, 2i} + \mysharp{<}{\shdw \land \neg \sigmashdw}{\aword, 2i} = i$, contradicting the previous assumption.
Hence, $\aword'$ is \kl{strongly shadowy}.
\end{proof}
Hence, we obtain the correctness of $(\phitransfer{\sigma}{\sigmashdw})^{\textit{MFL}}$.
By applying the same strategy to other conjuncts of~$\psi_{\textit{enc-basics}}$ and  Fact~\ref{fact:run}, we obtain $\psibasicencmfl$ satisfying:
\begin{corollary}
The function $\run$ (taking as arguments the words satisfying $\psibasicencmfl$) is uniquely defined and returns words satisfying~P\ref{item:start} and~P\ref{item:state-consistency}.
Moreover the formulae $\psibasicencmfl$ and $\psi_{\textit{enc-basics}}$ are equi-satisfiable.
\end{corollary}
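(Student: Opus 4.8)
The plan is to read both assertions off the machinery of Sections~\ref{subsec:mfc} and~\ref{subsec:mfc-reduction}, viewing $\psibasicencmfl$ as the formula obtained from $\psibasicenc = \psitrulySigmaMinsky \land \psi_{P\ref{item:start}} \land \psi_{P\ref{item:state-consistency}}$ by replacing, inside $\psitrulySigmaMinsky$, the conjunct $\psishadowy$ with $\psishadowyMFL$ and each conjunct $\phitransfer{\sigma}{\sigmashdw}$ (for $\sigma \in \SigmaMinsky$) with $(\phitransfer{\sigma}{\sigmashdw})^{\textit{MFL}} = \dehalf{\varphi_{(\heartsuit)}} \land \varphi_{(\diamondsuit)}$, while $\psi_{P\ref{item:start}}$ and $\psi_{P\ref{item:state-consistency}}$ are left untouched (they contain no $\Half{}$). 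Writing $\lambda_\sigma := [\wht \land \sigma] \lor [\shdw \land \neg\sigmashdw]$ for the body of the $\Half{}$-subformula of $\varphi_{(\heartsuit)}$ and $p_{\lambda_\sigma}$ for the associated fresh proposition, I would prove two things: (i) every model of $\psibasicencmfl$ is a model of $\psibasicenc$, and (ii) if $\psibasicenc$ is satisfiable then so is $\psibasicencmfl$. Together (i) and (ii) give equi-satisfiability, and (i) composed with Fact~\ref{fact:run} gives the first assertion, since $\run$ is defined through the $(\leftrightsquigarrow)$-correspondence using $\supseteq$ and hence simply ignores the auxiliary propositions $p_{\lambda_\sigma}$.

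For (i), fix $\aword \models \psibasicencmfl$. By Exercise~\ref{ex:psishadowy-ok-for-mfl} the conjunct $\psishadowyMFL$ forces $\aword$ to be strongly shadowy, hence shadowy, hence $\aword \models \psishadowy$. For each $\sigma$, the conjunct $\dehalf{\varphi_{(\heartsuit)}}$ supplies $\aword \models \G(p_{\lambda_\sigma} \leftrightarrow \lambda_\sigma)$, so Lemma~\ref{lemma:equivalence-halfish2} applies (its hypotheses are met: $\varphi_{(\heartsuit)}$ has the admissible shape $\G(\wht \rightarrow \cdot)$ with an $\F{}$-free, non-nested $\Half{}$-body), yielding $\aword \models \varphi_{(\heartsuit)}$; together with $\aword \models \varphi_{(\diamondsuit)}$ this gives $\aword \models \phitransfer{\sigma}{\sigmashdw}$. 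Hence $\aword \models \psitrulySigmaMinsky$, and since $\psi_{P\ref{item:start}}$ and $\psi_{P\ref{item:state-consistency}}$ are literally the same conjuncts in both formulas, $\aword \models \psibasicenc$. Fact~\ref{fact:run} now applies to $\aword$, which is the first assertion of the corollary, and this inclusion is one half of equi-satisfiability.

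For (ii), I would start from $\aword \models \psibasicenc$ and build a model of $\psibasicencmfl$ by enrichment. Because $\aword$ is truly~$\SigmaMinsky$-shadowy, in every even prefix the letters $\wht$ and $\shdw$ each label exactly half the positions while every other letter labels at most that many (each $\sigma \in \SigmaMinsky$ occurs only at whites and each $\widetilde{\cdot}$-letter only at shadows), so $\aword$ has no importunate letter and is strongly shadowy. Now add the fresh propositions $p_{\lambda_\sigma}$ one $\sigma$ at a time, labelling $p_{\lambda_\sigma}$ exactly at the positions satisfying $\lambda_\sigma$; each step preserves strong shadowness by Lemma~\ref{lemma:corectness-dehalf}, whose hypotheses persist because the already-added propositions are mentioned neither by $\lambda_\sigma$ nor by $\phitransfer{\sigma}{\sigmashdw}$, and the new proposition is fresh. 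Call the result $\aword'$. Then $\aword' \models \psishadowyMFL$; the conjuncts $\G(p_{\lambda_\sigma} \leftrightarrow \lambda_\sigma)$, $\varphi_{(\diamondsuit)}$ and $\psi_{P\ref{item:start}} \land \psi_{P\ref{item:state-consistency}}$ all hold in $\aword'$ (the last three mention no fresh proposition, so their truth is inherited from $\aword$); and for $\dehalf{\varphi_{(\heartsuit)}}$ observe that $\aword \models \varphi_{(\heartsuit)}$ gives $\aword', 2p \models \Half{\lambda_\sigma}$, hence $\aword', 2p \models \Half{p_{\lambda_\sigma}}$, at every white $2p$, which by Lemma~\ref{lemma:equivalence-halfish} is equivalent over strongly shadowy words to $\aword', 2p \models \MostFreq(p_{\lambda_\sigma})$. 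Hence $\aword' \models \psibasicencmfl$.

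The only genuinely delicate point is that strong shadowness survives the \emph{simultaneous} addition of all the $p_{\lambda_\sigma}$: Lemma~\ref{lemma:corectness-dehalf} is stated for a single fresh proposition, so one has to iterate it and re-check at each step that the invariants of its statement still hold for the next $\sigma$. Everything else is bookkeeping --- tracking that $\dehalf{\cdot}$ touches only the $\Half{}$-bearing conjuncts, that the $p_{\lambda_\sigma}$ are pairwise distinct and disjoint from $\SigmaMinsky \cup \{\wht,\shdw\}$, and that Lemma~\ref{lemma:equivalence-halfish2} is invoked one conjunct $\varphi_{(\heartsuit)}$ at a time. Once these checks are in place, both assertions of the corollary follow.
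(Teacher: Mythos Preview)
Your proposal is correct and follows essentially the same approach as the paper, which derives the corollary in one line by ``applying the same strategy'' (Lemma~\ref{lemma:equivalence-halfish2} for the forward inclusion, Lemma~\ref{lemma:corectness-dehalf} for preservation of satisfiability) to all conjuncts of $\psibasicenc$ and then invoking Fact~\ref{fact:run}. Your write-up simply fills in the details the paper omits, in particular the iteration of Lemma~\ref{lemma:corectness-dehalf} over all $\sigma \in \SigmaMinsky$ and the observation that a model of $\psibasicenc$ using only the letters $\{\wht,\shdw\} \cup \SigmaMinsky \cup \widetilde{\SigmaMinsky}$ is already strongly shadowy.
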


Towards completing the undecidability proof we need to prepare the rewritings of the formulae $\psi_{P\ref{item:state-propagation}}$ and~$\psi_{P\ref{item:counter-consistency}}$.
For $\psi_{P\ref{item:state-propagation}}$ we proceed similarly to the previous case. 
We know that the models of $\psibasicencmfl \land \dehalf{\psi_{P\ref{item:state-propagation}}}$ satisfy~P\ref{item:state-propagation} (due to Lemma~\ref{lemma:equivalence-halfish2} they satisfy $\psi_{P\ref{item:state-propagation}}$ and hence, by Lemma~\ref{lemma:c3}, also P\ref{item:state-propagation}).
To observe the existence of such models, we show again that the satisfiability of $\psi_{P\ref{item:state-propagation}}$ is preserved by \kl{dehalfication}.
\begin{lemma}\label{lemma:mflc3}
Let $p_{q}$ be a fresh letter for $\varphi_q := [\wht \land \fromq{q}] \vee [\shdw \land \neg \widetilde{\toq{q}} ]$ indexed over $q \in Q \setminus \{q_0\}$.
Take $\aword$, a \kl{strongly shadowy} word satisfying $\aword \models \psibasicencmfl \land \psi_{P\ref{item:state-propagation}}$ without any occurrences of $p_{q}$.
Then $\aword'$, the word obtained by labelling with $p_{q}$ all the positions of $\aword$ satisfying $\varphi_q$, is \kl{strongly shadowy}.
\end{lemma}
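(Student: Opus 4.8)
The plan is to reuse the template of Lemma~\ref{lemma:corectness-dehalf}. The new twist is that $\psi_{P\ref{item:state-propagation}}$ guards each occurrence of $\Half{}$ (inside $\phiisequal{\fromq{q}}{\toq{q}}$) with the extra disjunct $\fromq{q}$, so a white position need not satisfy that $\Half{}$-formula at all; the bound on the number of occurrences of each $p_q$ will therefore have to come from the run structure rather than from a plain equicardinality statement. Recall that $\varphi_q = [\wht \land \fromq{q}] \lor [\shdw \land \neg\widetilde{\toq{q}}]$, so that $\aword'$ is obtained from $\aword$ by adding, for each $q \in Q \setminus \{q_0\}$, the fresh letter $p_q$ exactly on the $\varphi_q$-positions; in particular the truth of every $\varphi_q$ is the same in $\aword$ and $\aword'$, and $\mysharp{<}{p_q}{\aword',2i} = \mysharp{<}{\varphi_q}{\aword,2i}$.

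First I would reduce to a single inequality between counting terms. If $\aword'$ is not \kl{strongly shadowy} then, since $\aword$ is, some $p_q$ must be \kl{importunate}: there is an even prefix of length $2i$ with $\mysharp{<}{\varphi_q}{\aword,2i} > i$. So it is enough to prove that $\mysharp{<}{\varphi_q}{\aword,2i} \le i$ holds for every even position $2i$ of $\aword$. To evaluate the left-hand side I would split it over the disjoint white and shadow positions; using that $\aword$ is \kl{truly~$\SigmaMinsky$-shadowy}, hence satisfies $\phitransfer{\toq{q}}{\widetilde{\toq{q}}}$, each of the $i$ shadows before $2i$ carries $\neg\widetilde{\toq{q}}$ exactly when its white predecessor carries $\neg\toq{q}$, which gives
\[
  \mysharp{<}{\varphi_q}{\aword,2i} = \mysharp{<}{\wht \land \fromq{q}}{\aword,2i} + \bigl( i - \mysharp{<}{\wht \land \toq{q}}{\aword,2i} \bigr).
\]
Thus the claim reduces to $\mysharp{<}{\wht \land \fromq{q}}{\aword,2i} \le \mysharp{<}{\wht \land \toq{q}}{\aword,2i}$.

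To establish this last inequality I would bring in the run structure. Models of $\psibasicencmfl$ are models of $\psibasicenc$ (argued for each conjunct via Lemma~\ref{lemma:equivalence-halfish2}), so $\aword \models \psibasicenc \land \psi_{P\ref{item:state-propagation}}$ and Lemma~\ref{lemma:c3} applies: $\run(\aword)$ satisfies properties P\ref{item:start}--P\ref{item:state-propagation}. By P\ref{item:start} the first white carries $\fromq{q_0}$, hence not $\fromq{q}$ (as $q \neq q_0$); by P\ref{item:state-propagation} every later white carrying $\fromq{q}$ is the successor of a white carrying $\toq{q}$. The map sending such a white $2(j{+}1)$ to the white $2j$ is injective, which yields $\mysharp{<}{\wht \land \fromq{q}}{\aword,2i} \le \mysharp{<}{\wht \land \toq{q}}{\aword,2i}$, and combining with the displayed identity contradicts $\mysharp{<}{\varphi_q}{\aword,2i} > i$. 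Hence $\aword'$ is \kl{strongly shadowy}.

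I expect the main obstacle to be precisely this last step: isolating why a bare $\Half{}$-clause no longer suffices (so that Lemma~\ref{lemma:corectness-dehalf} cannot be reused verbatim) and replacing it by the pairing of $\fromq{q}$ with $\toq{q}$ guaranteed by P\ref{item:state-propagation} together with $q \neq q_0$. The remaining manipulations --- the white/shadow split and the use of $\phitransfer{\toq{q}}{\widetilde{\toq{q}}}$ --- are the same bookkeeping already carried out in Exercise~\ref{ex:3} and Lemma~\ref{lemma:corectness-dehalf}.
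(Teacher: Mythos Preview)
Your reduction to the inequality $\mysharp{<}{\wht \land \fromq{q}}{\aword,2i} \le \mysharp{<}{\wht \land \toq{q}}{\aword,2i}$ is exactly the paper's computation, and your argument is correct. The difference lies in how this inequality is established. The paper picks a \emph{minimal} even $2i$ at which some $p_q$ becomes importunate and does a case split on whether $\aword,2i \models \fromq{q}$: if not, $\psi_{P\ref{item:state-propagation}}$ forces $\phiisequal{\fromq{q}}{\toq{q}}$ at $2i$, contradicting $(\star)$ directly; if yes, minimality at $2i{-}2$ gives the non-strict inequality there, and since $2i{-}2$ carries $\toq{q}$ but not $\fromq{q}$ (by P\ref{item:state-propagation} and the Minsky-machine convention $q \neq q'$), the inequality propagates to $2i$. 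You instead prove the inequality for \emph{all} even positions in one shot, via the injection $2(j{+}1) \mapsto 2j$ from $\fromq{q}$-whites to $\toq{q}$-whites guaranteed by P\ref{item:state-propagation} and $q \neq q_0$. Your route avoids both the minimality bookkeeping and the case split; the paper's route has the minor advantage of invoking $\psi_{P\ref{item:state-propagation}}$ directly in one case rather than passing through Lemma~\ref{lemma:c3}, but this is cosmetic since you need Lemma~\ref{lemma:c3} anyway.
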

\begin{proof}
Ad absurdum, assume that $\aword'$ is not \kl{strongly shadowy}.
Since $\aword$ is \kl{strongly shadowy}, it implies that some letter $p_{q}$ is \kl{importunate}, \ie there is some even prefix $\aword'_{\leq 2i}$ of $\aword'$ in which the number of occurrences of $p_{q}$ is greater than~$i$.
Hence, take $q$ and $i$ that such a prefix is the shortest one.

Note that for $p_q$ to be \kl{importunate} means that the following inequality holds:
\[ \mysharp{<}{p_q}{\aword', 2i} = \mysharp{<}{\wht \land \fromq{q}}{\aword', 2i} + \mysharp{<}{\shdw \land \neg \widetilde{\toq{q}}}{\aword', 2i} = \mysharp{<}{\wht \land \fromq{q}}{\aword', 2i} + \mysharp{<}{\shdw \land \neg \toq{q}}{\aword', 2i} = \]
\[ = \mysharp{<}{\wht \land \fromq{q}}{\aword', 2i} + (i - \mysharp{<}{\wht \land \toq{q}}{\aword', 2i})   > i,\]
which is clearly equivalent to $(\star): \mysharp{<}{\wht \land \fromq{q}}{\aword', 2i} > \mysharp{<}{\wht \land \toq{q}}{\aword', 2i}$.

We consider two cases depending on the satisfaction of $\fromq{q}$:
\begin{itemize}
    \item $\aword', 2i \models \neg \fromq{q}$.
    Then, from the satisfaction of $\psi_{P\ref{item:state-propagation}}$ we know that $\aword', 2i \models \phiisequal{\fromq{q}}{\toq{q}}$. 
    It implies the equality  $\mysharp{<}{\wht \land \fromq{q}}{\aword', 2i} = \mysharp{<}{\wht \land \toq{q}}{\aword', 2i}$ that contradicts the inequality $(\star)$.

    \item $\aword', 2i \models \fromq{q}$.
    For inequality $(\star)$ to hold it is necessary for $i$ to be positive.
    From the satisfaction $\aword \models \psibasicencmfl \land \psi_{P\ref{item:state-propagation}}$ we know that P\ref{item:state-propagation} holds.
    It implies, by the definition of a run of Minsky machine, that $\aword', 2i{-}2 \models \fromq{q'} \land \toq{q}$ for some $q' \neq q$.
    Moreover, the word $\aword_{\leq 2i{-}2}'$ does not have importunate letters and hence, we know that the inequality $\mysharp{<}{\wht \land \fromq{q}}{\aword', 2i{-}2} \leq \mysharp{<}{\wht \land \toq{q}}{\aword', 2i{-}2}$ holds.
    Note that due to the satisfaction $\aword', 2i{-}2 \models \toq{q}$ we infer the following inequality contradicting $(\star):$ \[\mysharp{<}{\wht \land \fromq{q}}{\aword', 2i{-}2} = \mysharp{<}{\wht \land \fromq{q}}{\aword', 2i} \leq \mysharp{<}{\wht \land \toq{q}}{\aword', 2i{-}2} < \mysharp{<}{\wht \land \toq{q}}{\aword', 2i{-}2} + 1 =\mysharp{<}{\wht \land \toq{q}}{\aword', 2i}.\]
\end{itemize}
Hence, $\aword'$ is \kl{strongly shadowy}.
\end{proof}

From Lemma~\ref{lemma:c3}, Lemma~\ref{lemma:mflc3} and Lemma~\ref{lemma:equivalence-halfish2} we immediately conclude:
\begin{corollary}
If $\aword$ satisfies $\psibasicencmfl \wedge \dehalf{\psi_{P\ref{item:state-propagation}}}$, then $\run(\aword)$ 
satisfies P\ref{item:start}--P\ref{item:state-propagation}. 
Moreover the formulae~$\psibasicencmfl \wedge \dehalf{\psi_{P\ref{item:state-propagation}}}$ and~$\psibasicenc \wedge \psi_{P\ref{item:state-propagation}}$ are equi-satisfiable.
\end{corollary}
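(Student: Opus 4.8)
The plan is to read the corollary off the three lemmas just proved together with the corollary that was established for $\psibasicencmfl$; no new construction is needed, only a check that the hypotheses line up. For the first assertion, assume $\aword \models \psibasicencmfl \wedge \dehalf{\psi_{P\ref{item:state-propagation}}}$. From the corollary on $\psibasicencmfl$ we get that $\aword$ is \kl{truly~$\SigmaMinsky$-shadowy}, in particular \kl{strongly shadowy} and a model of $\psishadowyMFL$, and that $\run(\aword)$ is well defined and already satisfies P\ref{item:start}--P\ref{item:state-consistency}; so only P\ref{item:state-propagation} is at stake. Now $\psi_{P\ref{item:state-propagation}}$ has the form $\G(\wht \to \psi)$ with $\psi \deff \bigwedge_{q\in Q\setminus\{q_0\}}(\fromq{q} \lor \phiisequal{\fromq{q}}{\toq{q}})$ free of the $\F{}$ modality and of nested $\Half{}$, which is exactly the shape handled by the ``moreover'' part of Lemma~\ref{lemma:equivalence-halfish2}; since $\dehalf{\psi_{P\ref{item:state-propagation}}}$ also carries, as conjuncts, the equivalences $\G(p_\lambda \leftrightarrow \lambda)$ for every $\lambda$ with $\Half\lambda$ occurring in $\psi$, that lemma applies and gives $\aword \models \G(\wht \to \psi)$, i.e. $\aword \models \psi_{P\ref{item:state-propagation}}$. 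Because $\aword$ is \kl{truly~$\SigmaMinsky$-shadowy}, Exercise~\ref{ex:3} is valid on $\aword$, so the argument of Lemma~\ref{lemma:c3} carries over unchanged and yields that $\run(\aword)$ satisfies P\ref{item:state-propagation}, hence P\ref{item:start}--P\ref{item:state-propagation}.

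For the equi-satisfiability, the forward direction is almost immediate: applying Lemma~\ref{lemma:equivalence-halfish2} conjunct by conjunct to a model $\aword$ of $\psibasicencmfl \wedge \dehalf{\psi_{P\ref{item:state-propagation}}}$ turns it, after forgetting the auxiliary letters $p_\lambda$, into a model of $\psibasicenc \wedge \psi_{P\ref{item:state-propagation}}$. For the backward direction, take the canonical word $\aword$ produced as in the proof of Fact~\ref{fact:c5}: it is \kl{truly~$\SigmaMinsky$-shadowy}, hence \kl{strongly shadowy}, and satisfies $\psibasicenc \wedge \psi_{P\ref{item:state-propagation}}$. Label with a fresh letter $p_\lambda$ exactly the positions of $\aword$ satisfying $\lambda$, ranging over all the $\lambda$'s that \kl{dehalfication} introduces — those coming from the $\varphi_{(\heartsuit)}$-conjuncts hidden inside $\psibasicencmfl$, covered by Lemma~\ref{lemma:corectness-dehalf}, and those coming from $\psi_{P\ref{item:state-propagation}}$, covered by Lemma~\ref{lemma:mflc3}. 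These two lemmas guarantee that no $p_\lambda$ becomes \kl{importunate}, so the resulting word $\aword'$ is still \kl{strongly shadowy}; then on $\aword'$ we have $\MostFreq(p_\lambda) \leftrightarrow \Half(p_\lambda)$ at even positions by Lemma~\ref{lemma:equivalence-halfish} and $p_\lambda \leftrightarrow \lambda$ everywhere by construction, and since all the relevant $\G$-conjuncts only probe white (even) positions, $\aword' \models \psibasicencmfl \wedge \dehalf{\psi_{P\ref{item:state-propagation}}}$.

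The only genuinely delicate point — and the reason this statement is not a one-liner — is that \kl{dehalfication} is a priori one-directional: letting $p_\lambda$ track $\lambda$ may turn $p_\lambda$ into an \kl{importunate} letter, which would make $\psishadowyMFL$ unsatisfiable. The whole force of Lemmas~\ref{lemma:corectness-dehalf} and~\ref{lemma:mflc3} is to exclude this for the particular $\lambda$'s occurring here, so the present corollary is just the bookkeeping that glues them to Lemma~\ref{lemma:equivalence-halfish2} and Lemma~\ref{lemma:c3}; the main thing one has to be careful about is verifying the syntactic side-conditions of Lemma~\ref{lemma:equivalence-halfish2} (that $\psi_{P\ref{item:state-propagation}}$ is $\F{}$-free and $\Half{}$-nesting-free once the outer $\G(\wht\to\cdot)$ is stripped).
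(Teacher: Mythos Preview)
Your proposal is correct and follows the same route as the paper, which simply reads the corollary off Lemma~\ref{lemma:c3}, Lemma~\ref{lemma:mflc3} and Lemma~\ref{lemma:equivalence-halfish2}; you have just unpacked the bookkeeping in more detail. One small imprecision: invoking Fact~\ref{fact:c5} for the backward direction is slightly off, since that fact starts from a \emph{run} (so P\ref{item:counter-consistency} is assumed), whereas here you only have a model of $\psibasicenc \wedge \psi_{P\ref{item:state-propagation}}$ --- it is cleaner to simply strip all letters outside $\{\wht,\shdw\}\cup\SigmaMinsky\cup\widetilde{\SigmaMinsky}$ from an arbitrary model to obtain a \kl{strongly shadowy} one, and then proceed with your labelling argument.
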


The last formula to rewrite is $\psi_{P\ref{item:counter-consistency}}$.
We focus only on its first part, speaking about the first counter, \ie
\[\G ( \cfst{0} \to \Half{([\wht \land \ifst{{+}1}] \vee [\shdw \land \neg \widetilde{\ifst{{-}1}} ])} \land \G( \wht \to (\cfst{0} \leftrightarrow \neg\cfst{+}))\]
Note that this time we cannot simply dehalfise this formula: the letter responsible for the inner part of $\Halfname{}$ would necessarily be importunate -- consider an initial fragment of a run of $\minsky$ in which $\minsky$ increments its first counter without decrementing it.
Fortunately, we cannot say the same when the machine decrements the counter and hence, it suffices to express the equivalent (due to even length of shadowy models) statement $\psi'_{P\ref{item:counter-consistency}}$ as follows:\\
\[ \G ( \cfst{0} \to \Half{\neg([\wht \land \ifst{{+}1}] \vee [\shdw \land \neg \widetilde{\ifst{{-}1}} ])}  \land \G( \wht \to (\cfst{0} \leftrightarrow \neg\cfst{+})) \]
As we did before, we show that dehalfication of $\psi'_{P\ref{item:counter-consistency}}$ preserves satisfiability:
\begin{lemma} \label{lemma:corectness-mfl-counters}
Let $p_{\varphi}$ be a fresh letter for $\varphi := \neg([\wht \land \ifst{{+}1}] \vee [\shdw \land \neg \widetilde{\ifst{{-}1}} ])$.
Take $\aword$, a \kl{strongly shadowy} word satisfying $\aword \models \psibasicencmfl \wedge \dehalf{\psi_{P\ref{item:state-propagation}}} \land \psi'_{P\ref{item:counter-consistency}}$ without any occurrences of $p_{\varphi}$.
Then $\aword'$, the word obtained by labelling with $p_{\varphi}$ all the positions of $\aword$ satisfying $\varphi$, is \kl{strongly shadowy}.
\end{lemma}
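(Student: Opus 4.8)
The plan is to mimic the proof of Lemma~\ref{lemma:corectness-dehalf} and Lemma~\ref{lemma:mflc3}, arguing by contradiction: assume $\aword'$ is not \kl{strongly shadowy}. Since $\aword$ itself is \kl{strongly shadowy} and $\aword'$ differs from $\aword$ only by the fresh letter $p_{\varphi}$, the only way strong shadowness can fail is that $p_{\varphi}$ is \kl{importunate} in $\aword'$, i.e.\ there is an even prefix $\aword'_{\leq 2i}$ containing strictly more than $i$ occurrences of $p_{\varphi}$. The key point, and the reason $\psi'_{P\ref{item:counter-consistency}}$ uses the negated formula rather than the positive one, is that the count of $p_{\varphi}$ is now $\mysharp{<}{p_{\varphi}}{\aword,2i} = \mysharp{<}{\neg([\wht \land \ifst{{+}1}] \vee [\shdw \land \neg \widetilde{\ifst{{-}1}}])}{\aword,2i}$, and I want to show this is always at most $i$, which contradicts importunateness.

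First I would unfold what $p_{\varphi}$ counts on a \kl{strongly shadowy} word. Of the $2i$ positions before $2i$, exactly $i$ are white and $i$ are shadow. Among whites, $p_{\varphi}$ holds iff $\neg \ifst{{+}1}$; among shadows, $p_{\varphi}$ holds iff $\widetilde{\ifst{{-}1}}$. Using the \kl{truly~$\SigmaMinsky$-shadowy} structure (which $\aword$ has, since it models $\psibasicencmfl$), a shadow at position $2j{+}1$ satisfies $\widetilde{\ifst{{-}1}}$ iff the white at $2j$ satisfies $\ifst{{-}1}$. Hence $\mysharp{<}{p_{\varphi}}{\aword,2i} = (i - \mysharp{<}{\wht\land\ifst{{+}1}}{\aword,2i}) + \mysharp{<}{\wht\land\ifst{{-}1}}{\aword,2i}$. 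So $p_{\varphi}$ being \kl{importunate} at $2i$ is equivalent to the strict inequality $\mysharp{<}{\wht\land\ifst{{-}1}}{\aword,2i} > \mysharp{<}{\wht\land\ifst{{+}1}}{\aword,2i}$: more decrements than increments of the first counter among the whites before $2i$.

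Now I would derive a contradiction from this using the properties already available on $\aword$. Since $\aword \models \psibasicencmfl \wedge \dehalf{\psi_{P\ref{item:state-propagation}}}$, by the preceding corollary $\run(\aword)$ satisfies P\ref{item:start}--P\ref{item:state-propagation}, and moreover the last two conjuncts of $\psi_{P\ref{item:counter-consistency}}$ (which are plain $\LTLF$ conjuncts carried verbatim into $\psi'_{P\ref{item:counter-consistency}}$) guarantee the $\cfst{0}\leftrightarrow\neg\cfst{+}$ discipline. The partial sums of the first counter along a genuine run never go negative --- this is exactly the \kl{Minsky machine} requirement that a counter is decremented only when positive, which P\ref{item:state-consistency} enforces position-by-position via $\psi^1_{P\ref{item:state-consistency}}$ (a transition with $\counterf = -1$ can only be taken from a configuration labelled $\cfst{+}$). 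Since these local well-formedness facts about $\aword$'s white positions are already guaranteed by $\psibasicencmfl$ alone, a standard induction on $i$ shows $\mysharp{<}{\wht\land\ifst{{+}1}}{\aword,2i} \geq \mysharp{<}{\wht\land\ifst{{-}1}}{\aword,2i}$ for every even prefix, directly contradicting the strict inequality above. Hence $p_{\varphi}$ cannot be \kl{importunate}, so $\aword'$ is \kl{strongly shadowy}.

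The main obstacle is the induction in the last step: unlike the simpler $\varphi_{(\heartsuit)}$ case (where the count equals $i$ on the nose by the transfer property) or the $\fromq{\cdot}/\toq{\cdot}$ pairing case (which is essentially a matching argument), here one genuinely needs that a prefix of a Minsky run keeps the decrement count bounded by the increment count, and I must be careful to invoke only the nonnegativity that $\psibasicencmfl$ (via P\ref{item:state-consistency}, not via the yet-to-be-established P\ref{item:counter-consistency}) already secures --- otherwise the argument is circular, since $\psi'_{P\ref{item:counter-consistency}}$ is precisely the conjunct being dehalfised. I would make this explicit by noting that $\counterf = -1$ forces $\cfst{+}$ at that white, and $\cfst{+}$ forces (via the $\leftrightarrow$ conjunct) $\neg\cfst{0}$, and combining this with the already-known fact that, up to prefix $2i{-}2$, the letter $p_{\varphi}$ is not \kl{importunate} (the minimality-of-counterexample trick used in Lemma~\ref{lemma:mflc3}) to push the induction through cleanly.
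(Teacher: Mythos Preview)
Your setup and the reduction to the inequality $\mysharp{<}{\wht\land\ifst{{-}1}}{\aword,2i} > \mysharp{<}{\wht\land\ifst{{+}1}}{\aword,2i}$ match the paper exactly. For the contradiction the paper splits on $\cfst{0}$ versus $\cfst{+}$ at position $2i$ and appeals to $\psi'_{P\ref{item:counter-consistency}}$ in each case; you instead argue by induction on $i$, using that P\ref{item:state-consistency} makes $\ifst{{-}1}$ force the label $\cfst{+}$.

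The inductive step does not close. With $i$ minimal, the critical configuration is that at $2(i{-}1)$ increments and decrements are equal and $\aword,2(i{-}1)\models\ifst{{-}1}$; P\ref{item:state-consistency} then gives $\cfst{+}$ (hence $\neg\cfst{0}$) there, but that is no contradiction. What you would need is that equal counts \emph{force} $\cfst{0}$, i.e.\ the converse of the implication in $\psi'_{P\ref{item:counter-consistency}}$; as written that formula only has $\cfst{0}\to\Half{\varphi}$, and nothing in the hypotheses excludes a white position carrying $\cfst{+}$ while the running $\ifst{{+}1}$/$\ifst{{-}1}$ tallies coincide. Concretely, for a machine with $\delta(q_0,0,0)=(0,0,q_1)$ and $\delta(q_1,+,0)=(-1,0,q_2)$, the four-position truly $\SigmaMinsky$-shadowy word whose two whites carry $(\cfst{0},\ifst{0})$ and $(\cfst{+},\ifst{{-}1})$ satisfies all hypotheses of the lemma, yet $p_{\varphi}$ holds at three of the four positions. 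The paper's own case $\cfst{+}$ invokes $\psi'_{P\ref{item:counter-consistency}}$ for the very same inequality and runs into the same missing direction, so the gap is shared rather than specific to your route. The clean fix --- consistent with the paper's informal gloss and with the proof of Lemma~\ref{lemma:c4}, which already reads the connective as a biconditional --- is to replace $\to$ by $\leftrightarrow$ in the first conjuncts of $\psi_{P\ref{item:counter-consistency}}$ and $\psi'_{P\ref{item:counter-consistency}}$; with that strengthening your minimality argument goes through verbatim.
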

\begin{proof}
Ad absurdum, assume that $\aword'$ is not \kl{strongly shadowy}.
Since $\aword$ is \kl{strongly shadowy}, it implies that $p_{\varphi}$ is importunate, \ie there is some even prefix $\aword'_{\leq 2i}$ of $\aword'$ in which the number of occurrences of $p_{\varphi}$ is greater than~$i$.
It implies that $\mysharp{<}{p_{\varphi}}{\aword', 2i} > i$. 
We can calculate that:
\[
    \mysharp{<}{p_{\varphi}}{\aword', 2i} = 2i - \mysharp{<}{\wht \land \ifst{{+}1}}{\aword', 2i} - \mysharp{<}{\shdw \land \neg \widetilde{\ifst{{-}1}}}{\aword', 2i} =
\]
\[ 
2i - \mysharp{<}{\wht \land \ifst{{+}1}}{\aword', 2i} - (i -\mysharp{<}{\shdw \land \widetilde{\ifst{{-}1}}}{\aword', 2i}) = 
\]
\[ 
    i -  \mysharp{<}{\wht \land \ifst{{+}1}}{\aword', 2i} + \mysharp{<}{\wht \land \ifst{{-}1}}{\aword', 2i}) > i,
\]
which is equivalent to the following inequality:
\[
    \mysharp{<}{\wht \land \ifst{{-}1}}{\aword', 2i} > \mysharp{<}{\wht \land \ifst{{+}1}}{\aword', 2i}
\]
We consider two cases:
\begin{itemize}
    \item When $\aword', 2i \models \cfst{0}$.
    From the satisfaction of $\aword' \models \psi'_{P\ref{item:counter-consistency}}$ we know that $\mysharp{<}{\wht \land \ifst{{-}1}}{\aword', 2i}$ is equal to~$\mysharp{<}{\wht \land \ifst{{+}1}}{\aword', 2i}$, contradicting the previously obtained inequality.

    \item When $\aword', 2i \models \cfst{+}$, then by $\aword' \models \psi'_{P\ref{item:counter-consistency}}$ we know that $\mysharp{<}{\wht \land \ifst{{-}1}}{\aword', 2i}$ is smaller than or equal to~$\mysharp{<}{\wht \land \ifst{{+}1}}{\aword', 2i}$, leading again to contradiction. 
\end{itemize}
Hence, $\aword'$ is \kl{strongly shadowy}.
\end{proof}
Finally, let $(\psiminskyq)^{\textit{MFL}} \deffnoindent \psibasicencmfl \land \dehalf{\psi_{P\ref{item:state-propagation}}} 
\land \dehalf{\psi_{P\ref{item:counter-consistency}}} \land \F{\toq{\fancyq}}$.
From Lemma~\ref{lemma:c4}, Lemma~\ref{lemma:corectness-mfl-counters} and Lemma~\ref{lemma:equivalence-halfish2} we immediately conclude:
\begin{corollary}
If $\aword$ satisfies $(\psiminskyq)^{\textit{MFL}}$ then it satisfies P\ref{item:start}--P\ref{item:counter-consistency}. 
Moreover the formulae~$(\psiminskyq)^{\textit{MFL}}$ and~$\psiminskyq$ are equi-satisfiable.
\end{corollary}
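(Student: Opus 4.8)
The plan is to read both halves of the statement off the lemmas already in place, exactly as the sentence preceding the corollary advertises; I read the ``it'' of the statement as $\run(\aword)$, in line with the earlier corollaries. First I would unfold the two formulas, $(\psiminskyq)^{\textit{MFL}} = \psibasicencmfl \wedge \dehalf{\psi_{P\ref{item:state-propagation}}} \wedge \dehalf{\psi_{P\ref{item:counter-consistency}}} \wedge \F{\toq{\fancyq}}$ (the second dehalfication being that of the equivalent negated-body variant $\psi'_{P\ref{item:counter-consistency}}$) against $\psiminskyq = \psibasicenc \wedge \psi_{P\ref{item:state-propagation}} \wedge \psi_{P\ref{item:counter-consistency}} \wedge \F{\toq{\fancyq}}$, noting that the two agree up to replacing the $\Half{}$-conjuncts by their dehalfications. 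Then I would prove, in order, (i) that a model of $(\psiminskyq)^{\textit{MFL}}$ makes $\run(\aword)$ satisfy P\ref{item:start}--P\ref{item:counter-consistency}, and (ii) equi-satisfiability, whose easy implication follows from (i) and whose hard one is a model construction.

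\emph{First assertion.} Assume $\aword \models (\psiminskyq)^{\textit{MFL}}$. From $\aword \models \psibasicencmfl \wedge \dehalf{\psi_{P\ref{item:state-propagation}}}$, the corollary asserting that $\run(\aword)$ then satisfies P\ref{item:start}--P\ref{item:state-propagation} gives those three properties and, in addition, $\aword \models \psibasicenc \wedge \psi_{P\ref{item:state-propagation}}$. For P\ref{item:counter-consistency}, I would observe that the conjunct $\psibasicencmfl$ forces $\aword$ to be \kl{strongly shadowy} and \kl{truly~$\SigmaMinsky$-shadowy}, so every proposition of $\SigmaMinsky$ labels only white positions; hence each conjunct of $\psi'_{P\ref{item:counter-consistency}}$ can be rewritten over $\aword$ in the form $\G(\wht \rightarrow \chi)$ with $\chi$ free of the $\F{}$ modality and of nested $\Half{}$ (using that $\cfst{0}$ labels only whites), and Lemma~\ref{lemma:equivalence-halfish2} then yields $\aword \models \psi'_{P\ref{item:counter-consistency}}$. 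Since \kl{shadowy} words have even length, $\psi'_{P\ref{item:counter-consistency}}$ and $\psi_{P\ref{item:counter-consistency}}$ are equivalent over $\aword$, so $\aword \models \psiminsky$; Lemma~\ref{lemma:c4} then makes $\run(\aword)$ a run of $\minsky$, which by the definition of a run is precisely the assertion that P\ref{item:start}--P\ref{item:counter-consistency} hold.

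\emph{Equi-satisfiability.} One direction is immediate: by the paragraph above, every model of $(\psiminskyq)^{\textit{MFL}}$ satisfies $\psiminsky$ and the conjunct $\F{\toq{\fancyq}}$, hence $\psiminskyq$. For the converse, suppose $\psiminskyq$ is satisfiable; then by Lemma~\ref{lemma:c4} together with the $\F{\toq{\fancyq}}$ conjunct, $\minsky$ has a run $\mword$ reaching $\fancyq$. Starting from the \kl{strongly shadowy} word $\aword$ produced from $\mword$ by the proof of Fact~\ref{fact:c5} --- which satisfies $\psiminsky \wedge \F{\toq{\fancyq}}$ and carries no fresh proposition --- I would iterate the model constructions of the dehalfication lemmas, each step labelling with the new proposition exactly the positions that satisfy the relevant inner formula: add the $p_\varphi$'s for the $\varphi = [\wht \wedge \sigma] \vee [\shdw \wedge \neg \sigmashdw]$ with $\sigma \in \SigmaMinsky$, staying \kl{strongly shadowy} by Lemma~\ref{lemma:corectness-dehalf}, so the word now models $\psibasicencmfl$; add the $p_q$'s for $\varphi_q = [\wht \wedge \fromq{q}] \vee [\shdw \wedge \neg \widetilde{\toq{q}}]$ via Lemma~\ref{lemma:mflc3}, so it now also models $\dehalf{\psi_{P\ref{item:state-propagation}}}$; add the $p_\varphi$ for $\varphi = \neg([\wht \wedge \ifst{{+}1}] \vee [\shdw \wedge \neg \widetilde{\ifst{{-}1}}])$ via Lemma~\ref{lemma:corectness-mfl-counters}. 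Since the word stays \kl{strongly shadowy} throughout, all earlier conjuncts --- including the $\MostFreqname{}$-based ones --- remain true, each freshly introduced $\G(p_\lambda \leftrightarrow \lambda)$ holds by construction, and the conjunct $\F{\toq{\fancyq}}$ is untouched; so the final word models $(\psiminskyq)^{\textit{MFL}}$.

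The one genuinely delicate point is this model construction. Dehalfication trades $\Half{}$ for $\MostFreqname{}$, which emulates it faithfully only over \kl{strongly shadowy} words, so none of the freshly introduced propositions may ever become \kl{importunate}; keeping all of them tame is exactly what Lemmas~\ref{lemma:corectness-dehalf}, \ref{lemma:mflc3} and~\ref{lemma:corectness-mfl-counters} establish, and it is also the reason $\psi_{P\ref{item:counter-consistency}}$ must first be recast as the equivalent $\psi'_{P\ref{item:counter-consistency}}$: the proposition counting white $\ifst{{+}1}$-positions would be \kl{importunate} on any prefix along which $\minsky$ only increments its first counter, whereas the proposition for the complementary event is not. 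Apart from invoking the three lemmas in the right order --- checking each time that the current word is a \kl{strongly shadowy} model of the hypothesis the lemma requires (it still models $\psi_{P\ref{item:state-propagation}}$, resp.\ $\psi'_{P\ref{item:counter-consistency}}$, before the corresponding proposition is added) --- what remains is routine bookkeeping.
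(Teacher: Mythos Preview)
Your proposal is correct and follows essentially the same route the paper advertises (the paper gives no proof beyond ``From Lemma~\ref{lemma:c4}, Lemma~\ref{lemma:corectness-mfl-counters} and Lemma~\ref{lemma:equivalence-halfish2} we immediately conclude''); you simply spell out the details, correctly invoking the earlier corollary for P\ref{item:start}--P\ref{item:state-propagation}, Lemma~\ref{lemma:equivalence-halfish2} for P\ref{item:counter-consistency}, and the chain of strong-shadowness preservation lemmas for the reverse direction of equi-satisfiability. The only cosmetic omission is that the model construction must also introduce the analogous fresh proposition for the \emph{second} counter, but the paper itself treats only the first counter explicitly, so your level of detail matches.
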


Thus, by Theorem~\ref{thm:main} and the above corollary, we obtain the undecidability of $\LTLFMostFreq{}$.
Undecidability of the model-checking problem is concluded by virtually the same argument as in Section~\ref{subsec:model-checking}.
Hence:
\begin{thm}
The model-checking and the satisfiability problems for $\LTLFMostFreq$ are undecidable.
\end{thm}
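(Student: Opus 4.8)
The plan is to compose the ingredients already assembled in \cref{subsec:mfc,subsec:mfc-reduction} in exactly the way \cref{thm:main} was obtained for $\LTLFHalf$. We have produced an $\LTLFMostFreq$ formula $(\psiminskyq)^{\textit{MFL}}$ --- namely $\psibasicencmfl$ conjoined with $\dehalf{\psi_{P\ref{item:state-propagation}}}$, the dehalfication of the swapped counter formula $\psi'_{P\ref{item:counter-consistency}}$, and the reachability conjunct $\F{\toq{\fancyq}}$ --- and the corollary immediately preceding the theorem records that $(\psiminskyq)^{\textit{MFL}}$ and $\psiminskyq$ are equi-satisfiable. First I would invoke the correctness of the original $\LTLFHalf$ reduction: by \cref{lemma:c4} and \cref{fact:c5}, together with the conjunct $\F{\toq{\fancyq}}$, the formula $\psiminskyq$ has a model if and only if the fixed \kl{Minsky machine} $\minsky$ reaches the state $\fancyq$. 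Chaining this with equi-satisfiability yields that $(\psiminskyq)^{\textit{MFL}}$ has a model iff $\minsky$ reaches $\fancyq$.

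Since the pair $(\minsky,\fancyq)$ ranges over all \kl{Minsky machines} and all of their states, and $(\psiminskyq)^{\textit{MFL}}$ is obtained from $\minsky$ and $\fancyq$ by a purely syntactic, hence computable, construction --- the rewriting $\dehalf{\cdot}$ only introduces fresh atomic propositions $p_{\lambda}$ together with the conjuncts $\G(p_{\lambda} \leftrightarrow \lambda)$ --- the map $(\minsky,\fancyq)\mapsto (\psiminskyq)^{\textit{MFL}}$ is a many-one reduction from the (undecidable) reachability problem for \kl{Minsky machines} to the satisfiability problem for $\LTLFMostFreq$. This gives undecidability of satisfiability.

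For model-checking I would reuse the universal Kripke structure from \cref{subsec:model-checking}. Let $\Sigma \subseteq \APset$ be the finite set of atomic propositions occurring in $(\psiminskyq)^{\textit{MFL}}$ and take $\kripkep{\Sigma}$, whose set of traces is precisely $(2^{\Sigma})^{+}$. Then $(\psiminskyq)^{\textit{MFL}}$ has a model iff some trace of $\kripkep{\Sigma}$ satisfies it, so $(\minsky,\fancyq)\mapsto (\kripkep{\Sigma},(\psiminskyq)^{\textit{MFL}})$ reduces \kl{Minsky machine} reachability to model-checking of $\LTLFMostFreq$; hence model-checking is undecidable as well.

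No genuinely new obstacle arises at this point: the real work has already been discharged in \cref{lemma:equivalence-halfish,lemma:equivalence-halfish2,lemma:corectness-dehalf,lemma:mflc3,lemma:corectness-mfl-counters}, whose purpose is to ensure that replacing every $\Half{}$ subformula by a $\MostFreq{}$ predicate over a fresh letter keeps the intended models \kl{strongly shadowy}, so that no fresh letter becomes \kl{importunate} and destroys satisfaction of $\psishadowyMFL$. The one point I would re-examine while writing out the final argument is the asymmetry that forced the counter-consistency conjunct to be dehalfised in the negated form $\psi'_{P\ref{item:counter-consistency}}$ --- along an initial run prefix the letter $\ifst{{+}1}$ may legitimately dominate, whereas $\ifst{{-}1}$ cannot --- and that this substitution is sound is exactly \cref{lemma:corectness-mfl-counters}; the theorem merely records its consequence that the resulting $\LTLFMostFreq$ formula is equi-satisfiable with $\psiminskyq$.
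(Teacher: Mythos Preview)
Your proposal is correct and follows essentially the same route as the paper: invoke the equi-satisfiability corollary for $(\psiminskyq)^{\textit{MFL}}$ and $\psiminskyq$, appeal to the $\LTLFHalf$ reduction (the paper cites \cref{thm:main} directly where you unfold it via \cref{lemma:c4} and \cref{fact:c5}), and then reuse the universal Kripke structure $\kripkep{\Sigma}$ from \cref{subsec:model-checking} for model-checking. Your explicit remark that the construction $(\minsky,\fancyq)\mapsto (\psiminskyq)^{\textit{MFL}}$ is computable is a welcome addition that the paper leaves implicit.
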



\section{Decidable variants}

We have shown that $\LTLF$ with frequency operators lead to undecidability. 
Without the operators that can express $\Fname{}$ (\eg $\Fname{}$, $\Gname{}$ or $\U{}$), the decision problems become $\NP$-complete. 
Below we assume the standard semantics of $\LTL$ operator $\X{}$, \ie $\aword, i \models \X{\varphi}$ iff $i{+}1 < |\aword|$ and $\aword, i{+}1 \models \varphi$.

\begin{thm}
Model-checking and satisfiability problems for $\LTLp{\Xname{},\MostFreqname{},\MajPname{}}$ are $NP$-complete.
\end{thm}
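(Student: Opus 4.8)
The $\NP$-hardness half is routine and already holds for the fragment with only $\Xname{}$: propositional logic embeds into it by replacing each variable $x_i$ by a formula consisting of $i-1$ nested $\Xname{}$'s applied to a fixed letter $a$, so a propositional formula $\phi(x_1,\dots,x_n)$ is satisfiable iff its translation is; for model checking one additionally uses the two-state Kripke structure whose traces are exactly the nonempty finite words over $\{\{a\},\emptyset\}$ together with the same translated formula. It therefore remains to prove membership in $\NP$.

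The plan hinges on a \emph{bounded-dependence lemma}: for every formula $\varphi$ there is a number $D \le |\varphi|$ --- one may take $D$ to be the number of occurrences of $\Xname{}$ in $\varphi$ --- such that $\aword \models \varphi$ if and only if $\aword' \models \varphi$, where $\aword'$ is the prefix of $\aword$ of length $\min(|\aword|,\, D{+}1)$. I would prove this by tracking which pairs $(\textit{position},\textit{subformula})$ are consulted while evaluating $\aword, 0 \models \varphi$: each $\Xname{}$ advances the consulted position by exactly one, whereas $\MajP{\psi}$ or $\MostFreqname{}\,\sigma$ consulted at a position $i$ only inspects positions strictly below $i$ (and is vacuously true at position $0$). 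Concretely one assigns to every subformula $\chi$ a ``forward reach'' $r(\chi) \in \N$, with $r(a) = 0$, $r(\neg\chi) = r(\chi)$, $r(\chi_1 \wedge \chi_2) = \max(r(\chi_1), r(\chi_2))$, $r(\X{\chi}) = 1 + r(\chi)$, $r(\MajP{\chi}) = r(\chi)$ and $r(\MostFreqname{}\,\sigma) = 0$; an easy induction shows that evaluating $\chi$ at position $j$ never consults a position above $j + r(\chi)$, and plainly $r(\varphi)$ does not exceed the number of $\Xname{}$'s in $\varphi$, hence is at most $|\varphi|$. A second, equally routine induction over subformulas then shows that passing from $\aword$ to $\aword'$ changes the truth value of no consulted subformula at any consulted position: atoms and Boolean connectives are immediate, the counting operators only see the unchanged initial segment $\aword_0, \dots, \aword_{D-1}$, and an $\Xname{}$ is only ever applied at a consulted position $< D$, so its target still exists and is unchanged in $\aword'$.

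Two small-object properties follow immediately. For satisfiability: any model of $\varphi$ has a prefix of length at most $D{+}1 = O(|\varphi|)$ which is still a model. For model checking: every prefix of a trace of a Kripke structure is again a trace of it, so if some trace satisfies $\varphi$ then one of length at most $D{+}1$ does. The $\NP$ procedure is then to guess such a short object --- for model checking, a sequence $s_0 \in I, s_1, \dots, s_m$ with $m \le D$ and $(s_i, s_{i+1}) \in R$, which has polynomial size --- and to verify $\aword \models \varphi$ in polynomial time by computing, bottom-up over subformulas, the table of Boolean values $[\aword, i \models \psi]$ for all subformulas $\psi$ and all positions $i < |\aword|$; here $\MajPname{}$ and $\MostFreqname{}$ are handled by merely counting the polynomially many relevant positions, respectively letters, and comparing integers.

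I expect the bounded-dependence lemma to be the only genuine obstacle, and within it the delicate point is the behaviour at the right end of the word: one must verify that the choice of $D$ really does prevent an $\Xname{}$ from being applied at the last surviving position, and that the look-back triggered by arbitrarily deeply nested $\MajPname{}$ and $\MostFreqname{}$ operators never reaches beyond $D$. Everything else is bookkeeping, and the hardness reduction already settles $\NP$-completeness.
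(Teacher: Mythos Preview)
Your proposal is correct and follows essentially the same approach as the paper: both argue that the truth of $\varphi$ at position $0$ depends only on a prefix of length $O(|\varphi|)$ (the paper uses the $\Xname{}$-nesting depth, you use the number of $\Xname{}$-occurrences, which is a looser but equally polynomial bound), then guess such a prefix and evaluate. Your forward-reach function and the two inductions make explicit what the paper leaves as ``easy to see'', and your direct SAT reduction for hardness replaces the paper's citation to the known $\NP$-hardness of $\LTLp{\Xname{}}$, but neither difference is substantive.
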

\begin{proof}
Let $\varphi \in \LTLp{\Xname{},\MostFreqname{},\MajPname{}}$ be a formula of temporal depth $d$ (\ie the maximal number of nested $\X{}$ operators).
Then it is easy to see that $\aword \models \varphi$ iff $\aword_{\leq d{+}1} \models \varphi$, \ie that the only relevant part of $\aword$ required for the satisfaction of $\varphi$ are its first $d{+}1$ positions. 
Thus, to solve the satisfiability problem, it suffices to guess a word $\aword_{\leq d{+}1}$ (which is polynomial size) and to check  whether it satisfies $\varphi$ (which can be done in polynomial time by a naive evaluation algorithm).
Thus the satisfiability problem is in $\NP$.
For the model checking problem we proceed similarly. 
Note that it amounts to guessing a fragment of a trace of a Kripke structure (of length $\leq d{+}1$) and test if it satisfies $\varphi$, which again can be done in $\NP$.
The matching lower bounds are inherited from $\LTLp{\X{}}$~\cite{BaulandM0SSV11}. 
\end{proof}

The reason why the complexity of the logic $\LTLp{\Xname{},\MostFreqname{},\MajPname{}}$ is so low is that the truth of the formula depends only on some initial fragment of a trace. This is, however, a big restriction of the expressive power. 
Thus, we consider a different approach motivated by the work of~\cite{BokerCHK14}.

In the new setting, we allow to use arbitrary $\LTL$ formulae as well as percentage operators as long as the they are not mixed with $\G{}$. 
We introduce a logic $\LTLprocent$, which extends the classical $\LTL$~\cite{Pnueli77} with the percentage operators of the form~$\mathbf{P}_{\bowtie k\%}\varphi$ for any $\bowtie\; \in \set{\leq, <, =, >, \geq}$, $k \in \N$ and $\varphi \in \LTL$.
By way of example, the formula~$\mathbf{P}_{< 20\%}(a)$ is true at a position $p$ if less then $20\%$ of positions before $p$ satisfy $a$. The past majority operator is a special case of the percentage operator: $\MajP \; \equiv \; \P{}_{\geq 50\%}$.
Formally:
\begin{center}
\begin{tabular}[t]{lll}
$\aword,i \models \P{}_{\bowtie k\%}\varphi$ & \text{if} & $|\setof{j < i \colon {\aword, j} \models {\varphi}}| \bowtie \frac{k}{100} i$
\end{tabular}
\end{center}

To avoid undecidability, the percentage operators cannot appear under negation or be nested. 
Therefore, the syntax of $\LTLprocent$ is defined with the following grammar:
\[
    \varphi, \varphi' ::= 
    \psi_{\LTL{}} 
    \; \mid \;
     \varphi \lor \varphi' 
    \; \mid \;
     \varphi \land \varphi' 
      \; \mid \;
       \F (\psi_{\LTL{}} \land \mathbf{P}_{\bowtie k\%}\psi_{\LTL{}}'), 
\]
where $\psi_{\LTL{}}$, $\psi_{\LTL{}}'$ are (full) $\LTL{}$ formulae. 

\newcommand{\automaton}{\mathcal{A}}
\newcommand{\parikh}{\mathcal{P}}
\newcommand{\inequalities}{\mathcal{E}}

The main tool used in the decidability proof is the Parikh Automata~\cite{KlaedtkeR03}.
A~Parikh automaton~$\parikh = (\automaton, \inequalities)$ over the alphabet $\Sigma$ is composed of a finite-state automaton~$\automaton$ accepting words from $\Sigma^*$ and a semi-linear set $\inequalities$ given as a system of linear inequalities with integer coefficients, where the variables are $x_a$ for $a \in \Sigma$. 
We say that $\parikh$ accepts a word~$\aword$ if $\automaton$ accepts $\aword$ and the mapping assigning to each variable $x_a$ from $\inequalities$ the total number of positions of $\aword$ carrying the letter~$a$, is a solution to $\inequalities$. 
Checking non-emptiness of the language of $\parikh$ can be done in $\NP$~\cite{FigueiraL15}.

Now we proceed with our main decidability results. 
It is obtained by constructing an appropriate Parikh automaton recognising the models of an input $\LTLprocent$ formula.

\begin{thm}
The satisfiability problem for $\LTLprocent$ is decidable.
\end{thm}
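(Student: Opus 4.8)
The plan is to reduce the satisfiability of an $\LTLprocent$ formula to the non-emptiness problem for a Parikh automaton, which is decidable (in fact in $\NP$) by~\cite{FigueiraL15}. First I would put the input formula into a normal form: by distributing, any $\LTLprocent$ formula is equivalent to a positive Boolean combination of ``basic'' formulae, where a basic formula is either a full $\LTL$ formula $\psi_{\LTL}$ or a formula of the shape $\F(\psi_{\LTL} \land \mathbf{P}_{\bowtie k\%}\psi'_{\LTL})$. Since Parikh-recognisable languages over a fixed alphabet are closed under union and intersection (union is immediate; intersection is a product construction on the automata together with the union of the two inequality systems over disjoint copies of the variables), it suffices to build a Parikh automaton for each basic formula; the overall automaton is then obtained by the corresponding Boolean combination. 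The purely-$\LTL$ basic formulae are handled by the classical translation of $\LTL$ on finite words into finite automata, with an empty system of inequalities.

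The crux is therefore a single basic formula $\chi \deff \F(\psi_{\LTL} \land \mathbf{P}_{\bowtie k\%}\psi'_{\LTL})$. I would first guess (as part of the finite-state control) the position $p$ witnessing the $\F$, i.e. build an automaton $\automaton_\chi$ that reads the word, at some nondeterministically chosen position $p$ checks that $\psi_{\LTL}$ holds there (using the standard automaton for $\psi_{\LTL}$ run as a ``from position $p$'' check — one can take the product of a left-to-right automaton tracking enough information and, since $\LTL$ future modalities at $p$ depend only on the suffix $\aword_{\geq p}$, a component that verifies $\aword_{\geq p} \models \psi_{\LTL}$), and simultaneously verifies $\aword_{\geq p}\models \psi'_{\LTL}$ is tracked so we know at each position $j<p$ whether $\aword,j\models\psi'_{\LTL}$. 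To count the positions $j<p$ with $\aword,j\models\psi'_{\LTL}$, I introduce a fresh alphabet letter or, more cleanly, refine the alphabet: $\automaton_\chi$ works over the product alphabet $\Sigma \times \{0,1\}^2$ where the first extra bit marks ``this position is strictly before the guessed $p$'' and the second marks ``$\psi'_{\LTL}$ holds here (as evaluated on the original suffix)'', and the automaton enforces that these annotations are consistent and that exactly one position carries the $p$-marker. Then the Parikh condition $\mathbf{P}_{\bowtie k\%}\psi'_{\LTL}$ at $p$ becomes a single linear inequality $100\cdot x_{\text{(before-}p\text{, }\psi'\text{-true)}} \bowtie k \cdot x_{\text{(before-}p\text{)}}$ over the Parikh image; note $x_{\text{(before-}p\text{)}}$ is exactly the index of $p$, since every position before $p$ carries the before-$p$ marker. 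Taking $\inequalities_\chi$ to be this one inequality yields the Parikh automaton $\parikh_\chi=(\automaton_\chi,\inequalities_\chi)$ whose projection to $\Sigma^*$ is precisely the language of $\chi$.

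The main obstacle, and the point that needs care, is the interaction between the future $\LTL$ subformulae and the left-to-right counting: an $\LTL$ formula such as $\psi'_{\LTL}$ evaluated at a position $j$ depends on the whole suffix $\aword_{\geq j}$, so a single left-to-right pass cannot directly decide, at the moment it reads $j$, whether $\psi'_{\LTL}$ holds there. The standard fix is to guess the truth values: nondeterministically annotate each position with the set of subformulae of $\psi_{\LTL},\psi'_{\LTL}$ that hold there, use a left-to-right automaton to check local consistency of these annotations (the usual Hintikka/type-consistency conditions for $\LTL$ on finite words, including the ``until'' fulfilment obligations handled by the acceptance condition at the last position), and then read off the $\psi'_{\LTL}$-bit from the annotation. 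This is exactly the classical $\LTL$-to-NFA construction, just carried out on an enriched alphabet, so correctness is routine once the bookkeeping is set up; the only genuinely new ingredient over that classical construction is the single linear inequality encoding the percentage constraint, and the observation that ``number of positions before $p$'' is itself a Parikh coordinate. Finally, since non-emptiness of Parikh automata is decidable~\cite{FigueiraL15}, and we have reduced $\LTLprocent$-satisfiability to it, decidability of $\LTLprocent$ follows; one also gets a complexity bound (the automaton is of exponential size in $\varphi$ and Parikh non-emptiness is in $\NP$, giving a $\textsc{NExpTime}$ upper bound, which can likely be tightened but is beyond what the statement asks).
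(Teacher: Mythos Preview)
Your proposal is correct and follows essentially the same approach as the paper: both reduce satisfiability to non-emptiness of a Parikh automaton by annotating positions with witness markers and subformula truth values, then encoding each percentage constraint $\mathbf{P}_{\bowtie k\%}\psi'$ as a single linear inequality of the form $100\cdot x_{s} \bowtie k\cdot x_{b}$. The only cosmetic difference is that the paper first passes to DNF and handles one disjunct at a time via fresh atomic propositions $w_i,b_i,s_i$ and a single monolithic $\LTL$-to-automaton translation, whereas you keep the positive Boolean structure and combine per-basic-formula Parikh automata using closure under union and intersection; both routes land on the same $\NExpTime$ upper bound.
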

\begin{proof}
Let $\varphi \in \LTLprocent$. 
By turning $\varphi$ into a DNF, we can focus on checking satisfiability of some of its conjuncts.
Hence, w.l.o.g. we assume that $\varphi = \varphi_0 \wedge \bigwedge_{i=1}^{n} \varphi_i$, where $\varphi_0$ is in $\LTL$ and all $\varphi_i$ have the form $ \F (\psi_{\LTL{}}^{i,1} \land \mathbf{P}_{\bowtie k_i\%}\psi_{\LTL{}}^{i,2})$ for some $\LTL$ formulae $\psi_{\LTL{}}^{i,1}$ and~$\psi_{\LTL{}}^{i,2}$. 
Observe that a word $\aword$ is a model of $\varphi$ iff it satisfies~$\varphi_0$ and for each conjunct $\varphi_i$ we can pick a witness position $p_i$ from $\aword$ such that $\aword, p_i \models \psi_{\LTL{}}^{i,1} \land \mathbf{P}_{\bowtie k_i\%}\psi_{\LTL{}}^{i,2}$.
Moreover, the percentage constraints inside such formulae speak only about the prefix $\aword_{< p_i}$. 
Thus, knowing the position~$p_i$ and the number of positions before $p_i$ satisfying $\psi_{\LTL{}}^{i,2}$, the percentage constraint inside $\varphi_i$ can be imposed globally rather than locally. 
It suggests the use of Parikh automata: the $\LTL$ part of $\varphi$ can be checked by the appropriate automaton $\automaton$ (due to the correspondence that for an $\LTL$ formula over finite words one can build a finite-state automaton recognising the models of such a formula~\cite{GiacomoV15}) and the global constraints, speaking about the satisfaction of percentage operators, can be ensured with a set of linear inequalities~$\inequalities$. 

Our plan is as follows: we decorate the intended models $\aword$ with additional information on witnesses, such that the witness position $p_i$ for $\varphi_i$ will be labelled by $w_i$ (and there will be a unique such position in a model), all positions before $p_i$ will be labelled by $b_i$ and, among them, we distinguish with a letter $s_i$ some special positions, \ie those satisfying $\psi_{\LTL{}}^{i,2}$. More formally, for each $\varphi_i$ we produce an $\LTL$ formula $\varphi_i'$ according to the following rules:
\begin{itemize}
    \item there is a unique position $p_i$ such that $\aword, p_i \models w_i$ (selecting a witness for $\varphi_i$),
    \item for all $j < p_i$ we have $\aword, j \models b_i$ (so the positions before $p_i$ are labelled with $b_i$),
    \item $\aword \models \G(s_i \rightarrow [b_i \wedge \psi_{\LTL{}}^{i,2}])$ (distribution of the special positions among $b_i$) and
    \item $\aword, p_i \models \psi_{\LTL{}}^{i,1}$ (a precondition for $\varphi_i$).
\end{itemize}
Let $\varphi' \deffnoindent \varphi_0 \wedge \bigwedge_{i=1}^{n} \varphi_i' \wedge \bigwedge_{i=1}^{n} \F(p_i \wedge \mathbf{P}_{\bowtie k_i\%} s_i)$.
Note that $\aword \models \varphi'$ implies $\aword \models \varphi$. Moreover, any model $\aword \models \varphi$ can be labelled with letters $b_i, s_i, w_i$ such that the decorated word satisfies~$\varphi'$.
Let $\varphi'' \deffnoindent \varphi_0 \wedge \bigwedge_{i=1}^{n} \varphi_i'$ and let $\inequalities$ be the system of $n$ inequalities with $\inequalities_i = 100 \cdot x_{b_i} \bowtie k_i \cdot x_{s_i}$. 
Now observe that any model of $\varphi'$ satisfies $\inequalities$ (i.e. the value assigned to $x_a$ is the total number of positions labelled with a), due to the satisfaction of counting operators, and vice versa: every word~$\aword \models \varphi''$ satisfying $\inequalities$ is a model of $\varphi''$. 
It gives us a sufficient characterisation of models of~$\varphi$.
Let $\automaton$ be a finite automaton recognising the models of $\varphi''$, then a Parikh automaton~$\parikh = (\automaton, \inequalities)$, as we already discussed, is non-empty if and only if $\varphi$ has a model. 
Since checking non-emptiness of $\parikh$ is decidable, we can conclude that $\LTLprocent$ is decidable.
\end{proof}
A rough complexity analysis yields an $\NExpTime$ upper bound on the problem: the automaton $\parikh$ that we constructed is exponential in $\varphi$ (translating $\varphi$ to DNF does not increase the complexity since we only guess one conjunct, which is of polynomial size in $\varphi$).
Moreover, checking non-emptiness can be done non-deterministically in time polynomial in the size of the automaton. 
Thus, the problem is decidable in $\NExpTime$.
The bound is not optimal: we conjuncture that the problem is $\PSpace$-complete. 
We believe that by employing techniques similar to~\cite{BokerCHK14}, one can construct~$\parikh$ and check its non-emptiness on the fly, which should result in the $\PSpace{}$ upper bound.

\newcommand{\kripke}{\mathcal{K}}
For the model-checking problem, we observe that determining whether some trace of a Kripke structure $\kripke=(S,I,R,l)$ satisfies $\varphi$ is equivalent to checking the satisfiability of formula~$\varphi_\kripke \land \varphi$, where $\varphi_\kripke$ is a formula describing all the traces of $\kripke$.
Such a formula can be constructed in a standard manner. 
For simplicity, we treat~$S$ as a set of auxiliary letters, and consider the conjunction of (1) \( \bigvee_{s \in I}s \), (2) \( \G( \X \top \rightarrow \bigvee_{(s, s') \in R} (s \land \X s'))\) and (3) \( \bigwedge_{s \in S} \G (s \rightarrow \bigwedge_{p \in \ell(s)} p) \), expressing that the trace starts with an initial state, consecutive positions describe consecutive states and that the trace is labelled by the appropriate letters. 
Therefore, the model-checking problem can be reduced in polynomial time to the satisfiability problem.
\begin{corollary}
The model-checking problem for $\LTLprocent$ is decidable.
\end{corollary}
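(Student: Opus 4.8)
The plan is to reduce model-checking to the satisfiability problem for $\LTLprocent$, which the previous theorem shows to be decidable. Given a Kripke structure $\kripke = (S, I, R, \ell)$, the first step is to construct a \emph{plain} $\LTL$ formula $\varphi_\kripke$ whose models, read over a suitably enlarged alphabet, are exactly the traces of $\kripke$. Following the textbook construction, I would treat the states $S$ as fresh auxiliary letters disjoint from $\APset$, and let $\varphi_\kripke$ be the conjunction of $\bigvee_{s \in I} s$ (the run starts in an initial state), $\G(\X\top \rightarrow \bigvee_{(s,s') \in R}(s \land \X s'))$ (consecutive positions are linked by $R$), $\bigwedge_{s \in S}\G(s \rightarrow \bigwedge_{p \in \ell(s)} p)$ (each position is labelled consistently with $\ell$), together with a standard $\LTL$ conjunct forcing each position to carry exactly one state letter. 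This is routine and I would not belabour it.

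The second step is the observation that $\varphi_\kripke$ is an $\LTL$ formula, hence a legal $\psi_{\LTL}$ in the grammar of $\LTLprocent$; consequently $\varphi_\kripke \land \varphi$ is again an $\LTLprocent$ formula whenever $\varphi$ is, since conjunction with a pure $\LTL$ formula is permitted by the grammar and introduces neither a negation nor a nesting of percentage operators. The third step is the correctness of the reduction: a word over $\APset \cup S$ satisfies $\varphi_\kripke$ iff, after projecting away the $S$-labels, it is a trace of $\kripke$; and because $\varphi$ mentions no letter of $S$, some trace of $\kripke$ satisfies $\varphi$ iff $\varphi_\kripke \land \varphi$ is satisfiable. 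Invoking decidability of $\LTLprocent$-satisfiability then concludes the proof, and the reduction is plainly polynomial-time as $\varphi_\kripke$ has size linear in $|\kripke|$.

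I do not expect a genuine obstacle here; the only two points needing a line of care are (i) verifying that $\varphi_\kripke \land \varphi$ stays inside the $\LTLprocent$ fragment — which holds precisely because the extra conjunct is plain $\LTL$ and the fragment is closed under conjunction with such formulae — and (ii) keeping the auxiliary state letters fresh, so that projecting a model of $\varphi_\kripke \land \varphi$ back onto $\APset$ yields a genuine trace of $\kripke$ while the percentage operators occurring in $\varphi$ are unaffected by the added letters.
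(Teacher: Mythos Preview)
Your proposal is correct and follows essentially the same route as the paper: reduce model-checking to satisfiability by conjoining the input formula with a plain $\LTL$ description $\varphi_{\kripke}$ of the traces of $\kripke$, built from exactly the three conjuncts you list (the paper omits your uniqueness-of-state-letter conjunct, but that is a minor sharpening). Your extra remarks on closure of the fragment under conjunction with pure $\LTL$ and on freshness of the state letters are reasonable points of care that the paper leaves implicit.
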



\section{Two-Variable First-Order Logic with Majority Quantifier} \label{subsec:fo2maj}

The \emph{Two-Variable First-Order Logic on words}, denoted here with $\FOt[<]$, is a robust fragment of First-Order Logic~$\FO$ interpreted on finite words. 
It involves quantification over variables $x$ and~$y$ (ranging over the words' positions) and it admits a linear order predicate $<$ (interpreted as a natural order on positions) and the equality predicate $=$.
Henceforth we assume the usual semantics of $\FOt[<]$ (\cf~\cite{EtessamiVW02}).

In this section, we investigate the logic $\FOtMaj[<]$, namely the extension of $\FOt[<]$ with the so-called \emph{Majority quantifier} $\Maj$. 
Such quantifier was intensively studied due to its close connection with circuit complexity and algebra, see \eg~\cite{Krebs2008, BehleK11, BehleKR09}.
Intuitively, the formula~$\Maj{x}.\varphi$ specifies that at least half of all the positions in a model, after substituting $x$ with them, satisfy~$\varphi$. 
Formally $\aword \models \Maj{x}.\varphi$ holds, if and only if $\frac{|\aword|}{2} \leq |\{ p \; \mid \; \aword, p \models \varphi[x {/} p] \}|$.
We stress that the formula $\Maj{x}.\varphi$ may contain free occurrences of the variable $y$.

Note that the Majority quantifier shares similarities to the $\MajPname{}$ operator, but in contrast to~$\MajPname{}$, the $\Maj$ quantifier counts \emph{globally}. 
We take advantage of such similarities and by reusing the technique developed in the previous sections, we show that the satisfiability problem for $\FOtMaj[<]$ is also undecidable. 
We stress that our result significantly sharpens an existing undecidability result for $\FO$ with Majority from~\cite{Lange04} (since in our case the number of variables is limited) as well as for $\FOt[<, \succ]$ with Presburger Arithmetics from~\cite{LodayaS17} (since our counting mechanism is limited and the successor relation $\succ$ is disallowed).

\subsection{Proof plan}
There are three possible approaches to proving the undecidability of $\FOtMaj[<]$. 
The first one is to reproduce all the results for $\LTLFMajP$, which is rather uninspiring. 
The second one is to define a translation from $\LTLFMajP$ to $\FOtMaj[<]$ that produces an equisatisfiable formula. 
This is possible, but because of models of odd length, it involves a lot of case study.
Here we present a third approach, which, we believe, gives the best insight: we show a translation from $\LTLFMajP$ to $\FOtMaj[<]$  that works for $\LTLFMajP$ formulae whose all models are \kl{shadowy}. 
Since we only use such models in the undecidability proof of $\LTLFMajP$, this shows the undecidability of~$\FOtMaj[<]$.

\subsection{Shadowy models}
We first focus on defining \kl{shadowy} words in $\FOtMaj[<]$.
Before we start, let us introduce a bunch of useful macros in order to simplify the forthcoming formulae. 
Their names coincide with their intuitive meaning and their~semantics.
\begin{itemize}
\item $\HalfQ{x}.\varphi \deff \Maj{x}.\varphi \wedge \Maj{x}.\neg \varphi$,
\item $\first(x) \deff \neg \exists{y} \; y<x, \; \second(x) \deff 
\exists{y} \; y<x \wedge \forall{y} \; y<x \rightarrow \first(y)$, 
\item $\last(x) \deff \neg \exists{y} \; y>x, \; \sectolast(x) \deff \exists{y} \; 
y>x \wedge \forall{y} \; y>x \rightarrow \last(y)$
\end{itemize}

The last macro ``\emph{uniquely distributes}'' letters from a finite set $\Sigma$ among the model, \ie it ensures that each position is labelled with \emph{exactly one} $\sigma$ from $\Sigma$.
\[
\distr_\Sigma \deff \forall{x} \; \bigvee_{\sigma \in \Sigma} \sigma(x) \wedge 
\bigwedge_{\sigma, \sigma' \in \Sigma, \sigma \neq \sigma'} 
\big( \neg \sigma(x) \vee \neg \sigma'(x) \big)
\]

\begin{lemma} \label{lemma:fo2-shadowy}
There is an $\FOtMaj[<]$ formula $\psishadowyfo$ defining \kl{shadowy} words.
\end{lemma}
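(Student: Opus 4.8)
The plan is to mimic the solution of Exercise~\ref{ex:1}. The base constraints transcribe directly: take
\[
\phibase \deff \distr_{\{\wht,\shdw\}} \;\wedge\; \forall x\,\big(\first(x) \to \wht(x)\big) \;\wedge\; \forall x\,\big(\wht(x) \to \exists y\,(y > x \wedge \shdw(y))\big),
\]
the obvious $\FOtMaj[<]$ counterpart of $\phiinit$: every position carries exactly one of $\wht,\shdw$, position $0$ is white, and every white has a shadow somewhere to its right. The only genuine work is in replacing the conjunct $\phiodd$, because $\Half{}$ counts inside the prefix of the current position whereas the Majority quantifier $\Maj$ counts over the whole model; the resolution is described below. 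The final formula will be
\[
\psishadowyfo \deff \phibase \;\wedge\; \HalfQ{x}.\wht(x) \;\wedge\; \forall y\,\Big(\wht(y) \leftrightarrow \HalfQ{x}.\big[(x < y \wedge \wht(x)) \vee (x \geq y \wedge \shdw(x))\big]\Big),
\]
where $x \geq y$ abbreviates $\neg(x<y)$.

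The conjunct $\HalfQ{x}.\wht(x)$ plays the role that ``$\Half{}$ can hold only at even positions'' played in Exercise~\ref{ex:1}: by definition of $\HalfQ$ it is satisfiable only if $|\aword|$ is even, say $|\aword| = 2m$, and then, since $\distr_{\{\wht,\shdw\}}$ lets us read $\shdw$ as $\neg\wht$, the word has exactly $m$ white and $m$ shadow positions. Now fix a position $y$ of index $i$. Saying ``exactly half of the $i$ positions strictly before $y$ are white'' is equivalent to $|\{x < y : \wht(x)\}| = |\{x < y : \shdw(x)\}|$; adding $|\{x \geq y : \shdw(x)\}|$ to both sides turns the left-hand side into $|\{x : (x < y \wedge \wht(x)) \vee (x \geq y \wedge \shdw(x))\}|$ and the right-hand side into $|\{x : \shdw(x)\}| = m = |\aword|/2$. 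Hence, over words with exactly $m$ shadows, ``half of the prefix of $y$ is white'' is exactly $\HalfQ{x}.[(x < y \wedge \wht(x)) \vee (x \geq y \wedge \shdw(x))]$, i.e. the matrix appearing in $\psishadowyfo$. This ``complement the suffix'' rewriting is what lets the global quantifier $\Maj$ simulate the prefix-bounded $\Half{}$, and it is the only step with any content.

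Correctness then follows along the lines of Exercise~\ref{ex:1}. For soundness, a shadowy word of length $2m$ has $m$ whites and $m$ shadows (so $\HalfQ{x}.\wht(x)$ holds) and, before the position of index $i$, has $\lceil i/2\rceil$ whites and $\lfloor i/2\rfloor$ shadows, so the equivalence above makes the big biconditional hold at every position; together with $\phibase$ this gives $\aword \models \psishadowyfo$. For completeness, assume $\aword \models \psishadowyfo$; then $|\aword| = 2m$ with $m$ whites and $m$ shadows, and we prove by induction on the index $i$ that the position of index $i$ is white iff $i$ is even: granting the claim for all indices below $i$, the set $\{x : (x < y \wedge \wht(x)) \vee (x \geq y \wedge \shdw(x))\}$ for the position $y$ of index $i$ has size $\lceil i/2\rceil + (m - \lfloor i/2\rfloor) = m + (\lceil i/2\rceil - \lfloor i/2\rfloor)$, which equals $m$ iff $i$ is even (the base case $i=0$ is included, this set then being all of $\{x : \shdw(x)\}$), so the biconditional pins down $\wht(y)$ correctly; since $|\aword|$ is even, $\aword$ is therefore shadowy. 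The main obstacle is precisely the local-versus-global counting mismatch, dealt with by the rewriting in the previous paragraph; what remains is bookkeeping — checking that $\psishadowyfo$ stays within two variables (the $\Maj{x}$ block lies under $\forall y$, with a matrix mentioning only $x$ and $y$) and noting that it is the conjunct $\HalfQ{x}.\wht(x)$ that excludes the spurious odd-length, all-shadow models.
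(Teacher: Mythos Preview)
Your proof is correct, and the core idea coincides with the paper's: both convert the prefix-bounded ``half of the positions before $y$ are white'' into a global $\HalfQ$ by counting whites strictly before $y$ together with shadows from $y$ onward, so that the total is $|\aword|/2$ exactly on the intended positions. The packaging differs slightly: the paper splits this into two implications $\phiforbidwht$ and $\phiforbidshdw$ (one for each colour) and proves completeness by a direct contradiction---if positions $p$ and $p{+}1$ were both white, applying $\phiforbidwht$ to each yields $w+s=\tfrac{|\aword|}{2}$ and $(w{+}1)+s=\tfrac{|\aword|}{2}$---whereas you use a single biconditional and argue by induction on the index, faithfully replaying Exercise~\ref{ex:1}. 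Your induction is sound (the key point, that the number of shadows at or after index $i$ is $m-\lfloor i/2\rfloor$, follows from the inductive hypothesis on positions below $i$ together with the global count $m$, not from any knowledge of the suffix), so both routes reach the same destination with comparable effort. Note that with your biconditional the extra clause ``every white has a later shadow'' in your $\phibase$ is in fact redundant; the paper omits it.
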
 
\begin{proof}
Let $\phibase$ be a formula defining the language of all (non-empty) words, where the letters $\wht$ and $\shdw$ label disjoint positions in the way that the first position satisfies $\wht$ and the total number of $\shdw$ and $\wht$ coincide. 
It can be written, \eg with 
$
\distr_{\{ \wht, \shdw \}} \wedge
\exists{x} (\first(x) \wedge \wht(x)) \wedge \HalfQ{x}.\wht(x) \wedge \HalfQ{x}.\shdw(x)
$.
To define \kl{shadowy} words, it would be sufficient to specify that no neighbouring
positions carry the same letter among $\set{\wht, \shdw}$. 
This can be done with, rather complicated at the first glance, formulae:
\begin{equation*}
\begin{aligned}
\phiforbidwht(x) \deff \wht(x) \rightarrow
\HalfQ{y}.\left( [y<x \wedge \wht(y)] \vee [x < y \wedge \shdw(y)] \right),
\end{aligned}
\end{equation*}
\begin{equation*}
\begin{aligned}
\phiforbidshdw(x) \deff \shdw(x) \rightarrow
\HalfQ{y}.\left([(y < x \vee x = y) \wedge \shdw(y)] \vee [x < y \wedge \wht(y)]\right).
\end{aligned}
\end{equation*}
Finally, let $\psishadowyfo \deff \phibase \wedge  \forall{x}.\left(\phiforbidwht(x) \wedge \phiforbidshdw(x)\right)$.

Showing that \kl{shadowness} implies the satisfaction of $\psishadowyfo$ can be done by routine induction. 
For the opposite direction, take $\aword \models \psishadowyfo$.
Since $\aword \models \phibase$ the only possibility for $\aword$ to not be \kl{shadowy} is to have two consecutive positions $p,p{+}1$ carrying the same letter. 
W.l.o.g assume they are both white. 
Let $w$ be the number of white positions to the left of $p$ and let $s$ be the number of shadows to the right of $p$.
By applying $\phiforbidwht$ to~$p$ we infer that $w + s = \frac{1}{2}|\aword|$.
On the other hand, by applying $\phiforbidwht$ to $p{+}1$ it follows that $(w{+}1){+}s = \frac{1}{2}|\aword|$, which contradicts the previous equation. 
Hence, $\aword$ is \kl{shadowy}. 
\end{proof}

\subsection{Translation}

It is a classical result from~\cite{EtessamiVW02} that~$\FOt[<]$ can express $\LTLF$.
We define a translation $\fottr{v}{\varphi}$ from $\LTLFMajP$ to $\FOtMaj[<]$, parametrised by a variable $v$ (where~$v$ is either $x$ or $y$ and $\bar{v}$ denotes the different variable from $v$), inductively. 
We write $v \leq \bar{v}$ rather than $v < \bar{v} \lor v = \bar{v}$ for simplicity.
For $\LTLF$ cases, we follow~\cite{EtessamiVW02}:
\begin{itemize}
\item $\fottr{v}{a} \deff  a(v)$, for a fresh unary predicate $a$ for each $a \in \APset$,
\item $\fottr{v}{\neg \varphi} \deff \neg \fottr{v}{\varphi}$, 
\item $\fottr{v}{\varphi \land \varphi'} \deff \fottr{v}{\varphi} \land \fottr{v}{\varphi'}$,
\item $\fottr{v}{\F\varphi} \deff \exists{\bar{v}} \; (v \leq \bar{v}) \wedge \fottr{\bar{v}}{\varphi}$
\item $\fottr{v}{\MajP{\varphi}} \deff \Maj{\bar{v}} ((\bar{v}<v \land \fottr{\bar{v}}{\varphi}) \lor ( \bar{v} \geq v \land \wht(\bar{v})))$.
\end{itemize}
Finally, for a given $\LTLFMajP$ formula $\varphi$, let $\fottr{}{\varphi}$ stand for $\psishadowyfo \land \exists x. (\first(x) \land \fottr{x}{\varphi})$.

The following lemma shows the correctness of the presented translation.

\begin{lemma} \label{lem:transfo2}
An $\LTLFMajP$ formula  $\varphi$ has a \kl{shadowy} model if and only if $\fottr{}{\varphi}$ has a~model.
\end{lemma}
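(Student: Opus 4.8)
The plan is to derive the lemma from a stronger, more uniform claim proved by structural induction on $\varphi$: \emph{for every \kl{shadowy} word $\aword$, every position $p$ of $\aword$, and each choice of the parameter variable $v \in \{x,y\}$, we have $\aword, p \models \varphi$ (in the $\LTLFMajP$ semantics) if and only if $\aword$ satisfies $\fottr{v}{\varphi}$ under the assignment sending $v$ to $p$.} A side induction shows that $\fottr{v}{\varphi}$ has $v$ as its only free variable, so the claim is well-posed; moreover the clauses for $\F{}$ and for $\MajPname{}$ feed their subformulae to the induction hypothesis \emph{with the other variable $\bar v$}, which is precisely why the claim must be stated for both values of $v$. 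Granting it, the lemma is immediate. If $\aword$ is a \kl{shadowy} model of $\varphi$, then $\aword \models \psishadowyfo$ by \cref{lemma:fo2-shadowy}, and the claim applied at $p = 0$, together with the fact that $\first(x)$ holds exactly at position $0$, yields $\aword \models \fottr{}{\varphi}$. Conversely, any model of $\fottr{}{\varphi}$ is \kl{shadowy} by \cref{lemma:fo2-shadowy}, the witness for $\exists x.(\first(x) \wedge \dots)$ is forced to be $0$, and the claim gives $\aword, 0 \models \varphi$.

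In the induction, the cases $a$, $\neg$, and $\wedge$ are immediate from the definitions. The case $\F{}$ is the standard translation argument of~\cite{EtessamiVW02}: $\fottr{v}{\F\varphi}$ asserts the existence of a position $q$ with $p \leq q$ at which $\fottr{\bar v}{\varphi}$ holds, which by the induction hypothesis (used with the variable $\bar v$) is $\aword, q \models \varphi$; since first-order quantifiers on a word range over exactly $\{0, \dots, |\aword|-1\}$, the side condition $|\aword| > q$ in the semantics of $\F{}$ is automatic.

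The one genuinely new case is $\MajPname{}$, and it is where \kl{shadowness} enters. Write $|\aword| = 2m$. Evaluated at $p$, the formula $\fottr{v}{\MajP\varphi}$ asserts $\Maj{\bar v}$ of $(\bar v < p \wedge \fottr{\bar v}{\varphi}) \vee (\bar v \geq p \wedge \wht(\bar v))$; the two disjuncts hold on disjoint sets of positions, so by the induction hypothesis the formula holds iff $N + W_p \geq m$, where $N = |\{\, j < p : \aword, j \models \varphi \,\}|$ and $W_p$ is the number of white positions $\geq p$. Since in a \kl{shadowy} word the white positions are exactly the even positions among $0, \dots, 2m-1$, a direct count gives $W_p = m - \lceil p/2 \rceil$ whatever the parity of $p$. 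Hence $N + W_p \geq m$ iff $N \geq \lceil p/2 \rceil$ iff $N \geq p/2$ (the last equivalence because $N$ is an integer), and the right-hand side is exactly the semantics of $\aword, p \models \MajP\varphi$. I expect this arithmetic step — in particular checking that the ceiling produced by counting white positions lines up with the ``$\geq \tfrac p2$'' threshold of the $\MajPname{}$ semantics at odd $p$ — to be the only delicate point; everything else is bookkeeping about free variables and the range of quantifiers on finite words.
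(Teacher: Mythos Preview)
Your proposal is correct and follows essentially the same approach as the paper: a structural induction reducing to the standard Etessami--Vardi--Wilke translation for the $\LTLF$ connectives, with the only substantive case being $\MajPname{}$, handled by the same arithmetic identity (the paper writes the number of whites not before $p$ as $\lfloor (|\aword|-p)/2 \rfloor$ while you write it as $m - \lceil p/2 \rceil$, but these coincide on \kl{shadowy} words). Your write-up is in fact more explicit than the paper's about the free-variable bookkeeping and the parity case split, but the underlying argument is identical.
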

\begin{proof}

The correctness of the translation can be shown by a induction employing the correctness of the translation from $\LTLF{}$ to $\FOt[<]$ from~\cite{EtessamiVW02}.
The only non-classical part here is the correctness of the last presented rule for the operator $\MajPname{}$. 
To do it, it suffices employ the following observation.
Consider a word~$\aword$, its position~$p$ and a formula $\varphi$.
Assume that there are $k$ positions before $p$ satisfying~$\varphi$.
Observe that~$k \geq \frac{p}{2}$ if and only if $k + \lfloor \frac{|\aword| - p}{2} \rfloor \geq \frac{|\aword|}{2}$. 
Indeed, if $p$ is even, then the above can be obtained by adding $\frac{|\aword| - p}{2}$ to both sides. 
Otherwise, $p$ is odd, and by adding $\frac{|\aword| - p}{2}$ to both sides we obtain~$k + \lfloor \frac{|\aword| - p}{2} \rfloor \geq \frac{p}{2} + \lfloor \frac{|\aword| - p}{2}\rfloor = \frac{|\aword|}{2} - \frac{1}{2}$. 
Since the left-hand side is a natural number and the right-hand side is not, we can round the latter up and obtain the required inequality.
Observe that $\lfloor \frac{|\aword| - p}{2} \rfloor$  is exactly the number of white positions that are not before $p$. 
Thus,~$k$ is at least $\frac{p}{2}$ iff $k$ plus the number of white positions that are not before $p$ is greater than or equal to $\frac{\aword}{2}$.
And that is exactly what is written as an $\FOt$ formula in the translation of $\MajPname \varphi$.
\end{proof}

Since the formulae used in our undecidability proof for $\LTLFMajP$ have only \kl{shadowy} models, by Lemma~\ref{lem:transfo2} we immediately conclude that $\FOtMaj[<]$ is also undecidable.

\begin{thm}
The satisfiability problem for $\FOtMaj[<]$ is undecidable.
\end{thm}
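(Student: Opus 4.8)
The plan is to obtain the theorem as an essentially immediate consequence of the undecidability reduction for $\LTLFMajP$ together with the translation of Lemma~\ref{lem:transfo2}. First I would recall that $\Halfname$ is definable from $\MajPname$ (namely $\Half{\varphi}$ is equivalent to $\MajP{\varphi} \wedge \MajP{(\neg\varphi)}$), so the formula $\psiminskyq$ constructed in Section~\ref{subsec:halting} may be viewed as an $\LTLFMajP$ formula. The key feature I would emphasise is that every model of $\psiminskyq$ is truly~$\SigmaMinsky$-shadowy, hence in particular \kl{shadowy}; and by the correctness of the reduction (Lemma~\ref{lemma:c4} and Fact~\ref{fact:c5}), $\psiminskyq$ is satisfiable if and only if $\minsky$ reaches $\fancyq$.

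Next I would apply Lemma~\ref{lem:transfo2} with $\varphi = \psiminskyq$. Since all models of $\psiminskyq$ are \kl{shadowy}, $\psiminskyq$ is satisfiable if and only if it has a \kl{shadowy} model, which by the lemma holds if and only if $\fottr{}{\psiminskyq}$ has a model. The map $\varphi \mapsto \fottr{}{\varphi}$ is manifestly computable (it is defined by a straightforward structural recursion), so $(\minsky,\fancyq) \mapsto \fottr{}{\psiminskyq}$ is a computable reduction from the reachability problem for \kl{Minsky machines} to the satisfiability problem for $\FOtMaj[<]$. Undecidability of the former then yields undecidability of the latter.

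I do not expect a genuine obstacle here: the real work has already been carried out in Sections~\ref{subsec:playingwithmaj}--\ref{sec:majority} and in Lemma~\ref{lem:transfo2}. The one point that deserves an explicit remark — and that I would highlight in the proof — is precisely \emph{why} the partial translation of Lemma~\ref{lem:transfo2}, which is faithful only on the \kl{shadowy} fragment, already suffices: it is exactly because the $\LTLFMajP$ reduction was engineered so that its satisfiable instances have only \kl{shadowy} (indeed truly~$\SigmaMinsky$-shadowy) models, so no separate handling of non-shadowy or odd-length words is required. Finally, I would note that the same reduction, composed with the standard encoding of Kripke structures as formulae, also gives undecidability of the model-checking problem for $\FOtMaj[<]$, in parallel with Section~\ref{subsec:model-checking}.
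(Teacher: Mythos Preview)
Your proposal is correct and matches the paper's own argument essentially verbatim: the paper likewise observes that the formulae used in the $\LTLFMajP$ undecidability proof have only \kl{shadowy} models and then invokes Lemma~\ref{lem:transfo2} to conclude. Your additional remarks (computability of the translation, the model-checking corollary) are fine elaborations but not required by the paper.
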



\section{Conclusions} \label{sec:conclusions}
We have provided a simple proof showing that adding different percentage operators to $\LTLp{\F}$ yields undecidability. 
We showed that our technique can be applied to an extension of first-order logic on words, and we hope that our work will turn useful in showing undecidability for other extensions of temporal logics.
Decidability results for logics with percentage operators in restricted contexts were also provided.


\section*{Acknowledgements}
  Bartosz Bednarczyk was supported by the Polish Ministry of Science and Higher Education program ``Diamentowy Grant'' no.~DI2017~006447. 
  Jakub Michaliszyn was supported by NCN grant no. 2017/27/B/ST6/00299.

\bibliographystyle{plain}
\bibliography{bibliography}

\end{document}